\documentclass[a4paper,11pt]{article}

\addtolength{\textwidth}{2.4cm}
\addtolength{\hoffset}{-1.2cm}
\addtolength{\textheight}{2.8cm}
\addtolength{\voffset}{-2cm}

\usepackage[utf8]{inputenc}

%
\usepackage{amsthm}
\usepackage{amsmath}
\usepackage{amssymb}
\usepackage{amsfonts}
\usepackage{mathrsfs}
\usepackage{verbatim}
\usepackage{xspace}

\newtheorem{theorem}{Theorem}[section]
\newtheorem{lemma}[theorem]{Lemma}

\newtheorem{definition}[theorem]{Definition}

\newtheorem{claim}[theorem]{Claim}

\newtheorem{obsv}[theorem]{Observation}

\newcommand{\Pm}[1]{\ensuremath{{\normalfont\textsf{P}}_{#1}}}
\newcommand{\Pmd}[2]{\ensuremath{{\normalfont\textsf{P}}_{#1,#2}}}

\newcommand{\cclass}[1]{{\normalfont\textsf{#1}}\xspace}
\newcommand{\cproblem}[1]{{\normalfont\textsc{#1}}\xspace}
\newcommand{\np}{\cclass{NP}}

\newcommand{\E}{\mathop{\mathbf{E}}}


\newcommand{\Opt}{\mathop{\mathrm{Opt}}\nolimits}

\newcommand{\F}[1]{\mathbb{F}_{#1}}

\newcommand{\N}{{\mathbb{N}}}
\newcommand{\R}{{\mathbb{R}}}

\newcommand{\bool}{ \left\{ 0,1 \right\} }
\newcommand{\boolean}{ \left\{ -1,1 \right\} }

\newcommand{\poly}{\mathop{\mathrm{poly}}}
\newcommand{\polylog}{\mathop{\mathrm{polylog}}}
\newcommand{\rank}{\mathop{\mathrm{rank}}}

\newcommand{\lindim}{\mathop{\mathrm{dim}}}
\newcommand{\codim}{\mathop{\mathrm{codim}}}

\newcommand{\LC}{\cproblem{Label-Cover}}

\newcommand{\LGCD}{\cproblem{Long-Code}}
\newcommand{\LDLC}{\cproblem{Low-Degree-Long-Code}}
\newcommand{\QDCD}{\cproblem{Quadratic-Code}}

\newcommand{\tsa}{\textsc{TSA}\xspace}

\newcommand{\nae}[1]{\textsc{NotAllEqual}$_#1$}

\newcommand{\ksat}[1]{\textsc{E}$#1$-\textsc{Sat}}
\newcommand{\csp}{\textsc{CSP}\xspace}

\newcommand{\weight}{{\mathrm{wt}}}
\newcommand{\supp}{{\mathrm{supp}}}

\begin{document}
\title{$2^{(\log N)^{1/10-o(1)}}$ Hardness for Hypergraph Coloring}
\author{Sangxia Huang
\thanks{\texttt{huang.sangxia@gmail.com} ~
EPFL. Research partially supported by ERC Advanced Investigator Grant 226203 and Swedish
Research Council.}}
\maketitle

\begin{abstract}
  We show that it is quasi-\np-hard to color $2$-colorable $8$-uniform
  hypergraphs with $2^{(\log N)^{1/10-o(1)}}$ colors, where $N$ is the number
  of vertices. There has been much focus on hardness of hypergraph coloring
  recently. In \cite{shortcodecol}, Guruswami, H{\aa}stad, Harsha, Srinivasan
  and Varma showed that it is quasi-\np-hard to color $2$-colorable $8$-uniform
  hypergraphs with $2^{2^{\Omega(\sqrt{\log\log N})}}$ colors.
  Their result is obtained by composing standard \LC with an inner-verifier
  based on \LDLC, using 
  Reed-Muller code testing results by Dinur and Guruswami \cite{dinurguruswamishortcode}.
  Using a different approach in \cite{kssp}, Khot and Saket constructed a new
  variant of \LC, and composed it with \QDCD to show quasi-\np-hardness
  of coloring $2$-colorable $12$-uniform hypergraphs with $2^{(\log N)^c}$ colors,
  for some $c$ around $1/20$. Their construction of \LC is based on a new notion
  of \emph{superposition complexity} for \csp instances.
  The composition with inner-verifier was subsequently improved by Varma, giving the same hardness result
  for $8$-uniform hypergraphs \cite{varmasp}.

  Our construction uses both \QDCD and \LDLC,
  and builds upon the work by Khot and Saket. We present a different approach
  to construct \csp instances with superposition hardness by observing that when the number
  of assignments is odd, satisfying a constraint in superposition is the same as \emph{odd-covering}
  a constraint.
  We employ \LDLC in order to keep the construction
  efficient. In the analysis, we also adapt and generalize one of the key theorems
  by Dinur and Guruswami \cite{dinurguruswamishortcode} in the context of
  analyzing probabilistically checkable proof systems.
\end{abstract}

\section{Introduction}\label{sec:intro}
For an integer $k \ge 2$, a $k$-uniform
hypergraph $H=(V,F)$ consists of vertex set $V$ and edge set
$F \subseteq \binom{V}{k}$. 
A set of vertices $S \subseteq V$ is an \emph{independent set} if
for all $f \in F$, $f \not\subseteq S$, i.e., no edge is completely
inside $S$.
A hypergraph is $q$-colorable if its vertices can be partitioned
into $q$ disjoint independent sets.


We use $\alpha(H)$ to denote the fractional size
of the maximum cardinality independent set of $H$,
also known as the \emph{fractional independence number},
and we use $\chi(H)$ to denote the minimum $q$ such that
$H$ is $q$-colorable. It is easy to verify that we have $\chi(H)\alpha(H) \ge 1$
for any $H$.

In the ordinary graph case, corresponding to $k=2$, 
deciding whether a graph $G$ has a $2$-coloring is the same
as deciding whether it is a bipartite graph, and can be easily solved in polynomial time. 
In general, however,
determining the chromatic number of a graph exactly is \np-hard \cite{gareyjohnson}.
In fact, even coloring 3-colorable graphs with 4 colors is \np-hard.
For general $q$-colorable graphs,
it is \np-hard to color with $q+2\lfloor \frac{q}{3} \rfloor-1$ colors
\cite{hard3,hard5}.
For sufficiently large $q$, it was shown that it is \np-hard to
color a $q$-colorable graph with $2^{\Omega(q^{1/3})}$ colors \cite{cubecolor},
improving on an earlier lower-bound of $q^{\frac{1}{25} \log q}$ 
by Khot \cite{khotcoloring}.
Assuming a variant of Khot's 2-to-1 Conjecture, Dinur, Mossel and Regev \cite{dmrcoloring} proved
that it is \np-hard to $q'$-color a $q$-colorable graph for any $3 \le q < q'$.
The dependency between the hardness of graph coloring and the parameters of 2-to-1 \LC
was made explicit and improved by Dinur and Shinkar \cite{improved2to1coloring},
who showed that it is \np-hard to $(\log n)^c$-color a $4$-colorable graph
for some constant $c>0$ assuming the 2-to-1 Conjecture.
As for algorithms, there have been many results as well
\cite{avicoloring,bergerrompelcoloring,kargermotwanisudan,blumkarger}.
For 3-colorable graphs, the best algorithm is by Kawarabayashi and Thorup \cite{thorup3col}
which uses $O(n^{0.19996})$ colors, based on results by Arora and Chlamtac \cite{arorachlamtaccoloring},
Chlamtac \cite{hierarchycoloring} and the earlier work of Kawarabayashi and Thorup \cite{focs12coloring}. 
As we can see, there is still a huge gap between the best approximation guarantee
and the best hardness result.

For $k \ge 3$, even determining
whether a $k$-uniform hypergraph has a $2$-coloring is \np-hard.
In terms of approximation algorithms, the best algorithm 
for $2$-colorable $3$-uniform hypergraphs still requires $n^{\Omega(1)}$ colors
\cite{knscoloring,akmhcoloring,cfcoloring}. 

From the hardness side,
the first super-constant hardness result was proved in
\cite{guruswami4uniformhyper}. The main result there is that for $2$-colorable $4$-uniform
hypergraphs, finding a coloring with any constant number
of colors is \np-hard, and finding a coloring with $O(\log\log n/\log\log\log n)$
colors is quasi-\np-hard.
For $2$-colorable $3$-uniform hypergraphs, a similar hardness result was proved in \cite{drs05}.
Khot \cite{khotsmooth} proved that coloring $3$-colorable $3$-uniform
hypergraphs with any constant number of colors is hard, and 
for $q$-colorable $4$-uniform hypergraphs, coloring with $(\log n)^{\Omega(q)}$
colors is quasi-\np-hard for $q \ge 7$.
The analysis in \cite{guruswami4uniformhyper} was improved by Holmerin,
who proved that even finding an independent set of fractional size
$\Omega(\log\log\log n/\log\log n)$ is quasi-\np-hard \cite{hol4uni}.
The construction was further improved recently by Saket \cite{saketpolylog},
who proved that it is
quasi-\np-hard to find independent set of size $n/(\log n)^{\Omega(1)}$
in $2$-colorable $4$-uniform hypergraphs \cite{saketpolylog}.
There has also been work on the hardness of finding independent sets
in almost $2$-colorable hypergraphs --- hypergraphs that becomes $2$-colorable
after removing a small fraction of vertices. Much stronger result
is known, albeit at the cost of imperfect completeness. We refer
to \cite{kshyperalmost} for more details.

Recently, in \cite{shortcodecol}, Guruswami, Harsha, H{\aa}stad, Srinivasan
and Varma proved the first super-polylogarithmic hardness result for hypergraph coloring,
showing hardness for coloring $2$-colorable $8$-uniform
hypergraphs with $2^{2^{\Omega(\sqrt{\log\log n})}}$ colors.
Their reduction uses the \LDLC proposed in \cite{shortcode},
based on techniques for testing Reed-Muller codes developed in \cite{dinurguruswamishortcode}.

Using a very different approach, Khot and Saket gave another exponential improvement
in \cite{kssp}, showing a quasi-\np-hardness for coloring $2$-colorable $12$-uniform
hypergraphs with $\exp( (\log n)^{\Omega(1)})$ colors, where the constant in $\Omega(1)$
is around $1/20$, although it might be improved with a more careful analysis of their reduction.
Part of their analysis was subsequently
simplified by Varma in \cite{varmasp} using ideas from \cite{shortcodecol}. 

In this work, we give another improvement for hardness of hypergraph coloring.
Our main result is as follows.
\begin{theorem}
  It is quasi-\np-hard to color a $2$-colorable
  $8$-uniform hypergraph of size $N$ with $2^{(\log N)^{1/10-o(1)}}$
  colors.
\end{theorem}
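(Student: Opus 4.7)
The plan is to prove this via a PCP-style reduction, using an outer verifier with superposition hardness composed with an inner verifier based on a mix of \QDCD and \LDLC. The outer object is a smooth multi-layered Label-Cover (\SMLLC) instance for which we establish an enhanced hardness statement: for some fixed odd integer $t$, no collection of up to $L$ labels per vertex can \emph{odd-cover} a large fraction of the constraints, meaning that at each constraint no common projection contains an odd number of satisfying label-tuples. As the paper's central observation points out, for odd $t$ this odd-covering requirement is exactly satisfaction in superposition over $\FF{2}$, so we may carry out the Khot--Saket construction while working directly with the parity formulation, which is cleaner and easier to compose.

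For the inner verifier I would replace each variable of the outer \SMLLC by its low-degree long code rather than its full long code, so that the overall instance size stays quasi-polynomial. An $8$-uniform hyperedge is produced by sampling an outer constraint, a $t$-tuple of outer labels, and $8$ queries to the \LDLC tables at the two sides of the edge, arranged so that monochromaticity of the $8$ vertices forces the corresponding functions to exhibit a parity pattern that encodes odd-covering of the outer constraint at the projected labels, together with \QDCD-style linearity and consistency checks. The constant $8$ comes from choosing the smallest odd $t$ for which this parity and consistency test fits inside $8$ queries.

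The soundness analysis is where the real work lies. Given a coloring of the hypergraph into $Q = 2^{(\log N)^{1/10-o(1)}}$ colors, I would take a color class of density $\Omega(1/Q)$ and regard it as a family of $\bool$-valued functions on the \LDLC tables. The key step is a generalization of the Dinur--Guruswami Reed--Muller testing theorem, strong enough to detect the odd-covering parity rather than plain agreement: from the independence of this color class one extracts, at each vertex of the outer \SMLLC, a short list of low-degree polynomials (candidate labels) such that no sparse sub-family of list-tuples can odd-cover too many constraints. Smoothness of the multi-layered structure is then invoked to ensure that the projection maps align these candidate lists across layers, so that an odd-cover at the long-code level translates faithfully into an odd-cover at the \SMLLC level, contradicting the outer hardness and thus lower-bounding $Q$.

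The main obstacle I anticipate is the generalized low-degree testing theorem, which must handle the parity condition and must furthermore remain robust enough that the list-decoded labels can be threaded through every layer of the smooth multi-layered Label-Cover. A secondary challenge is balancing parameters: the exponent $1/10$ should emerge from simultaneously optimizing the number of layers, the degree and field size of the \LDLC, the list size produced by low-degree list decoding, and the soundness loss in the outer \SMLLC, while keeping the reduction of quasi-polynomial size so that the conclusion is quasi-\np-hardness. Once these parameters are tuned, the resulting $N$-vertex $8$-uniform hypergraph admits the claimed $2^{(\log N)^{1/10-o(1)}}$ gap.
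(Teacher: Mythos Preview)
Your proposal misidentifies the architecture of the reduction, and the resulting plan would not reach the $2^{(\log N)^{1/10-o(1)}}$ bound. The paper does \emph{not} use a smooth multi-layered \LC as the outer object, nor does it use \LDLC as the encoding in the final inner verifier. Instead the pipeline is: (i) prove superposition hardness for the \tsa predicate by composing ordinary \LC with a \LDLC inner verifier (this is where the generalized Dinur--Guruswami theorem is used, and where the odd-covering observation enters); (ii) feed the resulting system of quadratic equations into a Point-versus-Ruled-Surface low-degree test, following Khot--Saket, to obtain a smooth \LC whose labels are \emph{matrices} and whose soundness is phrased in terms of low-rank pseudo-quadratic labelings; (iii) compose this matrix-labeled \LC with a \QDCD (Hadamard-on-matrices) inner verifier to produce the $8$-uniform hypergraph. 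The $8$ queries are the eight points of the \QDCD test in step (iii); they have nothing to do with ``the smallest odd $t$''.

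The reason your \SMLLC-plus-\LDLC plan falls short is quantitative. The efficiency gain over \cite{shortcodecol} comes from using the \QDCD encoding at the last step, which has size $2^{O(m^2)}$ for labels in $\F{2}^m$; but \QDCD only supports linear tests, so the outer \LC must already carry matrix labels with low-rank (equivalently, superposition) soundness. Producing such an outer \LC is exactly the content of the Point-versus-Ruled-Surface construction in Section~4, which your proposal omits entirely. If you instead encode \SMLLC labels directly with \LDLC, the table size is governed by the degree parameter and you are essentially back to the \cite{shortcodecol} regime, regardless of the odd-covering reformulation. Likewise, your placement of the generalized Reed--Muller testing theorem in the soundness analysis of the final hypergraph is off: in the paper it is used one level up, to establish the \tsa superposition hardness (Theorem~\ref{thm:octsa}), not to decode color classes of the hypergraph.
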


\subsection{Proof Overview}
We start by describing the PCP reduction of proving hypergraph
coloring hardness used in many previous works.
Most of these results show hardness of finding an independent
set of large fractional size. We can view the output of these 
reductions as \nae{k} \ \csp instances.
The variables correspond to the vertices
of a hypergraph, and the \nae{k} constraints correspond to the
hyperedges. Note that for hypergraph coloring results,
all variables appear positively in such instances and no negations are allowed.
An assignment that satisfies all the \nae{k}
constraints thus gives a perfect $2$-coloring for the hypergraph.
In the other direction, a set of vertices in the hypergraph
naturally corresponds to a $\bool$ assignment to the variables
in the \nae{k} instance, and the vertices form an independent
set if for all constraints in the \nae{k} instances, there is
at least $1$ variable that is assigned $0$.

The starting point of the reduction is usually some \LC 
hardness. We then encode the supposed labeling for
the \LC instance with some coding scheme, 
and design a PCP to test the consistency of the labeling.

One classical choice of encoding is the \LGCD, which
encodes $m$ bits of information with $2^{2^m}$ bits. 
This huge blowup makes it impossible to prove hardness results
better than $\polylog n$ via the \LC plus \LGCD approach.

A much more efficient encoding is the Hadamard code, which only uses
$2^m$ bits to encode $m$ bits of information. 
However, the disadvantage of the Hadamard code is that
one can only enforce linear constraints on the codewords, 
which means that we can only start from hard
problems involving only linear constraints,
and as a result, we lose perfect completeness and can only prove results
about almost coloring. 

The \LDLC proposed in \cite{shortcode} lies somewhere between
\LGCD and Hadamard code. We can view Hadamard code
as encoding $m$ bits by writing down the evaluation of all $m$-variable
functions of degree at most $1$ on these $m$ bits, and \LGCD as writing down the evaluation
of all possible $m$-variable functions --- that is, degree up to $m$ --- on these $m$ bits. 
\LDLC has a parameter
$d$, the degree, and the encoding writes down the evaluation of all polynomials
of degree at most $d$.
Dinur and Guruswami \cite{dinurguruswamishortcode}
obtained hardness result for a variant of hypergraph coloring based on \LDLC,
and the techniques were soon adapted in \cite{shortcodecol}
to get a hardness result of $2^{2^{\Omega(\sqrt{\log\log n})}}$.

The aforementioned result by Khot and Saket \cite{kssp}
uses \QDCD, which is the same as \LDLC with $d=2$. 
Their construction, however, is completely different from that
in \cite{shortcodecol}.

One can view the \QDCD used in \cite{kssp}
as the Hadamard encoding of matrix $M$ that is symmetric
and has rank $1$, that is, there exists some $u \in \F{2}^m$ such that
$M=u \otimes u$.
Khot and Saket described a $6$-query test such that if some encoding function
$f:\F{2}^{m \times m} \to \F{2}$ passes the test with non-trivial probability, 
then we can decode it into a low rank matrix. 

In order to use this encoding, it seems natural that one would like to construct
some variant of \LC where the labels are now matrices, with some
linear constraints on the entries of the matrices (since as discussed above
we are using Hadamard code to encode the matrices). In the completeness case,
we would like to have
some matrix labelings of rank $1$ that satisfies all linear constraints
on the vertices as well as projection constraints on the edges, and in the
soundness case, not even labelings with low rank matrices can satisfy more than
a small fraction of them.

Such \LC hardness result is not readily available.
Khot and Saket proposed the notion of \emph{superposition complexity} for quadratic
equations. Briefly speaking, let
$q(x)=c+\sum_{i=1}^{m} c_i x_i + \sum_{1 \le i < j \le m} c_{ij} x_i x_j=0$
be a quadratic equation on $m$ $\F{2}$-variables.
We say that $t$ assignments $a^{(1)},\cdots,a^{(t)} \in \F{2}^m$
satisfy the equation $q(x)=0$ \emph{in superposition} if
\[
c + \sum_{i=1}^{m} c_i \left( \sum_{l=1}^{t} a^{(l)}_i \right)
+ \sum_{1 \le i < j \le m} c_{ij} \left( \sum_{l=1}^{t} a^{(l)}_i a^{(l)}_j \right)
=0\,.
\]
If we have a system of quadratic equations, then we say that $t$ assignments
satisfy the system of quadratic equations in superposition if each
quadratic equation is satisfied in superposition.
Having a small number of assignments satisfying quadratic constraints in superposition
is exactly the same as having a symmetric low-rank matrix satisfying 
the linearized version of the constraints, as we discuss in more detail
in Section \ref{sec:oddcovprelim}.

Through a remarkable chain of reductions, Khot and Saket established
the inapproximability of quadratic equations with superposition complexity,
as well as the actual construction of the \LC with matrix labels.
They started with superposition hardness for \ksat{3} with gap of $1/n$, and use
low-degree testing and sum-check protocol like in the original proof
of the PCP theorem \cite{arorapcp1,arorapcp2}
to achieve a superposition hardness result for systems of quadratic equations
with good soundness and moderate blowup in size. This is then followed by a
\emph{Point versus Ruled Surface} test which produces the actual \LC instance.

The main contribution of this work is a much simpler and more efficient construction
of systems of quadratic equations with superposition and approximation hardness.
This is then coupled with a slight variant of the \emph{Point versus Ruled Surface}
test used by Khot and Saket to obtain hardness for \LC with matrix labels.

Let $t$ be some odd natural number. A set of $t$ assignments \emph{odd-covers} an equation
(or more generally, a constraint) if the number of assignments that satisfy
the equation is odd. We show in Section \ref{sec:oddcovprelim} that
the notion of odd-covering is equivalent to satisfaction
in superposition when the number of assignments is odd.
This viewpoint enables us to construct the kind of \LC instance
used in \cite{kssp} very easily.
In fact, the reduction in Section \ref{sec:octsa} looks very much like a classical
\csp inapproximability proof.

Although simpler, the above observation alone is not sufficient to give us
a hardness result better than \cite{kssp}. The issue here is that for the reduction
in Section \ref{sec:octsa} to work for our choice of parameters, the soundness of
the \LC that we start with needs to be sub-constant, and a typical \LGCD reduction
will again blow up the size of the instance by too much. Hence, for this step, we employ
\LDLC. Our technical contribution here is Theorem \ref{thm:lowdegreebias},
a generalization of the Reed-Muller code testing result of \cite{dinurguruswamishortcode}.

%
%

\section{Preliminaries}\label{sec:oddcovprelim}
Before we discuss the relation between superposition, odd-covering
and low rank matrices, we define an operation on vectors and matrices
that we will use frequently.
\begin{definition}
  Define $D_1:\F{2}^{m+1} \to \F{2}^m$ as the operator that removes
  the first coordinate of a vector. Define $D_1$ similarly for matrices
  as the operator that removes the first row and column of a given matrix.
\end{definition}

\subsection{Superposition and Odd-Covering}
Khot and Saket \cite{kssp}
defined the notion of satisfying in superposition as follows.
\begin{definition}[Superposition]
  Let $a^{(1)},\cdots,a^{(t)} \in \F{2}^m$ be $t$ assignments and $q(x)=0$
  be a quadratic equation in $m$ $\F{2}$-variables with
  \[ q(x)=c+\sum_{i=1}^{m} c_i x_i + \sum_{1 \le i < j \le m} c_{ij} x_i x_j\,.  \]
  We say that the $t$ assignments satisfy the equation $q(x)=0$ 
  \emph{in superposition} if
  \[
  c + \sum_{i=1}^{m} c_i \left( \sum_{l=1}^{t} a^{(l)}_i \right)
  + \sum_{1 \le i < j \le m} c_{ij} \left( \sum_{l=1}^{t} a^{(l)}_i a^{(l)}_j \right)
  =0\,.
  \]
\end{definition}
\begin{definition}
  Given a system of quadratic equations $\{q_i(x)=0\}_{i=1}^{L}$ on variables
  $x_1,\cdots,x_m$, its \emph{superposition complexity} is the minimum number
  $t$, if it exists, such that there are $t$ assignments
  $a^{(1)},\cdots,a^{(t)} \in \F{2}^m$ that satisfy each equation $q_i(x)=0$
  in superposition.

  We define the \emph{odd superposition complexity} (or \emph{even superposition complexity})
  to be the minimum odd integer $t$ (or even integer $t$, respectively) such that
  there are $t$ assignments that satisfy all equations in superposition.
\end{definition}
Note that by simply adding all $0$ assignments, we can argue that the above three
notions of superposition complexity differ by at most $1$.

We now explain the relation between superposition complexity of quadratic
equations and low rank matrices. Assume for simplicity of exposition
that the quadratic equation $q(x)=0$ as defined above is homogeneous, 
that is, the constant term $c$ and the linear coefficients $c_i$ are all $0$. 

We can express a homogeneous quadratic equation $q(x)=0$ with a matrix by
defining $C \in \F{2}^{m \times m}$,
where $C_{ij}=c_{ij}$ for $1 \le i<j \le m$, and $C_{ij}=0$ otherwise.
Let $x=(x_1 \  x_2 \  \cdots \  x_m)$. Then $q(x)=0$ is the same as 
$\langle C, x \otimes x \rangle=x^T C x=0$, where $\langle \cdot,\cdot \rangle$
denotes the entry-wise dot product of two matrices.
Note that $x \otimes x$ is a symmetric rank-$1$ matrix. 

Suppose now that we have a symmetric matrix $A$ such that $\langle C,A \rangle=0$.
For a fixed $C$, this is a linear constraint on the entries of $A$.
If in addition $A$ has rank $1$, then there exists $x_a$, such that $A=x_a \otimes x_a$,
and by the above, we have that $x_a$ satisfies $q(x_a)=0$. Therefore, if
$A$ is a symmetric rank $1$ matrix and $\langle C,A \rangle=0$, then $A$
encodes an assignment that satisfies the quadratic equation $q(x)=0$.

The following decomposition lemma from \cite{kssp} illustrates the situation
when $A$ has low rank.
\begin{lemma}\label{lem:ksdecomp}
  Let $A \in \F{2}^{m \times m}$ be a symmetric matrix of rank $k$ over $\F{2}$.
  Then there exists $l \le 3k/2$ and vectors $v_1,\cdots,v_l$ in the column space of $A$,
  such that $A=\sum_{i=1}^{l} v_i \otimes v_i$.
\end{lemma}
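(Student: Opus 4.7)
The plan is to proceed by induction on the rank $k$. The base case $k=0$ gives $A=0$ and there is nothing to do, so I assume $k\ge 1$ and treat separately the two situations dictated by characteristic $2$: either some diagonal entry of $A$ is nonzero, or $A$ has an entirely zero diagonal (an alternating form). The factor $3/2$ in the statement is forced by the second case, which is where the real work lies.

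In the first situation I would pick $i$ with $A_{ii}=1$ and set $v:=Ae_i$, which lies in the column space of $A$ and satisfies $v_i=1$. Consider $A' := A + v\otimes v$. A direct computation shows that the $i$-th row of $A'$ vanishes, since $(A')_{i\ell}=A_{i\ell}+A_{ii}A_{\ell i}=0$, and symmetry takes care of the $i$-th column. Because $\mathrm{col}(A')\subseteq\mathrm{col}(A)$ but $v$ itself cannot lie in $\mathrm{col}(A')$ (every vector there has $i$-th coordinate zero), the rank of $A'$ drops to at most $k-1$. Applying the inductive hypothesis to $A'$ expresses $A$ as a sum of at most $1+\lfloor 3(k-1)/2\rfloor\le \lfloor 3k/2\rfloor$ outer products $w\otimes w$ with each $w\in\mathrm{col}(A')\subseteq\mathrm{col}(A)$.

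The more delicate situation is when the diagonal of $A$ is zero but $A\ne 0$, which forces $k\ge 2$. Pick $i\ne j$ with $A_{ij}=1$ and set $u:=Ae_i$, $v:=Ae_j$. Over $\F{2}$ one has the identity
\[
u\otimes u \;+\; v\otimes v \;+\; (u+v)\otimes(u+v) \;=\; u\otimes v + v\otimes u,
\]
so the symmetric rank-$2$ contribution $u\otimes v+v\otimes u$ can be peeled off at the cost of three rank-$1$ outer products, all of whose vectors lie in $\mathrm{col}(A)$. Let $A':=A+u\otimes v+v\otimes u$; a direct calculation using $A_{ii}=A_{jj}=0$ and $A_{ij}=1$ shows that its $i$-th and $j$-th rows (hence columns) vanish. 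To certify that the rank drops by exactly two, I would use the coordinate projection $\mathrm{col}(A)\to\F{2}^2$ given by $x\mapsto(x_i,x_j)$: the columns $u,v$ map to $(0,1)$ and $(1,0)$, so the map is surjective and its kernel, which contains $\mathrm{col}(A')$, has dimension $k-2$. The inductive hypothesis applied to $A'$ then yields at most $3(k-2)/2$ further terms, for a grand total of $3+3(k-2)/2=3k/2$.

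The main obstacle is precisely the alternating case: over $\F{2}$ one genuinely cannot peel off a single symmetric rank-$1$ piece $w\otimes w$ from an alternating block, since any $w\otimes w$ has $(w\otimes w)_{ii}=w_i$ and so must have a nonzero diagonal as soon as $w\ne 0$. Spotting the three-term identity above, and confirming via the coordinate projection that it uses up exactly two dimensions of the column space, is the heart of the argument and the sole source of the $3/2$ factor in the bound.
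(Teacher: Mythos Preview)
Your proof is correct. The paper does not actually prove this lemma; it is quoted without proof from \cite{kssp}. Your inductive argument---peeling off $v\otimes v$ when a diagonal entry is present, and using the identity $u\otimes u+v\otimes v+(u+v)\otimes(u+v)=u\otimes v+v\otimes u$ to peel off a hyperbolic pair when the diagonal vanishes---is the standard one and matches what the Khot--Saket paper does. Indeed, later in the present paper (inside the proof of Theorem~\ref{thm:lcmatrix}) the authors invoke the canonical form from \cite{ksspeccc}, namely
\[
A=\sum_{j=1}^{s} z_j\otimes z_j+\sum_{j=1}^{t}\bigl(z_{s+2j-1}\otimes z_{s+2j}+z_{s+2j}\otimes z_{s+2j-1}\bigr),
\]
with $z_1,\dots,z_{s+2t}$ linearly independent in the column space and $k=s+2t$; applying your three-term identity to each hyperbolic summand gives $l=s+3t\le 3k/2$ in one shot. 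Your inductive version is just the step-by-step form of this decomposition, and your rank-drop verifications (via the coordinate projection $x\mapsto(x_i,x_j)$ in the alternating case) are clean and correct.
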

Let $A$ be a low rank matrix such that $\langle C,A \rangle =0$ and
$v_1,\cdots,v_l$ be $l \le 3k/2$ assignments given by Lemma \ref{lem:ksdecomp}.
Then
\begin{align*}
  0=\langle C,A \rangle \ =&\  \sum_{t=1}^{l} \langle C,v_t \otimes v_t\rangle \\
  =&\ \sum_{t=1}^{l} \sum_{1 \le i < j \le m} c_{ij} v_{ti} v_{tj} \\
  =&\ \sum_{1 \le i < j \le m} c_{ij} \sum_{t=1}^{l} v_{ti} v_{tj} \,.
\end{align*}
Therefore we have that $v_1,\cdots,v_l$ satisfy $q(x)=0$ in superposition.

The notion we will now consider is the following, which we call 
\emph{odd-covering}.
\begin{definition}[Odd-covering]
  Let $a^{(1)},\cdots,a^{(t)} \in \F{2}^m$ be $t$ assignments and $q(x)=0$
  be a quadratic equation in $m$ $\F{2}$-variables as defined above.
  We say that the $t$ assignments \emph{odd-cover} the equation $q(x)=0$
  if the number of assignments $a^{(l)}$ that satisfies $q(a^{(l)})=0$
  is odd.
\end{definition}
The key observation is that odd-covering and satisfying in superposition
are equivalent when the number of assignments involved is odd.
\begin{lemma}
  Let $t$ be an odd integer and $a^{(1)},\cdots,a^{(t)} \in \F{2}^m$ be $t$ assignments,
  and $q(x)=0$ be a quadratic equation in $m$ $\F{2}$-variables as defined above.
  Then the $t$ assignments satisfy $q(x)=0$ in superposition
  if and only if the $t$ assignments odd-cover $q(x)=0$.
\end{lemma}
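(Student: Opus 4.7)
The plan is to reduce both sides of the claimed equivalence to the same simple quantity: the $\F{2}$-sum $\sum_{l=1}^{t} q(a^{(l)})$. Everything then follows from arithmetic mod $2$, using oddness of $t$ in exactly one place.

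First I would expand $\sum_{l=1}^{t} q(a^{(l)})$ term by term over $\F{2}$, swapping the order of summation:
\[
\sum_{l=1}^{t} q(a^{(l)}) \;=\; t c \;+\; \sum_{i=1}^{m} c_i \sum_{l=1}^{t} a^{(l)}_i \;+\; \sum_{1\le i<j\le m} c_{ij} \sum_{l=1}^{t} a^{(l)}_i a^{(l)}_j.
\]
Here is the one and only place the hypothesis is used: since $t$ is odd, $tc = c$ in $\F{2}$, so the right-hand side is exactly the expression appearing in the definition of satisfaction in superposition. Hence the superposition condition is simply
\[
\sum_{l=1}^{t} q(a^{(l)}) \;=\; 0 \quad \text{in } \F{2}.
\]

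Next I would translate the odd-covering condition into the same language. Let $s$ denote the number of indices $l$ with $q(a^{(l)}) = 0$, so the remaining $t - s$ indices give $q(a^{(l)}) = 1$. Then $\sum_{l=1}^{t} q(a^{(l)}) \equiv t - s \pmod{2}$. Because $t$ is odd, $t - s$ is even precisely when $s$ is odd. Thus $\sum_{l=1}^{t} q(a^{(l)}) = 0$ in $\F{2}$ if and only if $s$ is odd, which is exactly the odd-covering condition.

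Combining the two reductions gives the equivalence. There is no real obstacle here: the argument is a short mod-$2$ computation, and the odd-$t$ hypothesis enters only to ensure $tc = c$ (equivalently, that ``even number of failed assignments'' and ``odd number of satisfied assignments'' coincide). The same argument would work verbatim for constraints of arbitrary degree, since we never used the quadratic structure beyond writing $q$ out as a polynomial.
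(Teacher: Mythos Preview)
Your proof is correct and is essentially identical to the paper's: both compute $\sum_{l=1}^{t} q(a^{(l)})$, use oddness of $t$ to identify it with the superposition expression, and then observe that this sum vanishes in $\F{2}$ exactly when an even number of the $a^{(l)}$ fail the equation, i.e., when an odd number satisfy it. One small remark: you say the odd-$t$ hypothesis is used ``in one and only place,'' but you in fact use it a second time when deducing that $t-s$ is even iff $s$ is odd; this does not affect correctness.
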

\begin{proof}
  Using the fact that $t$ is odd, we have the following
  \begin{align*}
    \sum_{l=1}^{t} q(a^{(l)}) \ =&\ 
    \sum_{l=1}^{t} \left( c + \sum_{i=1}^{m} c_i a^{(l)}_i 
    + \sum_{1 \le i<j \le m} c_{ij} a^{(l)}_i a^{(l)}_j \right) \\
    =&\  t \cdot c + \sum_{l=1}^{t} \sum_{i=1}^{m} c_i a^{(l)}_i
    + \sum_{l=1}^{t} \sum_{1 \le i<j \le m} c_{ij} a^{(l)}_i a^{(l)}_j \\
    =&\  c + \sum_{i=1}^{m} c_i \left( \sum_{l=1}^{t} a^{(l)}_i \right)
    + \sum_{1 \le i<j \le m} c_{ij} \left( \sum_{l=1}^{t} a^{(l)}_i a^{(l)}_j \right)\,.
  \end{align*}
  Now observe that the $t$ assignments odd-cover $q(x)=0$ if and only if
  the number of assignments that does not satisfy $q(x)=0$ is even,
  which is equivalent to saying that the left hand side of the above equation is $0$,
  and that by definition means that the $t$ assignments satisfy $q(x)=0$ in superposition.
\end{proof}

In the description above, we assumed that the quadratic equation $q(x)=0$ is
homogeneous, which allows us to encode it with a matrix $C \in \F{2}^{m \times m}$
and express the whole equation as $\langle C,A\rangle=0$, where $A=x \otimes x$.
For quadratic equations that are not homogeneous, we encode them with
a $(m+1) \times (m+1)$ matrix. In particular, for $q(x)=c+\sum c_i x_i+\sum c_{ij} x_i x_j=0$,
we have matrix $C$, where $C_{11}=c$, $C_{1i}=c_{i-1}$ for $i=2,\cdots,m+1$,
and $C_{ij}=c_{i-1,j-1}$ for $2 \le i < j \le m+1$.
As for the variable vector, we add to $x$ an entry that is always $1$.
\begin{definition}
  Given a matrix $A \in \F{2}^{(m+1) \times (m+1)}$. We say that $A$ is
  \emph{pseudo-quadratic} if the following holds:
  \begin{itemize}
    \item $A$ is symmetric.
    \item $A_{1,1}=1$.
    \item For all $i=2,\cdots,m+1$, $A_{1,i}=A_{i,1}=A_{i,i}$.
  \end{itemize}
\end{definition}
Note that for vector $v \in \F{2}^{m+1}$ such that $v_1=1$, 
$v \otimes v$ is a pseudo-quadratic rank-$1$ matrix.

We prove a stronger form of Lemma \ref{lem:ksdecomp} for pseudo-quadratic matrices
where we decode a low rank pseudo-quadratic matrix into an odd number of assignments.
\begin{lemma}\label{lem:odddecomp}
  Let $A \in \F{2}^{(m+1) \times (m+1)}$ be a pseudo-quadratic matrix of rank $k$
  over $\F{2}$. Then there exists an odd integer $k_0 < 3k/2+1$, and 
  vectors $v_1,\cdots,v_{k_0} \in \F{2}^{m+1}$, such that for all $i \in [k_0]$,
  $v_{i,1}=1$, and $A=\sum_{i=1}^{k_0} v_{i} \otimes v_{i}$. Moreover, for all $i \in [k_0]$,
  $D_1(v_i)$ is in the column space of $D_1(A)$.
\end{lemma}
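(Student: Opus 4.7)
My plan is to reduce Lemma \ref{lem:odddecomp} to Lemma \ref{lem:ksdecomp} by peeling off the contribution of the first row and column of $A$. Set $a := A e_1$, the first column of $A$. The pseudo-quadratic axioms give $a_1 = 1$ and $a_i = A_{1,i} = A_{i,i}$ for $i \ge 2$, so $a = \mathrm{diag}(A)$. Consider the auxiliary matrix $A' := A + a \otimes a$; a short computation using the three pseudo-quadratic identities shows that the first row and first column of $A'$ vanish, and moreover that $B := D_1(A')$ has zero diagonal. Thus $A' = 0 \oplus B$ as an $(m+1) \times (m+1)$ matrix.

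Two structural observations drive the rest of the argument. First, $\mathrm{col}(A')$ is contained in the subspace of vectors with first coordinate $0$, while $a \in \mathrm{col}(A)$ has first coordinate $1$; combined with $\mathrm{col}(A') \subseteq \mathrm{col}(A)$, this gives $\mathrm{col}(A) = \mathrm{col}(A') \oplus \linspan(a)$, so $\rank(A') = k - 1$. Second, since $\mathrm{diag}(v \otimes v) = v$ over $\F{2}$ and $A'$ has zero diagonal, any decomposition $A' = \sum_i v_i' \otimes v_i'$ must satisfy $\sum_i v_i' = 0$.

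I would then invoke Lemma \ref{lem:ksdecomp} on $A'$ to obtain $A' = \sum_{i=1}^{l'} v_i' \otimes v_i'$ with $l' \le 3(k-1)/2$ and each $v_i' \in \mathrm{col}(A')$, hence with first coordinate $0$. Setting $w_i := v_i' + a$, each $w_i$ has first coordinate $1$; expanding $w_i \otimes w_i$ and summing, while using $\sum_i v_i' = 0$ to kill the cross terms, gives
\[
A = \sum_{i=1}^{l'} w_i \otimes w_i + (l' + 1)\, a \otimes a.
\]
If $l'$ is odd, the last term vanishes and I take $k_0 := l'$; if $l'$ is even, I keep $a$ as one extra vector and take $k_0 := l' + 1$. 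In either case $k_0$ is odd and bounded by $l' + 1 \le 3(k-1)/2 + 1 < 3k/2 + 1$.

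For the column-space requirement, the key fact is that over $\F{2}$ the diagonal of any symmetric matrix $M$ lies in $\mathrm{col}(M)$: if $My = 0$ then $y^\top M y = \langle \mathrm{diag}(M), y \rangle = 0$, so $\mathrm{diag}(M) \perp \ker M$. Applied to $D_1(A)$ this gives $D_1(a) = \mathrm{diag}(D_1(A)) \in \mathrm{col}(D_1(A))$. Combined with $B = D_1(A) + D_1(a) \otimes D_1(a)$, it follows that $\mathrm{col}(B) \subseteq \mathrm{col}(D_1(A))$, so each $D_1(v_i') \in \mathrm{col}(B)$ and each $D_1(w_i) = D_1(v_i') + D_1(a)$ lie in $\mathrm{col}(D_1(A))$. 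I expect the main subtlety to be the rank identity $\rank(A') = k - 1$: this one-rank drop is exactly what turns the naive count $l' + 1 \le 3k/2 + 1$ into the strict inequality $k_0 < 3k/2 + 1$ claimed in the statement.
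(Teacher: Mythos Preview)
Your argument is correct, but it takes a genuinely different route from the paper's proof. The paper simply applies Lemma~\ref{lem:ksdecomp} directly to $A' := D_1(A)$ (rank $\le k$), obtains $u_1,\dots,u_l$ with $l \le 3k/2$, lifts each $u_i$ to $v_i \in \F{2}^{m+1}$ by prepending a $1$, and then checks via the pseudo-quadratic identities that $\sum_i v_i \otimes v_i$ already agrees with $A$ everywhere except possibly at the $(1,1)$ entry; if needed it adds $e_1 \otimes e_1$ to fix that entry. There is no subtraction of $a \otimes a$, no use of the zero-diagonal trick $\sum_i v_i' = 0$, and no need for the fact that $\mathrm{diag}(M) \in \mathrm{col}(M)$ for symmetric $M$ over $\F{2}$.

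Your detour through $A' = A + a \otimes a$ buys you something concrete: you get $\rank(A') = k-1$ exactly, so Lemma~\ref{lem:ksdecomp} yields $l' \le 3(k-1)/2$ and the strict bound $k_0 \le l'+1 < 3k/2 + 1$ is immediate. In the paper's argument the strict inequality is a bit more delicate: one must observe that either $\rank(D_1(A)) \le k-1$, or $\rank(D_1(A)) = k$ forces $D_1(A)$ to have a nonzero diagonal (else $A$ would be block-diagonal with rank $k+1$), which in the canonical form of Lemma~\ref{lem:ksdecomp} means $s \ge 1$ and hence $l \le (3k-1)/2$. So your approach is slightly more involved but makes the counting cleaner, while the paper's is shorter and more direct; both are valid.
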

\begin{proof}
  Let $A'=D_1(A)$.
  Note that $A'$ is symmetric and has rank at most $k$. Therefore by Lemma \ref{lem:ksdecomp},
  there exists $l<3k/2$ vectors $u_1,\cdots,u_l \in \F{2}^{m}$, such
  that $A'=\sum_{i=1}^{l} u_i \otimes u_i$.
  Now consider vectors $v_1,\cdots,v_l \in \F{2}^{m+1}$, where for each $i$,
  $v_{i,1}=1$ and $v_{i,j}=u_{i,j-1}$ for $j=2,\cdots,m+1$. 
  Let $A''=\sum_{i=1}^{l} v_{i} \otimes v_{i}$, and $B=A-A''$.
  For $j,j' \in \left\{ 2,\cdots,m+1 \right\}$, we have
  \[
  A''_{j,j'} = \sum_{i=1}^{l} v_{i,j} v_{i,j'} = \sum_{i=1}^{l} u_{i,j-1} u_{i,j'-1}
  =A'_{j-1,j'-1} = A_{j,j'}\,.
  \]
  Moreover, we have 
  \[
  A''_{1,j} = \sum_{i=1}^{l} v_{i,1} v_{i,j} = \sum_{i=1}^{l} v_{i,j} v_{i,j}
  =A''_{j,j} = A_{j,j}=A_{1,j}\,.
  \]
  We conclude that for all $(i,j) \ne (1,1)$, $A_{i,j}=A''_{i,j}$. 
  Note that $A''_{1,1}=(l \bmod 2)$. Therefore if $A''_{1,1}=1=A_{1,1}$, then
  we have $l$ is odd and $A=\sum_{i=1}^{l} v_{i} \otimes v_{i}$ as promised.
  Otherwise $l$ is even. Let $e=(1\ 0\ \cdots\ 0) \in \F{2}^{m+1}$. Then
  $A=\sum_{i=1}^{l} v_i \otimes v_i + e \otimes e$ gives the desired decomposition.
\end{proof}

The following lemma summarizes the discussion at the beginning of this section
and relates odd superposition complexity with low-rank pseudo-quadratic matrices.
\begin{lemma}\label{lem:rankoddcov}
  Let $q_1(x)=0,\cdots,q_s(x)=0$ be a set of $s$ quadratic equations
  on variable $x_1,\cdots,x_m$, and let $Q_1,\cdots,Q_s \in \F{2}^{(m+1) \times (m+1)}$
  be their corresponding matrix forms.
  Suppose there is a pseudo-quadratic matrix $A \in \F{2}^{(m+1) \times (m+1)}$
  such that $\rank(A) \le k$ and for all $i \in [s]$, $\langle Q_i,A \rangle=0$,
  then there exists integer $l<3k/2+1$ and $l$ vectors $a^{(1)},\cdots,a^{(l)} \in \F{2}^{m+1}$,
  such that $A=\sum_{i=1}^{l} a^{(i)} \otimes a^{(i)}$,
  where for all $j \in [l]$, we have $a^{(j)}_1=1$ and that $D_1(a^{(j)})$ is in the column
  space of $D_1(A)$.
  This implies that the assignments $D_1(a^{(1)}),\cdots,D_1(a^{(l)})$
  satisfy all equations $q_1(x)=0,\cdots,q_s(x)=0$ in superposition.
\end{lemma}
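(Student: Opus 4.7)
The plan is to directly combine Lemma \ref{lem:odddecomp} with the equivalence between odd-covering and satisfaction in superposition (for an odd number of assignments) established earlier in this section. First, since $A$ is pseudo-quadratic of rank at most $k$, I would invoke Lemma \ref{lem:odddecomp} to obtain an odd integer $l < 3k/2+1$ and vectors $a^{(1)},\ldots,a^{(l)} \in \F{2}^{m+1}$ such that $a^{(j)}_1 = 1$, $D_1(a^{(j)})$ lies in the column space of $D_1(A)$, and $A = \sum_{j=1}^{l} a^{(j)} \otimes a^{(j)}$. This immediately delivers all the structural assertions of the lemma, so only the superposition property remains to be checked.

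For that, the key algebraic identity is that whenever $v \in \F{2}^{m+1}$ has $v_1 = 1$, we have $\langle Q_i, v \otimes v \rangle = q_i(D_1(v))$. This is a short direct check using the upper-triangular matrix encoding defined just before the statement: the $(1,1)$ entry of $Q_i$ contributes the constant term $c$, the entries $(1,j)$ for $j \geq 2$ contribute the linear coefficients (since $v_1 v_j = v_j$), and the entries $(j,j')$ for $2 \leq j < j' \leq m+1$ contribute the quadratic terms. Combining this identity with the bilinearity of $\langle \cdot, \cdot \rangle$ and the decomposition of $A$, I would then compute
\[
0 \;=\; \langle Q_i, A \rangle \;=\; \sum_{j=1}^{l} \langle Q_i, a^{(j)} \otimes a^{(j)} \rangle \;=\; \sum_{j=1}^{l} q_i\!\left( D_1(a^{(j)}) \right)
\]
over $\F{2}$. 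Hence an even number of the $D_1(a^{(j)})$ fail $q_i$, and since $l$ is odd, an odd number of them satisfy it; that is, the assignments odd-cover every $q_i(x) = 0$. Invoking the earlier equivalence between odd-covering and satisfaction in superposition (applicable because $l$ is odd) then upgrades this to the desired superposition property for all $i \in [s]$.

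Essentially all of the work has already been isolated in Lemma \ref{lem:odddecomp} and in the odd-covering/superposition equivalence, so I do not anticipate any real obstacle. The only mildly delicate point is the identification $\langle Q_i, v \otimes v \rangle = q_i(D_1(v))$, which depends on the homogenization convention of prepending a $1$-coordinate together with the upper-triangular convention for $Q_i$; once this is spelled out the proof reduces to a two-line calculation plus an appeal to the two preceding lemmas.
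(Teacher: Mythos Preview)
Your proposal is correct and follows essentially the same route as the paper: apply Lemma~\ref{lem:odddecomp} to get the decomposition, then expand $0=\langle Q_i,A\rangle=\sum_{j}\langle Q_i,a^{(j)}\otimes a^{(j)}\rangle=\sum_{j}q_i(D_1(a^{(j)}))$ and conclude superposition. The only cosmetic difference is that you route the final step through odd-covering and the equivalence lemma, whereas the paper simply notes that $\sum_j q_i(\cdot)=0$ is (by the computation in that equivalence lemma) exactly the superposition condition.
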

\begin{proof}
  Apply Lemma \ref{lem:odddecomp} to $A$, and let $v_1,\cdots,v_l$ be the vectors
  we get, with $v_{i1}=1$ for $i \in [l]$, and $A=\sum_{i \in [l]} v_i \otimes v_i$.
  We now verify that $D_1(v_1),\cdots,D_1(v_l)$ satisfy all equations in superposition.

  Consider equation $i$ for $i \in [s]$. We have
  \begin{align*}
    0=\langle Q_i,A \rangle \ =&\ 
    \sum_{i=1}^{l} \langle Q_i, v_i \otimes v_i \rangle \\
    =&\  \sum_{i=1}^{l} q_i(v_i)\,.
  \end{align*}
  By definition, we have that $v_1,\cdots,v_l$ satisfy $q_i$ in superposition.
\end{proof}

\subsection{\LC}
The starting point of our reduction is the \LC hardness
obtained from \ksat{3} instances. 
We use \LC instances obtained by 
applying the PCP Theorem \cite{arorapcp1,arorapcp2}
and the Parallel Repetition Theorem \cite{raz}. 
The exact formulation below is from \cite{shortcodecol}.
\begin{definition}\label{def:3satlc}
  Let $\phi$ be a \ksat{3} instance with $X$ as the set of variables
  and $\mathcal{C}$ the set of clauses. The $r$-repeated \LC
  instance $\mathcal{L}(r,\phi)$ is specified by:
  \begin{itemize}
    \item A bipartite graph $G=(U,V,E)$, where $V:=\mathcal{C}^r$
      and $U:=X^r$.
    \item Label set for $U$, denote by $L:=\bool^r$, and label
      set for $V$, denote by $R:=\bool^{3r}$.
    \item There is an edge $\{u,v\} \in E$ if for each $i \in [r]$,
      $u_i$ is a variable appearing in clause $v_i$.
    \item For edge $\{u,v\}$, the constraint $\pi_{uv}:\bool^{3r} \to \bool^r$
      is the projection of the assignment of the $3r$ clause variables in $v$
      to the assignment of the $r$ variables in $u$.
    \item For each $v \in V$, there is a set of $r$ functions
      $\{f^v_i:\bool^{3r} \to \bool^r\}_{i \in [r]}$,
      such that $f^v_i(a)=0$ if and only if the assignment $a$ satisfies
      the clause $v_i$. Note that each $f^v_i$ depends only on $3$ entries of $a$.
  \end{itemize}
  A labeling $\sigma:U \to L,V \to R$ satisfies an edge $\{u,v\}$ iff
  $\pi_{uv}(\sigma(V))=\sigma(U)$, and $\sigma(V)$ satisfies all clauses
  in $v$. The value of $\mathcal{L}(r,\phi)$ is the maximum fraction
  of edges that can be simultaneously satisfied by any labeling.
\end{definition}

We have the following hardness result for \LC.
\begin{theorem}\label{thm:3satlc}
  Given a \ksat{3} instance $\phi$ on $n$ variables and $r \in \N$, 
  there is an algorithm that constructs $\mathcal{L}(r,\phi)$
  in time $n^{O(r)}$, and that the output \LC instance has the following
  properties:
  \begin{itemize}
    \item If $\phi$ is satisfiable, then the value of $\mathcal{L}(r,\phi)$ is $1$.
    \item If $\phi$ is unsatisfiable, then the value of $\mathcal{L}(r,\phi)$
      is at most $2^{-\varepsilon_0 r}$, for some universal constant 
      $\varepsilon_0 \in (0,1)$.
  \end{itemize}
\end{theorem}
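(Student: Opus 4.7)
The plan is to derive Theorem \ref{thm:3satlc} as the standard combination of the PCP Theorem with Raz's Parallel Repetition Theorem, with $\mathcal{L}(r,\phi)$ serving as the concrete presentation of the $r$-fold repeated $2$-prover $1$-round game associated to $\phi$. First, I would use the PCP Theorem \cite{arorapcp1,arorapcp2} to preprocess $\phi$ into a gap \ksat{3} instance $\phi'$ of size $\poly(n)$, so that $\phi'$ is satisfiable whenever $\phi$ is, while in the unsatisfiable case every assignment violates at least a constant fraction $\delta > 0$ of the clauses of $\phi'$. For the remainder of the argument I would relabel $\phi'$ as $\phi$.

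Next I would build the one-shot instance $\mathcal{L}(1,\phi)$ directly as in Definition \ref{def:3satlc}: left vertices are variables, right vertices are clauses, each clause is connected by an edge to each of its three variables, the projection $\pi_{uv}$ reads off the bit assigned to $u$ inside the length-$3$ assignment labeling $v$, and the predicate $f^v_1$ checks that $v$ is satisfied. Completeness is immediate since a satisfying assignment to $\phi$ induces a valid labeling. For soundness, a standard averaging argument shows that any labeling satisfying a fraction $1-\eta$ of the edges induces an assignment of $\phi$ satisfying at least a $(1-3\eta)$-fraction of clauses; hence in the unsatisfiable case the value of $\mathcal{L}(1,\phi)$ is at most $1-\delta/3$, a constant strictly less than $1$.

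Finally I would apply Raz's Parallel Repetition Theorem \cite{raz} to $\mathcal{L}(1,\phi)$ with $r$ copies in parallel. The product structure matches $\mathcal{L}(r,\phi)$ in Definition \ref{def:3satlc} coordinate by coordinate: the bipartite graph is the $r$-fold tensor product, labels are $r$-tuples of labels, and both projections and clause predicates act entrywise. Completeness is trivially preserved; Raz's theorem yields value at most $2^{-\varepsilon_0 r}$ in the soundness case, for a universal $\varepsilon_0 > 0$ that depends only on the base soundness gap $\delta/3$ and on the constant alphabet sizes $|\bool|$ and $|\bool^3|$. The running time bound $n^{O(r)}$ follows because $|X|$ and $|\mathcal{C}|$ are both $\poly(n)$, so enumerating the $r$-fold product takes polynomial time per coordinate and $n^{O(r)}$ overall.

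The main obstacle, in the sense of depth, is really just the invocation of Raz's theorem, which is used as a black box; the rest is bookkeeping to match the coordinate-wise product structure to Definition \ref{def:3satlc}. If one wished to avoid the alphabet-dependent form of parallel repetition, one could instead apply Holenstein's or Rao's refinement, but for our purposes the alphabet is constant and the stated form of Raz suffices.
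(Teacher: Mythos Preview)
Your sketch is correct and is exactly the standard derivation; the paper itself does not supply a proof but simply states the theorem as the well-known consequence of the PCP Theorem \cite{arorapcp1,arorapcp2} combined with Raz's Parallel Repetition Theorem \cite{raz}, attributing the formulation to \cite{shortcodecol}. Your bookkeeping (PCP preprocessing to obtain a constant gap, the averaging argument for the base game, and the coordinate-wise identification of the $r$-fold product with Definition~\ref{def:3satlc}) is the expected unpacking of those citations.
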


In our construction of \LC instance with matrix labels, we need to use
the following Parallel Repetition theorem from Rao \cite{parreprao}, which applies
to projection games (\LC), with the advantage that the rate at which
the soundness decreases is independent of the label size of the original instance.
\begin{theorem}[Parallel Repetition \cite{parreprao}]\label{thm:parrep}
  There is a universal constant $\alpha>0$, such that 
  for a \LC instance $\Psi$,
  if $\Opt(\Psi) \le 1-\varepsilon$, then 
  $\Opt(\Psi^n) \le (1-\varepsilon/2)^{\alpha\varepsilon n}$.
\end{theorem}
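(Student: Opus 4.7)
The plan is to follow the information-theoretic template for parallel repetition developed by Raz \cite{raz} and simplified by Holenstein, specialised to projection games so that the decay rate is independent of the alphabet size $|R|$. I would prove the contrapositive: if $\Opt(\Psi^n) > (1-\varepsilon/2)^{\alpha \varepsilon n}$ for a sufficiently small universal constant $\alpha>0$, then one can construct a strategy for a single copy of $\Psi$ with value strictly greater than $1-\varepsilon$, contradicting $\Opt(\Psi) \le 1-\varepsilon$.

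Fix an optimal strategy $\sigma$ for $\Psi^n$ and construct a set $T \subseteq [n]$ greedily: start with $T=\emptyset$ and, while there exists $i \notin T$ with $\Pr[\sigma \text{ wins on coordinate } i \mid W_T] \le 1-\varepsilon/2$, add such an $i$ to $T$, where $W_T$ denotes the event that $\sigma$ wins on every coordinate of $T$. If the process ran for $\Theta(\varepsilon n)$ steps, the chain rule would force $\Pr[W_{[n]}] \le (1-\varepsilon/2)^{\Theta(\varepsilon n)}$, contradicting the assumed value of $\sigma$ once $\alpha$ is chosen appropriately. Hence the process halts with $|T| = O(\varepsilon n)$, and every remaining coordinate $i$ satisfies $\Pr[\sigma \text{ wins on } i \mid W_T] > 1-\varepsilon/2$; the remaining task is to simulate $\sigma$'s play on some such coordinate using only the randomness available in a single copy of $\Psi$.

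To that end, I would introduce dependency-breaking random variables $D_{-i}$ consisting of enough of the questions outside coordinate $i$ to make the remaining inputs nearly product conditioned on $W_T$ and $D_{-i}$. Summed over $i$, the KL divergence between the conditional distribution of the pair of questions in coordinate $i$ (given $W_T$ and $D_{-i}$) and the true product distribution of $\Psi$ is bounded by $H(W_T) = O(|T|) = O(\varepsilon n)$ via the chain rule, so by averaging there is a coordinate $i^\ast$ where this divergence is $O(\varepsilon)$, equivalently total variation distance $O(\sqrt{\varepsilon})$. Given the single question in $\Psi$, both players use shared randomness to sample $D_{-i^\ast}$ and then independently sample their own side of the conditional distribution. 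The step specific to projection games is \emph{correlated sampling}: because each edge constraint $\pi_{uv}$ is a projection, once Alice and Bob each hold slightly perturbed samples of the conditional distribution, a correlated-sampling lemma lets them agree on a consistent continuation with failure probability that depends only on the statistical distance and is independent of $|R|$, which is precisely what removes the alphabet dependence present in Raz's original bound.

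The main obstacle is the rigorous bookkeeping of all the slacks: the statistical distance $O(\sqrt{\varepsilon})$ introduced by the dependency-breaking reduction, the correlated-sampling failure probability, and the single-coordinate winning probability $>1-\varepsilon/2$ must compose into a strategy for $\Psi$ of value strictly greater than $1-\varepsilon$, which in turn determines how small $\alpha$ has to be chosen. Projection structure is essential here: without it, correlated sampling would degrade with $|R|$, reintroducing the alphabet dependence that the claimed bound avoids.
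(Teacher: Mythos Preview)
The paper does not prove Theorem~\ref{thm:parrep}; it is quoted verbatim from Rao~\cite{parreprao} and used as a black box, so there is no in-paper argument to compare against. Your outline is a reasonable sketch of the Raz--Holenstein framework specialised \`a la Rao, and the greedy construction of $T$ together with the information-theoretic averaging step is the right skeleton.

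One point is misattributed. You write that ``because each edge constraint $\pi_{uv}$ is a projection \ldots\ a correlated-sampling lemma lets them agree on a consistent continuation with failure probability that depends only on the statistical distance and is independent of $|R|$.'' Holenstein's correlated-sampling lemma is already alphabet-independent, even for general games; the $\log|R|$ factor in the original Raz/Holenstein bound does not come from the sampling step. It enters earlier, in the information-theoretic analysis of how conditioning on the event $W_T$ (which depends on the \emph{answers}, not just the questions) perturbs the joint distribution that the players must simulate. Rao's contribution is that, for projection constraints, the winning event on each coordinate of $T$ is determined by Alice's label together with the single bit ``does $\pi$ of Bob's label match,'' so the cost of this conditioning can be bounded independently of $|R|$. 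In your write-up the line ``$H(W_T)=O(|T|)$'' is exactly the place where the projection hypothesis is doing the work, yet you invoke projection only later, at a step where it is not needed. This is an issue of where the key idea is located rather than a fatal gap, and you already flag the final bookkeeping of slacks as the remaining obstacle; just be aware that making the constants close requires the projection-specific argument at the conditioning stage, not at the correlated-sampling stage.
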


\subsection{\LDLC}
In this section, we review the basics of \LDLC. The formulation here
is from \cite{dinurguruswamishortcode} and \cite{shortcodecol}.
Towards the end of this section, we prove a key lemma that we 
will use for proving our superposition hardness results.

For a positive integer $m$, denote by $\Pm{m}$ the vector space of
$m$-variable functions $\F{2}^m \to \F{2}$. For $f,g \in \Pm{m}$,
let $\Delta(f,g)$ be the Hamming distance between $f$ and $g$.
For a subset of functions $\mathcal{F} \subseteq \Pm{m}$,
the distance between $g$ and $\mathcal{F}$ is defined as
$\Delta(g,\mathcal{F})=\min_{f \in \mathcal{F}}\Delta(f,g)$.

We define the following dot product on $\Pm{m}$.
\begin{definition}[Dot Product]
  For $f,g \in \Pm{m}$, the dot product is defined as 
  $\langle f,g \rangle=\sum_{x \in \F{2}^m} f(x)g(x)$.
\end{definition}

Denote by $\Pmd{m}{d}$ be the space of functions with degree at most $d$.
For a subspace $\mathcal{A} \subseteq \Pmd{m}{d}$, denote its dual
by $\mathcal{A}^{\bot} = \{g \in \Pm{m} \mid \forall f \in \mathcal{A}, \langle f,g \rangle = 0 \}$.
It is well known that $\Pmd{m}{d}^{\bot}=\Pmd{m}{m-d-1}$.

For $\beta \in \Pm{m}$, denote by $\supp(\beta)$ the support of $\beta$,
that is $\supp(\beta)=\{x \mid \beta(x)=1\}$. Define $\weight(\beta)=|\supp(\beta)|$.

\begin{definition}[\LDLC]
  The \LDLC encoding for an $m$-bit string $a \in \F{2}^m$ is a function
  $A_a:\Pmd{m}{d} \to \F{2}$, defined as
  $A_a(g)=g(a)$, for all $g \in \Pmd{m}{d}$.
\end{definition}

\begin{definition}[Character Set]
  For $\beta \in \Pm{m}$, define the corresponding \emph{character function}
  $\chi_{\beta}:\Pmd{m}{d} \to \R$ as $\chi_{\beta}(f)=(-1)^{\langle \beta,f \rangle}$.

  Define the \emph{character set} $\Lambda_{m,d}$ to be the set of functions $\beta \in \Pm{m}$
  which are minimum weight functions in the cosets of $\Pm{m}/\Pmd{m}{d}^{\bot}$, where ties
  are broken arbitrarily.
\end{definition}

We have the following result about the character set and
the ``Fourier decomposition'' for functions $\Pmd{m}{d} \to \R$ from \cite{dinurguruswamishortcode}.
\begin{lemma}
  The following statements hold:
  \begin{itemize}
    \item For any $\beta,\beta' \in \Pm{m}$, $\chi_{\beta}=\chi_{\beta'}$
      if and only if $\beta-\beta' \in \Pmd{m}{d}^{\bot}$.
    \item For $\beta \in \Pmd{m}{d}^{\bot}$, $\chi_{\beta}$ is the constant $1$ 
      function.
    \item For any $\beta$, there exists $\beta'$, such that
      $\beta-\beta' \in \Pmd{m}{d}^{\bot}$, and 
      $|\supp(\beta')|=\Delta(\beta,\Pmd{m}{d}^{\bot})$.
      We call such $\beta'$ the minimum support function for the coset
      $\beta+\Pmd{m}{d}^{\bot}$.
    \item The characters in the character set $\Lambda_{m,d}$ form an orthonormal 
      basis under the inner product $\langle A,B \rangle=\E_{f \in \Pmd{m}{d}}[A(f)B(f)]$.
    \item Any function $A:\Pmd{m}{d} \to \R$ can be uniquely decomposed as
      \[
      A(g) = \sum_{\beta \in \Lambda_{m,d}} \widehat{A}_{\beta} \chi_{\beta}(g)\,.
      \]
    \item Parseval's identity: For any $A:\Pmd{m}{d} \to \R$,
      $\sum_{\beta \in \Lambda_{m,d}} \widehat{A}_{\beta}^2 = \E_{f \sim \Pmd{m}{d}}[A(f)^2]$.
  \end{itemize}
\end{lemma}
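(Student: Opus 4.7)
The lemma collects six structural facts about the characters on $\Pmd{m}{d}$; the plan is to treat $\Pmd{m}{d}$ as a finite-dimensional $\FF{2}$-vector space equipped with the inner product $\langle f,g\rangle=\sum_x f(x)g(x)$ inherited from $\Pm{m}$, and then run the standard Fourier analysis over $\FF{2}$-vector spaces, leaning on the already-cited duality $\Pmd{m}{d}^{\bot}=\Pmd{m}{m-d-1}$.

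For the first two items, the computation $\chi_{\beta}(f)\chi_{\beta'}(f)=(-1)^{\langle \beta-\beta',f\rangle}$ together with bilinearity shows that $\chi_{\beta}=\chi_{\beta'}$ as functions on $\Pmd{m}{d}$ iff $\langle \beta-\beta',f\rangle=0$ for every $f\in\Pmd{m}{d}$, which is exactly the condition $\beta-\beta'\in\Pmd{m}{d}^{\bot}$. Specializing $\beta'=0$ gives the second item. For the third item, the set $\beta+\Pmd{m}{d}^{\bot}$ is finite, so pick any element of minimum support; the first item guarantees it yields the same character as $\beta$.

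For the orthonormality of $\{\chi_{\beta}\}_{\beta\in\Lambda_{m,d}}$, compute
\[
\langle \chi_{\beta},\chi_{\beta'}\rangle
=\E_{f\sim\Pmd{m}{d}}\bigl[(-1)^{\langle \beta-\beta',f\rangle}\bigr].
\]
Since $f$ ranges uniformly over the $\FF{2}$-subspace $\Pmd{m}{d}$ and the map $f\mapsto\langle\beta-\beta',f\rangle$ is $\FF{2}$-linear, this average equals $1$ if the linear functional is identically zero on $\Pmd{m}{d}$, and $0$ otherwise. By the first item distinct elements of $\Lambda_{m,d}$ lie in distinct cosets of $\Pmd{m}{d}^{\bot}$ in $\Pm{m}$, so the characters are pairwise orthonormal. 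The number of such cosets is $|\Pm{m}|/|\Pmd{m}{d}^{\bot}|=2^{2^m}/2^{\dim\Pmd{m}{m-d-1}}$, which by the duality $\Pmd{m}{d}^{\bot}=\Pmd{m}{m-d-1}$ equals $|\Pmd{m}{d}|$, i.e.\ the dimension of the real vector space of functions $\Pmd{m}{d}\to\R$. So $\{\chi_{\beta}\}_{\beta\in\Lambda_{m,d}}$ is not merely orthonormal but a \emph{complete} orthonormal basis. The fifth item is then just expansion in this basis, and Parseval's identity in the sixth item follows by squaring and using orthonormality term by term.

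The only place where care is genuinely required is the dimension count in item~4, because it is what turns an orthonormal family into a basis and thereby justifies the uniqueness of the Fourier expansion in item~5; everything else is bookkeeping on top of the identity $(-1)^{\langle\beta-\beta',f\rangle}$ averaging to $0$ on nontrivial cosets. I would therefore present items~1--3 in one short paragraph, spend the bulk of the argument on the coset-counting and averaging for item~4, and then derive items~5 and~6 as immediate corollaries.
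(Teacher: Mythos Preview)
Your proof is correct and complete; the argument via coset counting and the non-degeneracy of the bilinear form on $\Pm{m}$ is exactly the standard one. Note, however, that the paper does not actually prove this lemma: it is stated as a result imported from \cite{dinurguruswamishortcode}, so there is no ``paper's own proof'' to compare against. Your write-up would serve perfectly well as a self-contained proof if one were desired.
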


The following lemma relates characters from different domains related by coordinate projections
and is from \cite{dinurguruswamishortcode}.
\begin{lemma}
  Let $n \le m$, and $S \subseteq [m]$ with $|S|=n$, and let $\pi:\F{2}^m \to \F{2}^n$
  be a projection, mapping $x \in \F{2}^m$ to $x|_S \in \F{2}^n$. Then 
  for $f \in \Pmd{n}{d}$ and $\beta \in \Pm{m}$, we have
  \[
  \chi_{\beta}(f \circ \pi) = \chi_{\pi_2(\beta)}(f)\,,
  \]
  where $\pi_2(\beta)(y)=\sum_{x \in \pi^{-1}(y)}\beta(x)$.
\end{lemma}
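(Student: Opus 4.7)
The plan is to unfold the definition of $\chi_\beta$ on both sides and reduce the identity to a statement about the inner product $\langle \beta, f\circ \pi\rangle$. By definition, $\chi_\beta(f \circ \pi) = (-1)^{\langle \beta, f\circ\pi\rangle}$ and $\chi_{\pi_2(\beta)}(f) = (-1)^{\langle \pi_2(\beta), f\rangle}$, so it suffices to show that
\[
\langle \beta, f\circ\pi\rangle = \langle \pi_2(\beta), f\rangle
\]
as elements of $\F{2}$.

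The key step is to rewrite the sum on the left by partitioning $\F{2}^m$ according to the fibers of $\pi$. Starting from
\[
\langle \beta, f\circ\pi\rangle = \sum_{x\in \F{2}^m} \beta(x)\, f(\pi(x)),
\]
I would group the terms by $y = \pi(x)$, obtaining
\[
\sum_{y \in \F{2}^n} f(y) \sum_{x \in \pi^{-1}(y)} \beta(x) = \sum_{y \in \F{2}^n} f(y)\, \pi_2(\beta)(y) = \langle \pi_2(\beta), f\rangle,
\]
where the middle equality is just the definition of $\pi_2(\beta)$. Substituting back into the exponent of $(-1)$ yields $\chi_\beta(f\circ\pi) = \chi_{\pi_2(\beta)}(f)$.

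This is a completely routine calculation; there is no real obstacle. The only point worth flagging is that $\pi_2(\beta)$ lives in $\Pm{n}$ while $\beta$ lives in $\Pm{m}$, so the inner products on the two sides take place in different function spaces, but the identification via the fiber sum is precisely what $\pi_2$ was defined to produce. One should also note that the fiber $\pi^{-1}(y)$ has size $2^{m-n}$, and that $f(y)$ can be pulled outside the inner sum because $f\circ\pi$ is constant on each fiber by construction; this is what makes the regrouping legitimate.
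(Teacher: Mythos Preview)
Your proof is correct and is exactly the standard argument. The paper does not supply its own proof of this lemma; it simply states the result and attributes it to \cite{dinurguruswamishortcode}, so there is nothing further to compare against.
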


Like in the classical \LGCD reductions, we enforce special structures on the 
tables. This is a technique known as \emph{folding}. The following properties of the Fourier coefficients
of folded functions were also studied in \cite{dinurguruswamishortcode}.

\begin{definition}
  A table $A:\Pmd{m}{d} \to \R$ is folded 
  if for any $f \in \Pmd{m}{d}$, we have $A(f+1)=-A(f)$.
\end{definition}
\begin{lemma}\label{lem:shortcodeodd}
  If $A:\Pmd{m}{d} \to \R$ is folded, then
  for any $\alpha$ such that $\widehat{A}_{\alpha} \ne 0$,
  we have $\sum_{x \in \F{2}^m} \alpha(x)=1$. In particular, we have
  $\supp(\alpha) \ne \emptyset$.
\end{lemma}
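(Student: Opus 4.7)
The plan is to apply the Fourier decomposition directly, combined with the folding identity. The first step I would take is to understand how a character behaves under the transformation $f \mapsto f+1$. Since $1 \in \Pmd{m}{d}$ (the constant function has degree $0$), for any $\beta \in \Pm{m}$ and $f \in \Pmd{m}{d}$,
\[
\chi_\beta(f+1) \;=\; (-1)^{\langle \beta,\,f+1\rangle} \;=\; (-1)^{\langle \beta,f\rangle + \langle \beta, 1\rangle} \;=\; (-1)^{\sum_{x \in \F{2}^m} \beta(x)} \,\chi_\beta(f).
\]
So the effect of folding on each character is multiplication by the sign $(-1)^{\sum_x \beta(x)}$, which is $+1$ if $\beta$ has even weight and $-1$ if $\beta$ has odd weight.

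Next, I would substitute the Fourier expansion $A(g) = \sum_{\beta \in \Lambda_{m,d}} \widehat{A}_\beta \chi_\beta(g)$ into both sides of the folding identity $A(f+1) = -A(f)$, yielding
\[
\sum_{\beta \in \Lambda_{m,d}} \widehat{A}_\beta (-1)^{\sum_x \beta(x)} \chi_\beta(f) \;=\; -\sum_{\beta \in \Lambda_{m,d}} \widehat{A}_\beta \chi_\beta(f).
\]
Since the characters indexed by $\Lambda_{m,d}$ form an orthonormal basis for functions $\Pmd{m}{d} \to \R$, the decomposition is unique, so I can equate coefficients to get $\widehat{A}_\beta \bigl((-1)^{\sum_x \beta(x)} + 1\bigr) = 0$ for each $\beta \in \Lambda_{m,d}$. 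In particular, if $\widehat{A}_\alpha \ne 0$ then $(-1)^{\sum_x \alpha(x)} = -1$, i.e. $\sum_{x \in \F{2}^m}\alpha(x)=1$ in $\F{2}$. The in-particular clause is then immediate: a function whose values sum to $1$ is not identically zero, so $\supp(\alpha) \ne \emptyset$.

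There is essentially no obstacle here; the only small things to verify are (i) that $1 \in \Pmd{m}{d}$, which is trivial, and (ii) that the argument is consistent with $\alpha$ being only a coset representative. For the latter, note that if $\alpha' = \alpha + \gamma$ with $\gamma \in \Pmd{m}{d}^\bot$, then $\sum_x \gamma(x) = \langle \gamma, 1\rangle = 0$ because $1 \in \Pmd{m}{d}$, so the parity $\sum_x \alpha(x)$ depends only on the coset $\alpha + \Pmd{m}{d}^\bot$, making the conclusion well-defined for the chosen minimum-weight representative in $\Lambda_{m,d}$.
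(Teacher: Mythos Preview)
Your proof is correct and is the standard argument for this fact. The paper itself does not give a proof of this lemma; it is stated there as a known property of folded functions, attributed to \cite{dinurguruswamishortcode}. Your verification that the parity $\sum_x \alpha(x)$ is well-defined on cosets of $\Pmd{m}{d}^{\bot}$ is a nice touch that makes the argument self-contained.
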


\begin{definition}
  Let $q_1,\cdots,q_k \in \Pmd{m}{3}$, and let
  \[
  J(q_1,\cdots,q_k):=\left\{ \sum_{i} r_i q_i \mid r_i \in \Pmd{m}{d-3} \right\}\,.
  \]
  We say that a function $A:\Pmd{m}{d} \to \R$ is \emph{folded over $J$} if
  $A$ is constant over cosets of $J$ in $\Pmd{m}{d}$.
\end{definition}
The following lemma shows that a function folded over $J$ does not have
weight on small support characters that are non-zero on $J$.
\begin{lemma}\label{lem:shortcodefoldsubspace}
  Let $\beta \in \Pm{m}$ be such that $\weight(\beta) < 2^{d-3}$, and there exists
  some $i \in [k]$ and $x \in \supp(\beta)$ with $q_i(x) \ne 0$.
  Then if $A:\Pmd{m}{d} \to \R$ is folded over $J$, then $\widehat{A}_{\beta}=0$.
\end{lemma}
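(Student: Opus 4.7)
The plan is to exploit the folding-over-$J$ property to show that $\widehat{A}_\beta$ satisfies $\widehat{A}_\beta = \chi_\beta(h)\widehat{A}_\beta$ for every $h\in J$, and then to produce a single $h\in J$ with $\chi_\beta(h)=-1$, which forces $\widehat{A}_\beta = 0$.

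For the first step, note that since each $q_i \in \Pmd{m}{3}$ and each $r_i \in \Pmd{m}{d-3}$, their product has degree at most $d$, so $J \subseteq \Pmd{m}{d}$. For any $h \in J$, translation by $h$ is a bijection on $\Pmd{m}{d}$, and $A(f+h) = A(f)$ by the folding hypothesis. Using $\chi_\beta(f+h) = \chi_\beta(f)\chi_\beta(h)$, I would change variables $f\mapsto f+h$ in the Fourier coefficient to get
\[
\widehat{A}_\beta \;=\; \E_f\!\bigl[A(f+h)\chi_\beta(f+h)\bigr] \;=\; \chi_\beta(h)\,\E_f\!\bigl[A(f)\chi_\beta(f)\bigr] \;=\; \chi_\beta(h)\,\widehat{A}_\beta.
\]
Hence it suffices to exhibit $h \in J$ with $\langle \beta, h\rangle = 1$.

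For the second step, I would try $h := r\cdot q_i$, where $i$ is the index supplied by the hypothesis and $r \in \Pmd{m}{d-3}$ is to be chosen. Setting $S := \supp(\beta) \cap \{x : q_i(x) = 1\}$, we have
\[
\langle \beta, r q_i\rangle \;=\; \sum_{x \in \F{2}^m} \beta(x)\, r(x)\, q_i(x) \;=\; \sum_{x \in S} r(x),
\]
so the problem reduces to finding $r \in \Pmd{m}{d-3}$ with $\sum_{x \in S} r(x) = 1$. Such an $r$ exists if and only if the indicator $\mathbf{1}_S$ does not lie in the dual code $\Pmd{m}{d-3}^{\bot} = \Pmd{m}{m-d+2}$.

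Finally, the hypothesis guarantees $S \ne \emptyset$ (some $x \in \supp(\beta)$ has $q_i(x)\ne 0$) and $|S| \le \weight(\beta) < 2^{d-3}$. The Reed-Muller code $\Pmd{m}{m-d+2}$ has minimum distance $2^{d-2}$, so every nonzero codeword has Hamming weight at least $2^{d-2} > 2^{d-3} > |S|$. Thus $\mathbf{1}_S$ is neither zero nor a nonzero codeword, so it cannot sit in $\Pmd{m}{d-3}^{\bot}$, and the desired $r$ exists. The only real obstacle is lining up the hypothesis $\weight(\beta)<2^{d-3}$ with the dual Reed-Muller distance $2^{d-2}$; everything else is a mechanical change-of-variables calculation.
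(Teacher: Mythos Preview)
Your argument is correct. The paper does not supply its own proof of this lemma; it is quoted from \cite{dinurguruswamishortcode} as part of the \LDLC preliminaries. Your two-step approach---first deriving $\widehat{A}_\beta=\chi_\beta(h)\widehat{A}_\beta$ for all $h\in J$ from the coset-constancy of $A$, then producing $h=rq_i$ with $\langle\beta,h\rangle=1$ via the Reed--Muller minimum-distance bound---is exactly the standard proof, and every step checks out. Your closing remark about ``lining up'' the hypothesis with the dual distance is not an obstacle: you only need $|S|<2^{d-2}$, and the hypothesis $\weight(\beta)<2^{d-3}$ gives this with a factor of $2$ to spare, so the lemma as stated is slightly stronger than your argument requires.
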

In the actual reduction, $q_1,\cdots,q_k$ will be the set of functions
associated with vertices in the \LC instance, as described in Definition
\ref{def:3satlc}.

In \cite{dinurguruswamishortcode}, Dinur and Guruswami proved the following 
theorem about Reed-Muller codes over $\F{2}$.
\begin{theorem}
  Let $d$ be a multiple of $4$. If $\beta \in \Pm{m}$ is such that
  $\Delta(\beta,\Pmd{m}{d}) \ge 2^{d/2}$, then 
  \[
  \E_{g \sim \Pmd{m}{d/4}}\left[ 
  \left| \E_{h \sim \Pmd{m}{3d/4}} [\chi_{\beta}(gh)]
  \right|
  \right] \le 2^{-4 \cdot 2^{d/4}}\,.
  \]
\end{theorem}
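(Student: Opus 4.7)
The first step is to simplify the expression. Fix $g \in \Pmd{m}{d/4}$ and write $\beta_g \in \Pm{m}$ for the pointwise product $\beta \cdot g$. Then
\[
\E_{h \sim \Pmd{m}{3d/4}}[\chi_{\beta}(gh)] \;=\; \E_{h}\bigl[(-1)^{\langle \beta_g, h \rangle}\bigr].
\]
The map $h \mapsto \langle \beta_g, h \rangle$ is an $\F{2}$-linear functional on $\Pmd{m}{3d/4}$, so this expectation equals $1$ if $\beta_g \in \Pmd{m}{3d/4}^{\bot} = \Pmd{m}{m - 3d/4 - 1}$ and $0$ otherwise. The theorem is therefore equivalent to the bound
\[
\Pr_{g \sim \Pmd{m}{d/4}}\bigl[\beta g \in \Pmd{m}{m - 3d/4 - 1}\bigr] \;\le\; 2^{-4 \cdot 2^{d/4}}.
\]
Since the set of bad $g$ forms an $\F{2}$-linear subspace $V$ of $\Pmd{m}{d/4}$, it suffices to prove $\codim(V) \ge 4 \cdot 2^{d/4}$, or equivalently that the bilinear form $B(g,h) = \langle \beta, gh \rangle$ on $\Pmd{m}{d/4} \times \Pmd{m}{3d/4}$ has rank at least $4 \cdot 2^{d/4}$.

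I would then prove this rank lower bound by induction on $d \in 4\N$. The target exponent $4 \cdot 2^{d/4}$ squares when $d$ increases by $4$, which motivates decomposing a uniform $g \in \Pmd{m}{(d+4)/4}$ into two independent random components (for instance, a coordinate split $g = x_i g_1 + g_0$ with $g_0, g_1$ independently uniform in the appropriate lower-dimensional spaces) and matching each component to one factor of the inductive bound. The base case $d = 4$ should reduce to a direct Fourier calculation: the distance hypothesis $\Delta(\beta, \Pmd{m}{4}) \ge 4$ forces the minimum-weight coset representative $\beta^{*} \in \beta + \Pmd{m}{4}^{\bot}$ to have weight at least $4$, and a Parseval-style computation over the random coefficients of linear $g$ should yield the $2^{-8}$ bound.

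The main obstacle is the distance-propagation step that closes the induction. I would need a quantitative lemma of the form: if $\Delta(\beta, \Pmd{m}{d+4}) \ge 2^{(d+4)/2}$, then the restriction of $\beta$ to a random hyperplane $\{x_i = c\}$ (equivalently, the quotient of $\beta$ by one factor of the random $g$) retains distance at least $2^{d/2}$ from $\Pmd{m-1}{d}$, with the exceptional event having probability at most $2^{-4 \cdot 2^{d/4}}$. The delicate point is matching the quantitative form precisely: the failure probability of the restriction step must itself decay doubly-exponentially in $d$, otherwise cumulative errors across the induction would dominate and break the sharp $2^{-4 \cdot 2^{d/4}}$ bound. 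Quantifying exactly how minimum-weight representatives behave under random restriction is the Reed-Muller testing estimate at the heart of Dinur and Guruswami's argument, and will be the technical crux of the proof.
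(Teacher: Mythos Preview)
Your first reduction is correct and is exactly how the paper (following Dinur--Guruswami) begins: the inner expectation over $h$ is $\{0,1\}$-valued, so the statement is equivalent to bounding the codimension of the subspace
\[
V \;=\; \bigl\{\,g \in \Pmd{m}{d/4} \;:\; \beta g \in \Pmd{m}{m-3d/4-1}\,\bigr\}
\]
inside $\Pmd{m}{d/4}$. In the paper's notation this is $B^{(m)}_{d,k}(\beta)$, and the quantity to be lower-bounded is packaged as
\[
\Phi_{d,k}(D) \;=\; \min_{\substack{m>d\\ \Delta(\beta,\Pmd{m}{m-d-1})\ge D}} \codim\bigl(B^{(m)}_{d,k}(\beta)\bigr).
\]

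The gap is in your inductive scheme. You want to step $d$ by $4$ and split $g = x_i g_1 + g_0$ into two \emph{symmetric} independent pieces, each carrying one copy of the level-$d$ bound so that the codimension doubles. But the pieces are not symmetric: $g_1$ has degree $d/4$ while $g_0$ still has degree $(d+4)/4$, so only one of them is governed by the level-$d$ hypothesis. There is no mechanism in your outline that produces the second factor, and the ``distance-propagation lemma'' you flag would not by itself close this --- even granting perfect distance preservation under restriction, you have only one piece of lower degree to which to apply the induction.

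This asymmetry is precisely why the paper tracks \emph{two} parameters, the ambient degree $d$ and the degree $k$ of $g$, and proves the single-step recursion
\[
\Phi_{d,k}(D) \;\ge\; \Phi_{d-1,k}(D/4) \;+\; \Phi_{d-1,k-1}(D/4),
\]
together with the base case $\Phi_{d,k}(D)\ge 1$ for $d\ge k$ and $D\ge 1$. One application corresponds to a single coordinate restriction (hence the controlled loss $D\to D/4$), and the two terms on the right record whether or not the degree of $g$ drops. Iterating this for depth roughly $d/4$ --- the point at which $D$ falls below the threshold --- produces on the order of $2^{d/4}$ terminal terms, each contributing at least $1$. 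So the exponential growth you are after arises from a binary tree of depth $\Theta(d)$ of single-variable restrictions, not from one split into two equal halves; and your distance-propagation worry is handled one coordinate at a time as $D\to D/4$, rather than in a single batch whose failure probability you would then need to control doubly-exponentially.
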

Note that $\chi_{\beta}(gh)=(-1)^{\langle \beta g,\beta h\rangle}$. 
The key lemma we will now prove is a generalization of the above theorem.
The setting is that we have an additional $t$ functions
$A_1,\cdots,A_t:\Pmd{m}{d} \to \F{2}$. We show that as long as $t$ is small
compared to $2^{d/2}$, the expectation 
$\E_{g,h}[(-1)^{\langle \beta g,\beta h\rangle+\sum_{i=1}^{t} A_i(g)A_i(h)}]$
is still close to $0$ for arbitrary $A_1,\cdots,A_t$.

We use some of the key steps in \cite{dinurguruswamishortcode}.
\begin{definition}
  For $\beta$ and $k \le d$, define
  \[
  B^{(m)}_{d,k}:=\left\{ g \in \Pmd{m}{k} \mid \beta g \in \Pmd{m}{m-d-1+k} \right\}\,.
  \]
  Note that $B^{(m)}_{d,k}$ is a subspace of $\Pmd{m}{k}$.
\end{definition}

For positive integers $d,k$, define $\Phi_{d,k}:\N \to \N$ as follows: if $d<k$, then 
$\Phi_{d,k}$ is identically $0$, otherwise
\[
\Phi_{d,k}(D)=\min_{\substack{m>d \\ \beta \in \Pm{m}: \Delta(\beta,P(m,m-d-1)) \ge D}}
\left\{ \lindim(P(m,k))-\lindim(B^{(m)}_{d,k}(\beta)) \right\}\,.
\]

The following two claims are from \cite{dinurguruswamishortcode}, 
which serve as the basis step and induction step
for their lower-bound for $\Phi_{d,k}(D)$.
\begin{claim}
  For $d \ge k$ and $D \ge 1$, $\Phi_{d,k}(D) \ge 1$.
\end{claim}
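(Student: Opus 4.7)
The plan is to prove the contrapositive via the duality $\Pmd{m}{a}^{\bot}=\Pmd{m}{m-a-1}$ recorded earlier in the excerpt. Specifically, I will show that if $B^{(m)}_{d,k}(\beta)=\Pmd{m}{k}$ (so the displayed codimension vanishes), then $\beta$ must lie in $\Pmd{m}{m-d-1}$, which directly contradicts $\Delta(\beta,\Pmd{m}{m-d-1}) \ge D \ge 1$. So it suffices to exhibit a single $g \in \Pmd{m}{k}$ with $\beta g \notin \Pmd{m}{m-d-1+k}$.

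First I would instantiate the duality at $a=m-d-1+k$, obtaining $\Pmd{m}{m-d-1+k}^{\bot}=\Pmd{m}{d-k}$. This is the only place the hypothesis $d \ge k$ enters, used to ensure $d-k \ge 0$ is a legitimate degree bound. Now, assuming for contradiction that $B^{(m)}_{d,k}(\beta)=\Pmd{m}{k}$, for every $g \in \Pmd{m}{k}$ we have $\beta g \in \Pmd{m}{m-d-1+k}$, so for every $h \in \Pmd{m}{d-k}$ the above duality gives $\langle \beta g,h\rangle = 0$. Using the pointwise identity $g(x)h(x)=(gh)(x)$, this rewrites as $\langle \beta, gh\rangle=0$.

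The remaining step is the simple observation that products $gh$ with $g \in \Pmd{m}{k}$ and $h \in \Pmd{m}{d-k}$ span $\Pmd{m}{d}$: any monomial of degree at most $d$ can be split as a monomial of degree at most $k$ times one of degree at most $d-k$, again exploiting $d \ge k$. By linearity, $\langle \beta,p\rangle=0$ for every $p \in \Pmd{m}{d}$, so $\beta \in \Pmd{m}{d}^{\bot}=\Pmd{m}{m-d-1}$, delivering the contradiction. I do not anticipate any genuine obstacle here; the argument is a one-shot duality computation, and the only subtlety is tracking the degree arithmetic and verifying the spanning claim $\Pmd{m}{d}=\linspan\{gh : g \in \Pmd{m}{k}, h \in \Pmd{m}{d-k}\}$, which holds precisely under $d \ge k$.
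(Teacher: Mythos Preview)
Your argument is correct. The paper itself does not prove this claim; it simply imports it from \cite{dinurguruswamishortcode}, so there is no in-paper proof to compare against. Your duality argument---showing that $B^{(m)}_{d,k}(\beta)=\Pmd{m}{k}$ forces $\langle\beta,gh\rangle=0$ for all $g\in\Pmd{m}{k}$, $h\in\Pmd{m}{d-k}$, hence $\beta\in\Pmd{m}{d}^{\bot}=\Pmd{m}{m-d-1}$---is exactly the natural proof and matches the standard reasoning in the cited reference.
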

\begin{claim}
  For all $d \ge k$ and $40<D<2^d$, $\Phi_{d,k}(D) \ge \Phi_{d-1,k}(D/4)+\phi_{d-1,k-1}(D/4)$.
\end{claim}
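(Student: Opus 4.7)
The plan is to fix $m > d$ and $\beta \in \Pm{m}$ realizing the minimum in the definition of $\Phi_{d,k}(D)$, with $\Delta(\beta, \Pmd{m}{m-d-1}) \ge D$, and to lower-bound the codimension of $B^{(m)}_{d,k}(\beta)$ inside $\Pmd{m}{k}$ by reducing to two independent $(m-1)$-variable instances via a split in a single variable $x_m$.

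Write $\beta = \beta_0 + x_m \beta_1$ with $\beta_0, \beta_1 \in \Pm{m-1}$, and any $g \in \Pmd{m}{k}$ as $g = g_0 + x_m g_1$ with $g_0 \in \Pmd{m-1}{k}$ and $g_1 \in \Pmd{m-1}{k-1}$, using the direct-sum decomposition $\Pmd{m}{k} \simeq \Pmd{m-1}{k} \oplus x_m \Pmd{m-1}{k-1}$. Exploiting $x_m^2 = x_m$ over $\F{2}$,
\[ \beta g = \beta_0 g_0 + x_m\bigl(\beta_1 g_0 + (\beta_0 + \beta_1) g_1\bigr). \]
A polynomial in $\Pmd{m}{m-d-1+k}$ splits into an $x_m$-free part of degree at most $m-d-1+k$ and an $x_m$-coefficient of degree at most $m-d-2+k$, so $\beta g \in \Pmd{m}{m-d-1+k}$ is equivalent to
\[ \beta_0 g_0 \in \Pmd{m-1}{m-d-1+k}, \qquad \beta_1 g_0 + (\beta_0+\beta_1) g_1 \in \Pmd{m-1}{m-d-2+k}. \]
The first is exactly $g_0 \in B^{(m-1)}_{d-1,k}(\beta_0)$, and the second, viewed as an affine condition on $g_1$ for each fixed $g_0$, has solution set either empty or a coset of $B^{(m-1)}_{d-1,k-1}(\beta_0+\beta_1)$. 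Because the first condition depends only on $g_0$ while the second then restricts $g_1$ given $g_0$, a standard rank calculation yields
\[ \codim\bigl(B^{(m)}_{d,k}(\beta)\bigr) \ge \codim\bigl(B^{(m-1)}_{d-1,k}(\beta_0)\bigr) + \codim\bigl(B^{(m-1)}_{d-1,k-1}(\beta_0+\beta_1)\bigr). \]

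To apply the induction hypothesis and convert these codimensions into $\Phi_{d-1,k}(D/4)$ and $\Phi_{d-1,k-1}(D/4)$, I need the distance guarantees $\Delta(\beta_0, \Pmd{m-1}{m-d-1}) \ge D/4$ and $\Delta(\beta_0 + \beta_1, \Pmd{m-1}{m-d-2}) \ge D/4$. These are obtained from the hypothesis on $\beta$: any $q \in \Pmd{m}{m-d-1}$ decomposes as $q_0 + x_m q_1$ with $q_0 \in \Pmd{m-1}{m-d-1}$, $q_1 \in \Pmd{m-1}{m-d-2}$, and the Hamming distance splits as $\Delta(\beta, q) = \Delta(\beta_0, q_0) + \Delta(\beta_0+\beta_1, q_0+q_1)$ across the two hyperplanes $x_m = 0$ and $x_m = 1$. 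A case analysis, possibly composed with a permutation of coordinates (the minimum defining $\Phi_{d,k}(D)$ is taken over all $m$ and $\beta$), forces both pieces to carry at least a $D/4$-fraction of the distance; the slack $D > 40$ provides the necessary room for this factor-$4$ loss.

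The main obstacle I anticipate is precisely this last step: arranging both $\beta_0$ and $\beta_0 + \beta_1$ to receive $D/4$ distance lower bounds \emph{simultaneously} from a single additive distance hypothesis on $\beta$. A priori, essentially all of $\Delta(\beta, \Pmd{m}{m-d-1})$ could sit in one of the two pieces, so a careful choice of splitting variable — or a swap-and-shift argument exploiting freedom in selecting $\beta$ among its equivalence class — is required. Once the distance redistribution is in place, the algebraic $x_m$-expansion and the additivity of codimensions are straightforward, and combining with the inductive $\Phi_{d-1,k}(D/4)$ and $\Phi_{d-1,k-1}(D/4)$ bounds finishes the claim.
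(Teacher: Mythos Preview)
The paper does not give its own proof of this claim; it simply cites it from Dinur--Guruswami~\cite{dinurguruswamishortcode}. Your overall plan --- split on a variable, exploit the direct-sum $\Pmd{m}{k} \simeq \Pmd{m-1}{k} \oplus x_m \Pmd{m-1}{k-1}$, and add the two resulting codimensions --- is exactly the Dinur--Guruswami strategy, and your algebraic computation of $\beta g$ together with the ensuing codimension inequality
\[
\codim\bigl(B^{(m)}_{d,k}(\beta)\bigr) \ge \codim\bigl(B^{(m-1)}_{d-1,k}(\beta_0)\bigr) + \codim\bigl(B^{(m-1)}_{d-1,k-1}(\beta_0+\beta_1)\bigr)
\]
are correct.

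However, the step you yourself flag as the ``main obstacle'' is the entire substance of the claim, and your proposal does not actually resolve it. Two concrete points:

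\emph{First}, a small index slip: to invoke $\Phi_{d-1,k-1}(D/4)$ you need $\Delta(\beta_0+\beta_1,\Pmd{m-1}{m-d-1}) \ge D/4$, not distance from $\Pmd{m-1}{m-d-2}$. The relevant threshold is $(m-1)-(d-1)-1 = m-d-1$ for \emph{both} pieces.

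\emph{Second}, and more seriously, the suggested ``case analysis, possibly composed with a permutation of coordinates'' does not work as stated. The additive identity you wrote, $\Delta(\beta,q)=\Delta(\beta|_{x_i=0},q|_{x_i=0})+\Delta(\beta|_{x_i=1},q|_{x_i=1})$, only tells you that for the optimal $q$ the \emph{sum} of restricted distances is at least $D$; it does not force both individual distances to $\Pmd{m-1}{m-d-1}$ to be large for any single coordinate $i$, since the two sides are minimised over \emph{independent} low-degree approximants. Producing a coordinate for which both restrictions are simultaneously $D/4$-far is a genuine lemma in Dinur--Guruswami: roughly, if every coordinate fails (one side always close), one patches the local low-degree approximants into a global degree-$(m{-}d{-}1)$ polynomial within distance $<D$ of $\beta$, contradicting the hypothesis; the constant $40$ enters exactly in bounding the error of this patching. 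You have correctly located the crux but not supplied the mechanism that overcomes it.
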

For $D=2^{d-4}=4^{d/2-2}$ and $k=d/2$, applying the above for a depth of $d/2-4$, reducing $D$ from
$4^{d/2-2}$ to $16$, we have $\Phi_{d,d/2}(2^{d-4}) \ge 2^{d/2-4}$.
This gives the following theorem.
\begin{theorem}\label{thm:maincodim}
  For all integers $m,d$ such that $m>d>0$ and $4|d$, if $\beta:\F{2}^m \to \F{2}$
  has distance more than $2^{d-4}$ from $\Pmd{m}{m-d-1}$, then the subspace
  $B^{(m)}_{d,d/2}(\beta)$ (as a subspace of $\Pmd{m}{d/2}$) has codimension at least
  $2^{d/2-4}$.
\end{theorem}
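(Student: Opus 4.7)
The plan is to derive the theorem directly from the recursive lower bound on $\Phi_{d,k}(D)$ established by the two preceding claims. By the definition of $\Phi$, any lower bound on $\Phi_{d,d/2}(2^{d-4})$ is also a lower bound on the codimension of $B^{(m)}_{d,d/2}(\beta)$ in $\Pmd{m}{d/2}$ for every $\beta \in \Pm{m}$ with $\Delta(\beta,\Pmd{m}{m-d-1}) > 2^{d-4}$ (since distance $> 2^{d-4}$ trivially implies distance $\ge 2^{d-4}$). Thus it suffices to show $\Phi_{d,d/2}(2^{d-4}) \ge 2^{d/2-4}$.

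My approach is to unroll the recursive inequality $\Phi_{d',k'}(D') \ge \Phi_{d'-1,k'}(D'/4) + \Phi_{d'-1,k'-1}(D'/4)$ as a binary tree rooted at the triple $(d,\,d/2,\,4^{d/2-2})$. Each application decreases $d'$ by one, divides $D'$ by four, and branches into a $k'$ that either stays fixed or decreases by one. After $j$ levels, every node therefore has $d' = d-j$, $D' = 2^{d-4-2j}$, and $k' = d/2-i$ for some $0 \le i \le j$. I would perform exactly $d/2-4$ levels of unrolling, reducing $D'$ from $4^{d/2-2}$ to $4^2=16$ and producing $2^{d/2-4}$ leaves, each of shape $(d/2+4,\,d/2-i,\,16)$ with $0 \le i \le d/2-4$.

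The bookkeeping step is to verify that the hypotheses of the recursive claim, $d' \ge k'$ and $40 < D' < 2^{d'}$, hold at every internal node, and that the hypotheses of the base claim hold at every leaf. The inequality $D' < 2^{d'}$ simplifies to $j > -4$, which is always true; the inequality $D' > 40$ holds whenever $j \le (d-10)/2$, and the last recursive application happens at $j = d/2-5$, where $D' = 2^6 = 64 > 40$; and $d' - k' = d/2 - j + i \ge 0$ since $j \le d/2-5$ and $i \ge 0$. At each leaf we have $d' - k' = 4+i \ge 4 \ge 0$ and $D' = 16 \ge 1$, so the base claim gives $\Phi_{d/2+4,\,d/2-i}(16) \ge 1$. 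Summing these leaf contributions over the tree yields $\Phi_{d,d/2}(2^{d-4}) \ge 2^{d/2-4}$, which is the theorem. There is no real obstacle beyond tracking the parameters: the algebraic and combinatorial content is fully encapsulated in the two preceding claims, and the proof of the theorem itself is a clean induction of depth $d/2-4$ on the recursion tree.
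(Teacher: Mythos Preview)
Your proposal is correct and follows essentially the same approach as the paper: unroll the recursion $\Phi_{d',k'}(D') \ge \Phi_{d'-1,k'}(D'/4) + \Phi_{d'-1,k'-1}(D'/4)$ for depth $d/2-4$ starting from $\Phi_{d,d/2}(2^{d-4})$, reducing $D$ from $4^{d/2-2}$ down to $16$, and then apply the base claim at each of the $2^{d/2-4}$ leaves. The paper states this in one line without verifying the side conditions, whereas you carefully check that $d' \ge k'$ and $40 < D' < 2^{d'}$ hold at every internal node and that the base claim applies at every leaf; this extra bookkeeping is correct and welcome, but the underlying argument is identical.
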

We remark that Dinur and Guruswami used different degree parameters 
in \cite{dinurguruswamishortcode} for their application. Otherwise, the above
theorem is the same as in \cite{dinurguruswamishortcode}.

We are now ready to prove the main theorem of this section.
\begin{theorem}\label{thm:lowdegreebias}
  Let $\beta:\F{2}^m \to \F{2}$ be a polynomial with distance more than $2^{d-4}$
  from $\Pmd{m}{m-d-1}$. Let $t \in \N$ and $A_1,\cdots,A_t:\Pmd{m}{d/2} \to \F{2}$
  be some arbitrary $t$ functions. Let $\mu$ be the uniform distribution
  on $\Pmd{m}{d/2}$. Then
  \begin{align*}
    &~ \E_{g,h \sim \mu}\left[ \chi_{\beta}(gh)\cdot (-1)^{\sum_{i=1}^{t}A_i(g)A_i(h)} \right] \\
    =&~
    \E_{g,h \sim \mu}\left[ (-1)^{\langle \beta g,\beta h \rangle+\sum_{i=1}^{t}A_i(g)A_i(h)} \right]
    \le 2^{-(2^{d/2-4}-t)/2}\,.
  \end{align*}
\end{theorem}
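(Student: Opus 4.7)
The plan is to interpret the exponent as $B(g, h) + \langle \Psi(g), \Psi(h)\rangle$, where $V := \Pmd{m}{d/2}$, the symmetric bilinear form $B : V \times V \to \F{2}$ is $B(g, h) := \sum_{x \in \F{2}^m} \beta(x) g(x) h(x)$, and $\Psi : V \to \F{2}^t$ is $\Psi(g) := (A_1(g), \ldots, A_t(g))$. The radical of $B$ is exactly $\{g \in V : \beta g \in \Pmd{m}{m - d/2 - 1}\} = B^{(m)}_{d, d/2}(\beta)$, since $\Pmd{m}{d/2}^\bot = \Pmd{m}{m - d/2 - 1}$. By Theorem~\ref{thm:maincodim}, this radical has codimension at least $r := 2^{d/2-4}$ in $V$, so the linear map $g \mapsto \Lambda_g$ defined by $\Lambda_g(h) := B(g, h)$ sends $V$ onto a set $L$ of linear functionals on $V$ of size $2^r$, and $\Lambda_g$ is uniformly distributed over $L$ when $g$ is uniform in $V$.

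Next I would decouple the quadratic perturbation from the bilinear form by Fourier expansion on $\F{2}^t \times \F{2}^t$. The identity
\[
(-1)^{\langle u, v\rangle} = 2^{-t} \sum_{S, T \subseteq [t]} (-1)^{|S \cap T|} \chi_S(u) \chi_T(v),
\]
where $\chi_S(u) := (-1)^{\sum_{i \in S} u_i}$, lets us write $Z = 2^{-t} \sum_{S, T} (-1)^{|S \cap T|} T_{S, T}$ with
\[
T_{S, T} := \E_{g, h}\bigl[(-1)^{B(g, h)} \chi_S(\Psi(g)) \chi_T(\Psi(h))\bigr].
\]
For fixed $T$, let $F_T(h) := \chi_T(\Psi(h))$, a $\pm 1$-valued function on $V$. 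Integrating out $h$ first gives $T_{S, T} = \E_g[\chi_S(\Psi(g)) \widehat{F_T}(\Lambda_g)]$, where $\widehat{F_T}(\Lambda) := \E_h[F_T(h) (-1)^{\Lambda(h)}]$ is the Fourier coefficient of $F_T$ at the additive character indexed by $\Lambda$. Cauchy-Schwarz yields $|T_{S, T}|^2 \le \E_g \widehat{F_T}(\Lambda_g)^2$, and since $\Lambda_g$ is uniform on $L$, Parseval's identity gives
\[
\E_g \widehat{F_T}(\Lambda_g)^2 = 2^{-r} \sum_{\Lambda \in L} \widehat{F_T}(\Lambda)^2 \le 2^{-r} \|F_T\|_2^2 = 2^{-r},
\]
so $|T_{S, T}| \le 2^{-r/2}$.

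Combining the $4^t$ terms by Cauchy-Schwarz on the outer sum gives $|Z|^2 \le 2^{-2t} \cdot 4^t \sum_{S, T} T_{S, T}^2 \le 4^t \cdot 2^{-r}$, already yielding $|Z| \le 2^{-(r - 2t)/2}$. To reach the sharper $2^{-(r - t)/2}$ claimed in the theorem, I would refine the termwise bounds: when $S = \emptyset$ or $T = \emptyset$, the inner expectation collapses (e.g.\ $\E_h[(-1)^{B(g, h)}] = \mathbf{1}[g \in \text{radical of } B]$), giving the stronger $|T_{\emptyset, T}|, |T_{S, \emptyset}|, |T_{\emptyset, \emptyset}| \le 2^{-r}$; only the $(2^t - 1)^2$ ``genuinely quadratic'' terms contribute at level $2^{-r/2}$, and these can be bundled together via the second moment identity $\sum_{S, T} T_{S, T}^2 = 4^t \E_{g, g', h, h'}[(-1)^{B(g, h) + B(g', h')} \mathbf{1}[\Psi(g) = \Psi(g')] \mathbf{1}[\Psi(h) = \Psi(h')]]$, which after the substitution $g' = g + a$, $h' = h + b$ and $B(g, h) + B(g + a, h + b) = B(g, b) + B(a, h) + B(a, b)$, is bounded using $\E_{a}[(-1)^{B(a, \cdot)}] = \mathbf{1}[{\cdot} \in \text{radical}]$ with total weight $2^{-r}$. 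The main obstacle is not the rank bound, which we invoke as a black box from Theorem~\ref{thm:maincodim}, but rather the careful accounting of how the $4^t$ Fourier terms combine; the key saving is that Parseval on $V$ only charges one factor of $2^{-r}$ per term regardless of how adversarial $\Psi$ is.
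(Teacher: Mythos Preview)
Your route is genuinely different from the paper's, and your core argument is correct as far as it goes: the Fourier expansion of $(-1)^{\langle \Psi(g),\Psi(h)\rangle}$ together with Cauchy--Schwarz and Parseval cleanly yields $|Z|\le 2^{-(r-2t)/2}$. The problem is the last step, where you try to recover the missing factor $2^{t/2}$ to reach the stated bound $2^{-(r-t)/2}$. Your second-moment identity is correct, but the proposed bound ``using $\E_a[(-1)^{B(a,\cdot)}]=\mathbf{1}[\cdot\in\text{radical}]$'' does not go through: in
\[
\sum_{S,T}T_{S,T}^2 \;=\; 4^t\,\E_{g,a,h,b}\Bigl[(-1)^{B(g,b)+B(a,h)+B(a,b)}\,\mathbf{1}[\Psi(g)=\Psi(g{+}a)]\,\mathbf{1}[\Psi(h)=\Psi(h{+}b)]\Bigr],
\]
the indicator $\mathbf{1}[\Psi(g)=\Psi(g{+}a)]$ couples $g$ and $a$, so you cannot average freely over $a$ to collapse the character. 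Absent that step, your argument only shows $\sum_{S,T}T_{S,T}^2\le 4^t\cdot 2^{-r}$ (each term bounded by $2^{-r}$), and after Cauchy--Schwarz on the outer sum you are back to $|Z|\le 2^{-(r-2t)/2}$. The underlying loss is that bounding the $4^t$ terms individually ignores the cancellation coming from the coefficient matrix $\bigl((-1)^{|S\cap T|}\bigr)_{S,T}$, which has operator norm $2^{t/2}$ rather than Frobenius norm $2^t$.

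The paper exploits exactly this operator-norm saving, but without any Fourier expansion. It observes that $(-1)^{B(g,h)}$ depends only on the cosets $g_0,h_0\in\mathcal{W}:=V/B^{(m)}_{d,d/2}(\beta)$, sets $u_{(g_0,a)}:=\Pr_g[g\in g_0+\text{radical},\ \Psi(g)=a]$, and writes $Z=u^{T}Mu$ for the $2^{w+t}\times 2^{w+t}$ matrix $M_{(g_0,a),(h_0,b)}=(-1)^{B(g_0,h_0)+\langle a,b\rangle}$. Then $M=W\otimes H$ with $WW^{T}=2^{w}I$ and $HH^{T}=2^{t}I$, so all eigenvalues of $M$ have absolute value $2^{(w+t)/2}$; and since every entry of $u$ is at most $2^{-w}$ while $\|u\|_1=1$, one has $\|u\|_2^2\le 2^{-w}$. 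This gives $|Z|\le 2^{(w+t)/2}\cdot 2^{-w}=2^{-(w-t)/2}$ directly. If you want to salvage your approach, the fix is to replace the termwise bound on $\sum_{S,T}(-1)^{|S\cap T|}T_{S,T}$ by a spectral bound using $\|((-1)^{|S\cap T|})\|_{\mathrm{op}}=2^{t/2}$, which amounts to the same tensor argument.
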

\begin{proof}
  Denote by $\mathcal{W}$ the quotient space $\Pmd{m}{d/2} / B^{(m)}_{d,d/2}(\beta)$.
  By Theorem \ref{thm:maincodim}, we have 
  $w:=\lindim(\mathcal{W})=\codim(B^{(m)}_{d,d/2}(\beta)) \ge 2^{d/2-4}$.

  The expectation we are considering can be written as
  \begin{align}
    \E_{g_0,h_0 \sim \mathcal{W}}
    \E_{\substack{g: g-g_0 \in B^{(m)}_{d,d/2}(\beta) \\ 
    h:h-h_0 \in B^{(m)}_{d,d/2}(\beta)}}
    \left[ (-1)^{\langle \beta g,\beta h\rangle+\sum_{i=1}^{t}A_i(g)A_i(h)} \right]\,.
    \label{eq:lowdegexp1}
  \end{align}

  Consider $f \in \Pmd{m}{d/2}$ and $g \in B^{(m)}_{d,d/2}(\beta)$. 
  We have $\langle \beta f, \beta g \rangle =\langle \beta g, f\rangle=0$,
  because $f \in \Pmd{m}{d/2}$ and $\beta g \in \Pmd{m}{m-d/2-1}=\Pmd{m}{d/2}^{\bot}$.
  This allows us to define ``dot product'' between elements in $\mathcal{W}$. In particular,
  for any $f,f',g,g' \in \Pmd{m}{d/2}$ such that $f-f',g-g' \in B^{(m)}_{d,d/2}(\beta)$, we have
  \begin{align*}
    &~ \langle \beta f',\beta g' \rangle \\
    =&~ \langle \beta f',\beta g' \rangle+
    \langle \beta(f-f'),\beta g' \rangle + \langle \beta f',\beta (g-g') \rangle
    +\langle \beta(f-f'),\beta(g-g') \rangle \\
    =&~ \langle \beta f,\beta g \rangle \,.
  \end{align*}
  This means that taking any representative from $\mathcal{W}$ will give the same result
  for this ``dot product''.

  We can thus further rewrite the expectation as
  \begin{equation}
    (\ref{eq:lowdegexp1}) = 
    \E_{g_0,h_0 \sim \mathcal{W}}\left[
    (-1)^{\langle \beta g_0, \beta h_0 \rangle}
    \E_{\substack{g:g-g_0 \in B^{(m)}_{d,d/2}(\beta) \\ 
    h:h-h_0 \in B^{(m)}_{d,d/2}(\beta)}}
    \left[ (-1)^{\sum_{i=1}^{t}A_i(g)A_i(h)} \right]\right]\,.
    \label{eq:lowdegexp2}
  \end{equation}

  Consider the matrix $M \in \R^{2^{w+t} \times 2^{w+t}}$, where
  the rows and columns are indexed by a pair $(f_0,a)$ where $f_0 \in \mathcal{W}$
  and $a \in \F{2}^t$, and the entries are 
  \[
  M_{(f_0,a),(g_0,b)} = (-1)^{\langle \beta f_0, \beta g_0 \rangle+\sum_{i=1}^{t} a_i b_i}\,.
  \]
  Define vector $u \in \R^{2^{w+t}}$ as
  \[
  u_{f_0,a} = \Pr_{g \sim \Pmd{m}{d/2}}
  \left[
  g-f_0 \in B^{(m)}_{d,d/2}(\beta)
  \land \forall i \in [t], A_i(\mathbf(g))=a_i
  \right]\,.
  \]
  Since in (\ref{eq:lowdegexp2}), $g$ and $h$ are sampled independently,
  we can verify that the expectation in (\ref{eq:lowdegexp2}) is exactly $u^T M u$.
  Moreover, since $g$ is chosen uniformly random from $\Pmd{m}{d/2}$, the probability
  that $g-f_0 \in B^{(m)}_{d,d/2}(\beta)$ is exactly $2^{-w}$, thus
  all entries in $u$ have value at most $2^{-w}$, and therefore
  $\|u\|_2 \le 2^{-w/2}$.

  We finish the proof by studying the spectrum of $M$.
  Observe that $M$ can be written as the tensor product of
  a $2^w \times 2^w$ matrix and a $2^t \times 2^t$ matrix as follows.
  Define $W \in \R^{2^w \times 2^w}$ as
  \[
  W_{f_0,g_0} = (-1)^{\langle \beta f_0, \beta g_0 \rangle}\,,
  \]
  for $f_0,g_0 \in \mathcal{W}$. 
  Define $H \in \R^{2^t \times 2^t}$ as
  \[
  H_{a,b} = (-1)^{\sum_{i=1}^{t} a_i b_i}\,.
  \]
  We can easily verify that $M=W \otimes H$.

  The matrix $H$ satisfies $H H^T = 2^t \cdot I$, where $I$ is the identity matrix,
  therefore we have that the eigenvalues of $H$ all have absolute value exactly $2^{t/2}$.
  For the spectrum of $W$, let $f_0,g_0 \in \mathcal{W}$ be two rows of $W$.
  Consider the dot product of row $f_0$ and $g_0$ of matrix $W$
  \[
  W_{f_0}^T W_{g_0} 
  = \sum_{h_0 \in \mathcal{W}} (-1)^{\langle \beta(f_0+g_0),\beta h_0 \rangle}
  = \sum_{h_0 \in \mathcal{W}} (-1)^{\langle \beta(f_0+g_0), h_0 \rangle}\,.
  \]
  The above sum is $2^w$ if $\beta(f_0+g_0) \in \Pmd{m}{m-d/2-1}$,
  or in other words $f_0$ and $g_0$ belong to the same coset in
  $\mathcal{W}$,
  and otherwise the sum is $0$. Hence we have $W W^T = 2^w \cdot I$, and thus
  the eigenvalues of $W$ all have absolute value $2^{w/2}$.
  We conclude that the tensor product matrix $M=W \otimes H$ has eigenvalues with absolute value
  $2^{(w+t)/2}$.

  We can now upper-bound the absolute value of the expectation by
  $|u^T M u| \le 2^{(w+t)/2} \cdot \|u\|_2^2 = 2^{-(w-t)/2}$.
\end{proof}

\section{Superposition Hardness for Gap TSA}\label{sec:octsa}
Let $b$ be some large integer parameter.
The Tri-Sum-And (\tsa) predicate is a predicate
on $5$ $\F{2}$-variables defined as follows
\[
\tsa(x_1,\cdots,x_5) = 1+x_1+x_2+x_3 + x_4 x_5\,.
\]
From the definition, we can see that \tsa instances are
systems of quadratic equations, each involving exactly $5$ 
$\F{2}$-variables.

The predicate was studied in \cite{hkperfect} as a starting
point of an efficient PCP construction.
For the predicate itself, H{\aa}stad and Khot proved that
it is approximation resistant on satisfiable instances.

In this section, we prove a superposition hardness result for \tsa.
\begin{theorem}\label{thm:octsa}
  There is a reduction that takes as input a \ksat{3} instance of
  size $n$, and outputs a \tsa instance of size $n^{O(b\log\log n)}$
  with the following properties:
  \begin{itemize}
    \item If the \ksat{3} instance is satisfiable, then there is an 
      assignment that satisfies all TSA constraints.
    \item If the \ksat{3} instance is unsatisfiable, then for any
      odd integer $t<(\log n)^{b}$, and any $t$ assignments,
      at most a $15/16$ fraction of the \tsa constraints
      are satisfied in superposition.
  \end{itemize}
  The reduction runs in time $n^{O(b \log\log n)}$.
\end{theorem}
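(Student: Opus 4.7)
I would follow the standard outer/inner PCP pattern: use the \LC instance from Theorem~\ref{thm:3satlc} as the outer layer, and for each right vertex $v\in V$ attach an \LDLC table $A_v:\Pmd{3r}{d}\to\F{2}$ as inner encoding, folded and folded over the subspace $J=J(f^v_1,\dots,f^v_r)$ generated by the local clause polynomials. I would set $r=\Theta(b\log\log n)$ and $d=\Theta(b\log\log n)$ with $2^{d/2-4}>t=(\log n)^b$; these choices keep $\binom{3r}{\le d}=O(r\log n)$, so the final \tsa instance has $n^{O(b\log\log n)}$ variables and constraints, matching the outer size. For every $v\in V$, every $f_1,f_2\in\Pmd{3r}{d}$ and every $g_1,g_2\in\Pmd{3r}{d/2}$ I add the \tsa-shaped equation
\[
A_v(f_1)+A_v(f_2)+A_v(f_1+f_2+g_1g_2+1)+A_v(g_1)\,A_v(g_2)=1\,.
\]
Completeness is immediate: given a satisfying assignment $\alpha$ of $\phi$, set each $A_v$ to the \LDLC encoding of $\alpha$ restricted to the $3r$ clause variables at $v$; folding over $J$ is consistent because every $f^v_i$ vanishes on $\alpha$, and each equation collapses to $1=1$ by linearity of \LDLC and the identity $g_1(\alpha)g_2(\alpha)=(g_1g_2)(\alpha)$.

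\textbf{Soundness.} Suppose $t$ assignments (i.e., tables $A_v^{(1)},\dots,A_v^{(t)}$ at each $v$) satisfy in superposition more than $15/16$ of the \tsa constraints. Because $t$ is odd, by the equivalence from Section~\ref{sec:oddcovprelim} this equals the odd-cover rate. Writing $\bar A_v=\sum_{l} A_v^{(l)}\bmod 2$ and $C_v=(-1)^{\bar A_v}$, using folding to absorb the constant and arithmetizing, the induced bias must satisfy
\[
\E_v\,\E_{f_1,f_2,g_1,g_2}\!\left[C_v(f_1)C_v(f_2)C_v(f_1+f_2+g_1g_2)\,(-1)^{\sum_{l} A_v^{(l)}(g_1)A_v^{(l)}(g_2)}\right]>\tfrac{7}{8}\,.
\]
Expanding each $C_v$ in the character basis $\Lambda_{3r,d}$ and using standard BLR-type orthogonality to integrate out $f_1,f_2$ collapses the left side to $\E_v\sum_\beta \widehat C_{v,\beta}^{\,3}\,E_v(\beta)$, where
\[
E_v(\beta)=\E_{g_1,g_2\sim\Pmd{3r}{d/2}}\!\left[\chi_\beta(g_1g_2)\,(-1)^{\sum_{l} A_v^{(l)}(g_1)A_v^{(l)}(g_2)}\right]\,.
\]
This inner expectation is precisely what Theorem~\ref{thm:lowdegreebias} bounds: for any $\beta$ at distance greater than $2^{d-4}$ from $\Pmd{3r}{3r-d-1}$, one has $|E_v(\beta)|\le 2^{-(2^{d/2-4}-t)/2}=n^{-\omega(1)}$. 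Consequently a non-negligible share of the $\widehat C_{v,\beta}^{\,3}$-mass must sit on characters $\beta$ whose minimum-support representative has weight at most $2^{d-4}$.

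\textbf{Decoding and main obstacle.} For each such heavy low-weight $\beta$, Lemma~\ref{lem:shortcodeodd} (folding) guarantees $\supp(\beta)$ is non-empty of odd size, and Lemma~\ref{lem:shortcodefoldsubspace} (folding over $J$) guarantees that with probability bounded away from zero every $x\in\supp(\beta)$ satisfies all local clauses $f^v_i(x)=0$. Sampling $\beta$ proportional to $\widehat C_{v,\beta}^{\,2}$ (which dominates the cube mass since $|\widehat C_{v,\beta}|\le 1$) and then a uniform $x\in\supp(\beta)$ list-decodes $v$ into valid assignments of $\bool^{3r}$; projecting via $\pi_{uv}$ and averaging over \LC edges produces a labeling satisfying more than $2^{-\varepsilon_0 r}$ of the \LC, contradicting Theorem~\ref{thm:3satlc}. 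The only nontrivial step is the Fourier estimate: the aggregate quadratic $\sum_{l=1}^{t} A_v^{(l)}(g_1)A_v^{(l)}(g_2)$ does not split across the $t$ superposed tables, so the original Dinur--Guruswami Reed--Muller bias bound does not directly apply. Theorem~\ref{thm:lowdegreebias} is precisely the generalization that absorbs this $t$-term provided $t\ll 2^{d/2-4}$, and balancing $r$, $d$ and $t$ so that it stays nontrivial while the reduction size remains $n^{O(b\log\log n)}$ is the main parameter-tuning hurdle.
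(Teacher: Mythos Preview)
Your proposal has a genuine gap: the \tsa constraints you write down are entirely local to a single right vertex $v$. All five queries in
\[
A_v(f_1)+A_v(f_2)+A_v(f_1+f_2+g_1g_2+1)+A_v(g_1)\,A_v(g_2)=1
\]
go to the same table $A_v$, and the projection $\pi_{uv}$ never appears in the constraint. Consequently the test cannot enforce any consistency across \LC edges, and your decoding step (``projecting via $\pi_{uv}$ and averaging over \LC edges produces a labeling satisfying more than $2^{-\varepsilon_0 r}$ of the \LC'') is unjustified. Concretely: even when $\phi$ is unsatisfiable, every block of $r$ clauses at a vertex $v$ is individually satisfiable, so one can pick for each $v$ an arbitrary locally satisfying $a_v\in\bool^{3r}$ and set $A_v$ to be its honest \LDLC encoding. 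By your own completeness calculation, every one of your constraints is then satisfied by this single assignment ($t=1$), yet the \LC value can be $2^{-\Omega(r)}$. So the reduction as written has no soundness at all.

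The paper's test avoids this by querying \emph{both} sides of a \LC edge. It keeps a table $f_u:\Pmd{r}{d}\to\boolean$ at each left vertex as well, samples an edge $(u,v)$, and sets the third linear query to $z=x\circ\pi+y+vw$ with $x$ queried in $f_u$ and $y,z,v,w$ queried in $g_v$. In the Fourier expansion this forces $\alpha=\pi_2(\beta)$, so the surviving term is $\sum_\beta \widehat{f'}_{\pi_2(\beta)}\widehat{g'}_\beta^{\,2}\,E(\beta)$ rather than your $\sum_\beta \widehat{C}_{v,\beta}^{\,3}\,E_v(\beta)$; after Cauchy--Schwarz this yields mass on pairs $(\pi_2(\beta),\beta)$, which is exactly what is needed to decode a labeling that satisfies a $\Omega(2^{-d})$ fraction of \LC edges. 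Your use of Theorem~\ref{thm:lowdegreebias} to kill the high-weight $\beta$'s is the right idea, and the parameter balancing is essentially correct, but without the cross-edge query the argument cannot close.
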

\begin{proof}
  The reduction follows a similar approach as a typical inapproximability
  hardness reduction.

  Given a \ksat{3} instance, we apply Theorem \ref{thm:3satlc}
  with soundness $1/(1000(\log n)^{2b})$ to get a \LC instance.
  This gives the parameter $r=(2b \log\log n+O(1))/\varepsilon_0$, where $\varepsilon_0$ is
  some universal constant.
  The vertex set of the bipartite graph has size $n^{O(b \log\log n)}$,
  and the label sets are $L=\bool^r$ and $R=\bool^{3r}$.
  Let $d=\Theta(b \log\log n)$ be such that $2^{d/2-4} \approx (\log n)^b+3$.
  This implies also that $2^{d} \approx 256 (\log n)^{2b}$.

  For each $u \in U$ and $v \in V$, we expect functions
  $f_u :\Pmd{r}{d} \to \boolean$ and $g_v:\Pmd{3r}{d} \to \boolean$.
  We assume that all functions are folded over constant.
  The entries of the functions correspond to variables of some \tsa instance.
  Therefore the number of variables in the output instance is
  $n^{O(b \log\log n)} \cdot (3r)^{(1+o(1))d} = n^{O(b\log\log n)}$,
  and the number of constraints is polynomial in the number of variables.

  Consider the following test:
  \begin{enumerate}
    \item Sample random edge $e=\left\{ u_1,u_2 \right\} \sim E$.
      Let $\pi$ be the projection on the edge,
      and let $f$ and $g$ be the functions associated with $u_1$ and
      $u_2$.
    \item Sample uniformly random query
      $x \sim \Pmd{r}{d}$, $y \sim \Pmd{3r}{d}$, and $v,w \sim \Pmd{3r}{d/2}$.
    \item Construct query $z:=x \circ \pi+y+vw \in \Pmd{3r}{d}$.
    \item Accept iff 
      $f(x)g(y)g(z)(g(v) \land g(w)) = 1$,
      where $\land$ here denotes the binary operator that evaluates to
      $-1$ when both operands are $-1$, and $1$ otherwise.
  \end{enumerate}
  The completeness is straightforward. In this case, the \LC instance
  has a perfect labeling. Setting the functions to be the \LDLC
  encoding of the labels gives an assignment that satisfies all
  \tsa constraints.

  In the soundness case, there exists some $t<(\log n)^b$ assignments
  that satisfy in superposition a $15/16$ fraction of the constraints.
  That is, for each $u_1 \in U$ and $u_2 \in V$,
  there are $t$ functions that are folded over constant,
  $f^{(1)},\cdots,f^{(t)}:\Pmd{r}{d} \to \boolean$ and 
  $g^{(1)},\cdots,g^{(t)}:\Pmd{3r}{d} \to \boolean$
  such that over random sample of edges $\{u_1,u_2\}$
  and queries $x,y,z,v,w$, with probability
  at least $15/16$,
  the number of $i \in [t]$ such that
  $f^{(i)}(x) g^{(i)}(y) g^{(i)}(z) (g^{(i)}(v) \land g^{(i)}(w))=1$
  is odd.  
  By an averaging argument, we have that for at least $3/4$ of the edges,
  over random sample of queries, the above holds with probability at least $3/4$.
  Call such an edge \emph{good}.

  We assume that the functions are folded in the same way. Recall that when applying
  folding, we partition the domain of the functions into equivalence classes, define
  the function value in one of the equivalence classes, and then extend to the full
  domain by adding appropriate constants. For our reduction, we identify one equivalence
  class for each vertex, and the $t$ functions associated with it supply value
  only for that equivalence class. This is to make sure $f^{(1)},\cdots,f^{(t)}$
  and $g^{(1)},\cdots,g^{(t)}$ corresponds exactly to $t$ assignments in superposition.
  
  Fix a good edge for now, and we drop the subscripts $u_1$ and $u_2$.
  Then we have the following
  \[
  \frac{1}{2}+\frac{1}{2}\E_{x,y,z,v,w}\left[ \prod_{i=1}^{t} 
  \left( f^{(i)}(x)g^{(i)}(y)g^{(i)}(z)(g^{(i)}(v) \land g^{(i)}(w)) \right) \right]
  \ge \frac{3}{4}\,,
  \]
  or 
  \[
  \E_{x,y,z,v,w}\left[ \prod_{i=1}^{t} 
  \left( f^{(i)}(x)g^{(i)}(y)g^{(i)}(z)(g^{(i)}(v) \land g^{(i)}(w)) \right) \right]
  \ge \frac{1}{2}\,.
  \]

  Let $f'=\prod_{i=1}^{t} f^{(i)}$, and $g'=\prod_{i=1}^{t} g^{(i)}$. Since $t$ is odd,
  we have that $f'$ and $g'$ are both folded over constant. Taking the Fourier expansion
  of $f'$ and $g'$, we have the following
  \begin{align*}
    \frac{1}{2} \ \le&\  \E_{x,y,z,v,w}\left[ \prod_{i=1}^{t} 
    \left( f^{(i)}(x)g^{(i)}(y)g^{(i)}(z)(g^{(i)}(v) \land g^{(i)}(w)) \right) \right] \\
    =&\  \E\left[ f'(x) g'(y) g'(z) \prod_{i=1}^{t}(g^{(i)}(v) \land g^{(i)}(w))
    \right] \\
    =&\ 
    \sum_{\substack{\alpha \in \Lambda_{r,d} \\ \beta_1,\beta_2 \in \Lambda_{3r,d}}}
    \widehat{f'}_{\alpha}\widehat{g'}_{\beta_1}\widehat{g'}_{\beta_2} \\
    &\  \E_{x,y,z,v,w}\left[ 
    \chi_{\alpha}(x)\chi_{\beta_1}(y)\chi_{\beta_2}(x \circ \pi+y+vw)
    \prod_{i=1}^{t}(g^{(i)}(v) \land g^{(i)}(w))
    \right] \\
    =& \ 
    \sum_{\beta \in \Lambda_{3r,d}} \widehat{f'}_{\pi_2(\beta)}\widehat{g'}_{\beta}^2
    \E_{vw}\left[\chi_{\beta}(vw) \prod_{i=1}^{t}(g^{(i)}(v) \land g^{(i)}(w))\right]\,.
  \end{align*}
  Applying Cauchy-Schwarz and using Parseval, we have
  \begin{align*}
    \frac{1}{4} \ \le&\ 
    \left( \sum_{\beta \in \Lambda_{3r,d}} \widehat{g'}_{\beta}^2 \right)
    \left(  
    \sum_{\beta \in \Lambda_{3r,d}} \widehat{f'}_{\pi_2(\beta)}^2\widehat{g'}_{\beta}^2
    \E_{vw}\left[\chi_{\beta}(vw) \prod_{i=1}^{t}(g^{(i)}(v) \land g^{(i)}(w))\right]^2
    \right) \\
    =&\ 
    \sum_{\beta \in \Lambda_{3r,d}:\weight(\beta) \le 2^{d-4}} 
    \widehat{f'}_{\pi_2(\beta)}^2\widehat{g'}_{\beta}^2
    \E_{vw}\left[\chi_{\beta}(vw) \prod_{i=1}^{t}(g^{(i)}(v) \land g^{(i)}(w))\right]^2+\\
    &\  \sum_{\beta \in \Lambda_{3r,d}:\weight(\beta) > 2^{d-4}} 
    \widehat{f'}_{\pi_2(\beta)}^2\widehat{g'}_{\beta}^2
    \E_{vw}\left[\chi_{\beta}(vw) \prod_{i=1}^{t}(g^{(i)}(v) \land g^{(i)}(w))\right]^2\,.
  \end{align*}
  For the terms where $\weight(\beta)>2^{d-4}$, we apply Theorem \ref{thm:lowdegreebias} to get
  \[
  \left|\E_{vw}\left[\chi_{\beta}(vw) \prod_{i=1}^{t}(g^{(i)}(v) \land g^{(i)}(w))\right]\right|
  \le 2^{-(2^{d/2-4}-t)/2} \,,
  \]
  and therefore 
  \begin{align*}
    &\  \sum_{\beta \in \Lambda_{3r,d}:\weight(\beta) > 2^{d-4}} 
    \widehat{f'}_{\pi_2(\beta)}^2\widehat{g'}_{\beta}^2 \\
    &\  \E_{vw}\left[\chi_{\beta}(vw) \prod_{i=1}^{t}(g^{(i)}(v) \land g^{(i)}(w))\right]^2
    \le 2^{-(2^{d/2-4}-t)} < \frac{1}{8}\,.
  \end{align*}
  This gives us
  \begin{align*}
    &\  \sum_{\beta \in \Lambda_{3r,d}:\weight(\beta) \le 2^{d-4}} 
    \widehat{f'}_{\pi_2(\beta)}^2\widehat{g'}_{\beta}^2 \\
    \ge&\ 
    \sum_{\beta \in \Lambda_{3r,d}:\weight(\beta) \le 2^{d-4}} 
    \widehat{f'}_{\pi_2(\beta)}^2\widehat{g'}_{\beta}^2
    \E_{vw}\left[\chi_{\beta}(vw) \prod_{i=1}^{t}(g^{(i)}(v) \land g^{(i)}(w))\right]^2
    \ge \frac{1}{8}\,.
  \end{align*}

  Let $\{u_1,u_2\}$ be a good edge. Consider the following labeling strategy:
  for $u_1$, pick $\alpha$ with probability $\widehat{f'}_{\alpha}^2$ and pick
  a random label from $\supp(\alpha)$, and for $u_2$,
  pick $\beta$ with probability $\widehat{g'}_{\beta}^2$ and pick a random label 
  from $\supp(\beta)$. 
  The procedure is well defined because $f'$ and $g'$ are all folded, and thus
  by Lemma \ref{lem:shortcodeodd},
  $\supp(\alpha)$ and $\supp(\beta)$ are nonempty. Also, for $\beta$ such that
  $\weight(\beta) \le 2^{d-4}<2^{d-3}$, by Lemma \ref{lem:shortcodefoldsubspace},
  the assignments in $\supp(\beta)$ all satisfy the clauses in $u_2$.
  Then the probability that the labeling of $u_1$ and $u_2$ satisfies
  the projection constraint on a good edge $\{u_1,u_2\}$ is at least 
  $\frac{1}{2^{d-4}} \sum_{\beta:wt(\beta) \le 2^{d-4}} 
  \widehat{f'}_{\pi_2(\beta)}^2 \widehat{g'}_{\beta}^2 
  \ge 1/(8 \cdot 2^{d-4}) > 1/(100 (\log n)^{2b})$.
  Since there are at least a $3/4$ fraction of good edges, overall the labeling satisfies
  more than $(3/4) \cdot (1/(100 (\log n)^{2b}))>1/(1000 (\log n)^{2b})$,
  contradicting the fact that in the soundness case the \LC instance does not have
  labeling with value larger than $1/(1000(\log n)^{2b})$. This completes the proof.
\end{proof}

\section{Label Cover with Matrix Labels}
We now use Theorem \ref{thm:octsa} to construct a \LC
instance with properties similar to that in \cite{kssp}.
The proof follows closely the approach in \cite{ksspeccc}, Section 5 -- 7.

We first give an analogue of Theorem \ref{thm:octsa} over a large field
$\F{q}$ of characteristic $2$.

\begin{theorem}\label{thm:octsa-ext}
  Let $q=2^{(\log n)^{b+4}}$.
  There is a reduction that takes as input a \ksat{3} instance of
  size $n$, and outputs a system $\mathcal{I}$ of quadratic equations over $\F{q}$
  of size $n^{O(b \log\log n)}$ with the following properties:
  \begin{itemize}
    \item Each quadratic equation in $\mathcal{I}$ involves
      exactly $5$ variables.
    \item If the \ksat{3} instance is satisfiable, then there is an 
      assignment that satisfies all equations in $\mathcal{I}$.
    \item If the \ksat{3} instance is unsatisfiable, then for any
      integer $t<(\log n)^{b}$, and any $t$ assignments,
      at most a $15/16$ fraction of the equations in $\mathcal{I}$
      are satisfied in superposition.
  \end{itemize}
  The reduction runs in time $n^{O(b \log\log n)}$.
\end{theorem}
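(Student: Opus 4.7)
The plan is to lift the $\F{2}$-result of Theorem~\ref{thm:octsa} to $\F{q}$ by a field-extension reduction. I would apply Theorem~\ref{thm:octsa} with an enlarged parameter $b':=3b+\Theta(1)$ to obtain an \tsa instance $\mathcal{I}_0$ over $\F{2}$ whose superposition soundness holds for all $t'<(\log n)^{b'}$ assignments. Since every equation of $\mathcal{I}_0$ has $\F{2}$-coefficients and $\F{2}\subseteq\F{q}$, the same equations form the desired system $\mathcal{I}$ over $\F{q}$ (each still on $5$ variables). Completeness is immediate because any satisfying $\F{2}$-assignment remains satisfying when reinterpreted over $\F{q}$, so all the work is in soundness.

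For soundness I argue by contrapositive. Suppose $t<(\log n)^b$ $\F{q}$-valued assignments $a^{(1)},\dots,a^{(t)}$ satisfy more than a $15/16$ fraction of $\mathcal{I}$ in superposition over $\F{q}$; the goal is to manufacture $O(tk^2)=O((\log n)^{3b+8})$ $\F{2}$-valued assignments satisfying the same fraction in superposition over $\F{2}$, contradicting the soundness of $\mathcal{I}_0$ once $b'>3b+8$. After padding with a zero assignment if needed to make $t$ odd, form $A=\sum_{l}\bar a^{(l)}\otimes\bar a^{(l)}\in \F{q}^{(m+1)\times(m+1)}$ with $\bar a^{(l)}=(1,a^{(l)})$; then $A$ is symmetric, $A_{11}=1$, has $\F{q}$-rank at most $t$, and $\langle Q_i,A\rangle=0$ for every satisfied equation's matrix $Q_i$. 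Fix an $\F{2}$-basis $\gamma_1=1,\gamma_2,\dots,\gamma_k$ of $\F{q}$ such that Frobenius $x\mapsto x^2$ stabilizes $V:=\linspan\{\gamma_2,\dots,\gamma_k\}$, and decompose $A=\sum_r\gamma_r A^{(r)}$ into its $\F{2}$-components. Since $Q_i$ has $\F{2}$-entries and the $\gamma_r$ are $\F{2}$-linearly independent, $\langle Q_i,A^{(r)}\rangle=0$ for every $r$; expanding each rank-one summand of $A$ via the basis bounds $\rank_{\F{2}}(A^{(r)})\le tk^2$, and the Frobenius-stability of $V$ forces the pseudo-quadratic equalities $A^{(1)}_{1i}=A^{(1)}_{ii}$. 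Lemma~\ref{lem:odddecomp} applied to $A^{(1)}$ followed by Lemma~\ref{lem:rankoddcov} then produces the required $O(tk^2)$ $\F{2}$-assignments.

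The main obstacle is the pseudo-quadratic property for $A^{(1)}$: over $\F{q}$ one has $A_{ii}=A_{1i}^2$ rather than $A_{ii}=A_{1i}$, so a careless basis choice would produce a $\gamma_1$-component that is symmetric but not pseudo-quadratic, rendering Lemma~\ref{lem:odddecomp} inapplicable. Taking $V=\ker(\mathrm{tr}_{\F{q}/\F{2}})$ resolves this whenever $k$ is odd, because then $\mathrm{tr}(1)=k\equiv 1\pmod 2$ makes $V$ complementary to $\F{2}\cdot 1$, and $V$ is Frobenius-invariant automatically since the trace commutes with Frobenius. One therefore sets $q=2^k$ for $k$ the least odd integer $\ge(\log n)^{b+4}$, which changes $q$ by at most a factor of $2$ and preserves all asymptotic estimates; the reduction's runtime $n^{O(b'\log\log n)}=n^{O(b\log\log n)}$ then follows from $b'=O(b)$.
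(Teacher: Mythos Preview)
Your approach is essentially the same field–extension lifting the paper uses (it simply cites Theorems~3.4 and~5.3 of \cite{ksspeccc} for the implication ``$(t\cdot\log q)$-superposition hardness over $\F{2}$ $\Rightarrow$ $t$-superposition hardness over $\F{q}$''), and your argument is correct. Two remarks are worth making.

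First, your rank bound $\rank_{\F{2}}(A^{(r)})\le tk^2$ is looser than necessary: observe that every column of $A$ lies in the $\F{q}$-span of $\{\bar a^{(l)}\}_{l\le t}$, hence every column of each $A^{(r)}$ lies in the $\F{2}$-span of $\{\bar a^{(l,p)}\}_{l\le t,\,p\le k}$, giving $\rank_{\F{2}}(A^{(r)})\le tk$. This is what yields the $t\cdot\log q$ blow-up stated in the paper rather than your $t\cdot(\log q)^2$, and would let you take $b'\approx 2b$ instead of $3b$. Either way the running time stays $n^{O(b\log\log n)}$, so your looser bound suffices here.

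Second, your identification of the obstruction $A_{ii}=A_{1i}^2$ and the fix via the Frobenius-invariant complement $V=\ker(\mathrm{tr}_{\F{q}/\F{2}})$ (forcing $k$ odd) is exactly the right technical point, and the paper glosses over it by deferring to \cite{ksspeccc}. Your argument that $V$ is Frobenius-stable and complementary to $\F{2}\cdot 1$ precisely when $k$ is odd is correct, and the adjustment of $k$ by one is harmless for all downstream parameters.
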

The proof is almost identical to Theorem 3.4 and Theorem 5.3 of \cite{ksspeccc},
where they proved that a $(t \cdot \log q)$-superposition hardness with gap $\delta$
for systems of quadratic equations over $\F{2}$ implies
$t$-superposition hardness with the same gap for systems of quadratic equations over
$\F{q}$, where $q=2^r$ and $\F{q}$ is an extension field of $\F{2}$.

We now construct \LC with matrix labels from the above theorem.
The construction is via a Point vs. Ruled-surface test. 
The test is very similar to the one in \cite{ksspeccc}. The main difference
is here we use the low error version of the Low Degree Test,
instead of the one used by Khot and Saket in \cite{ksspeccc}. The following
theorem appears as Theorem 5.1 in \cite{arorathesis},
which follows from the works of Arora and Sudan \cite{arorasudanlowdegree}
and Rubinfeld and Sudan \cite{rubinfeldsudanlowdegree}.

\begin{theorem}\label{thm:lowdegreelowerror}
  Let $\alpha \le 10^{-4}$, and $d$ and $m$ be positive integers, and
  $\F{q}$ be a field with $q>100 d^3 m$. 
  Let $f:\F{q}^m \to \F{q}$ be a function, and for every line in $\F{q}^m$, let
  $f_l$ be a univariate degree $d$ polynomial. For a point $x$ on the line $l$,
  we denote by $f_l(x)$ the value given by $f_l$ at the point $x$.

  Suppose that over the choice of a random line $l$ and a random point
  $x \in l$, we have $\Pr_{l,x}[f(x)=f_l(x)] \ge (1-\alpha)$,
  then there is a multivariate polynomial of total degree at most $d$ which agrees
  with $f$ on at least $(1-2\alpha)$ fraction of the points.
\end{theorem}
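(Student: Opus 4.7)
The plan is to follow the self-correction paradigm of Rubinfeld and Sudan, adapted to the low-error regime handled by Arora and Sudan. The first step is to construct a corrected function $g \colon \F{q}^m \to \F{q}$ by defining $g(x)$ to be the plurality value of $f_l(x)$ taken over a uniformly random line $l$ passing through $x$. I then aim to show that (i) $g$ agrees with $f$ on at least a $(1-2\alpha)$ fraction of $\F{q}^m$, and (ii) $g$ is itself a multivariate polynomial of total degree at most $d$, so it serves as the desired polynomial.

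For (i), call a point $x$ \emph{bad} if $\Pr_{l \ni x}[f_l(x) \neq f(x)] > 1/2$. The hypothesis distribution on $(l,x)$ is equivalent to first drawing $x$ uniformly and then drawing a uniformly random line $l$ through $x$, so a Markov-type argument bounds the density of bad points by $2\alpha$. For every good point $x$ the plurality value coincides with $f(x)$ by a union bound, hence $g$ and $f$ agree on at least a $(1-2\alpha)$ fraction of $\F{q}^m$.

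For (ii), I would invoke the pairwise line-consistency argument standard in this area: sampling two random lines $l, l'$ through a common random point $x$, the probability that $f_l(x) = f_{l'}(x)$ is at least $1 - O(\alpha)$ once the acceptance hypothesis is iterated. When $q \gg d^3 m$ this is strong enough to stitch the univariate restrictions $\{f_l\}$ into a single degree-$d$ multivariate polynomial $P$ equal to $g$ everywhere. Concretely, I would first show that the restriction of $g$ to every line is a degree-$d$ polynomial by using Schwartz--Zippel to rule out disagreements between $g|_l$ and $f_l$ on a large enough set of points on $l$; then an induction on the number of variables, lifting consistency from lines to planes and beyond, extracts the global polynomial.

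The main obstacle is the global interpolation step in (ii), and this is precisely where the quantitative assumption $q > 100 d^3 m$ enters: one needs enough points on each line so that pairwise consistency over random auxiliary lines, combined with Schwartz--Zippel, forces agreement of local polynomials on large algebraic sets. Since this statement is already established as Theorem 5.1 of \cite{arorathesis}, building on \cite{arorasudanlowdegree} and \cite{rubinfeldsudanlowdegree}, in practice I would simply invoke the cited result after verifying that the parameter regime $\alpha \le 10^{-4}$ and $q > 100 d^3 m$ matches its hypotheses, propagating the $(1-2\alpha)$ agreement bound exactly as derived in (i).
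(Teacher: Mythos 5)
The paper does not prove Theorem~\ref{thm:lowdegreelowerror}: it simply cites it as Theorem~5.1 of \cite{arorathesis}, derived from \cite{arorasudanlowdegree} and \cite{rubinfeldsudanlowdegree}. Your proposal ultimately does the same thing --- it invokes the cited theorem after a sanity check of the parameters --- so the approaches coincide; the self-correction sketch you supply (defining $g$ as the plurality over lines, bounding the fraction of bad points by a Markov argument at threshold $1/2$ to get the $2\alpha$, and then appealing to pairwise-line consistency plus Schwartz--Zippel to show $g$ is a degree-$d$ polynomial) is a reasonable and standard outline, though it would need substantial work to be a self-contained proof.
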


Let $\mathcal{I}$ be the instance produced by Theorem \ref{thm:octsa-ext},
and let $N=n^{O(b \log\log n)}$ be the number of variables in $\mathcal{I}$.
Let $m=\lceil \log N / \log\log N \rceil = O(\log n)$, 
and $h=\lceil \log N \rceil=O(\log n \log\log n)$, so that
$h^m \ge N$, and let $d:=m(h-1)=O(\log^2 n \log\log n)$. 
The number of vertices and edges in the \LC instance produced by the reduction
would be $\poly(q^m)=2^{O( (\log n)^{b+5})}$.

We identify the variables of $\mathcal{I}$
with $S^m$ where $S \subseteq \F{q}$ is of size $h$. Any $\F{q}$ assignment $\sigma$
to the variables of $\mathcal{I}$ can be interpreted as an assignment 
$\sigma:S^m \to \F{q}$ and can be extended to a corresponding polynomial $g:\F{q}^m \to \F{q}$
of degree at most $(h-1)$ in each of the $m$ coordinates. Let $\mathcal{C}$ denote
the set of constraints of $\mathcal{I}$, each over $l:=5$
variables. Denote every constraint $C \in \mathcal{C}$ as
$C\left[ \left\{ x_i \right\}_{i=1}^{l} \right]$ where $\left\{ x_i \right\}_{i=1}^{l}$
is the set of points in $S^m$ corresponding to the $l$ variables of $\mathcal{I}$
on which the constraint is defined.

\begin{definition}
  A curve ${\omega}:\F{q} \to \F{q}^m$ of degree $r$ is a mapping
  ${\omega}(t):=(\omega_1(t),\cdots,\omega_m(t))$ where each $\omega_j$ is a degree $r$
  univariate polynomial in $t$.
\end{definition}
For the rest of this section, fix distinct values $t_0^*,\cdots,t_l^* \in \F{q}$.
A degree $l$ curve ${\omega}$ is said to correspond to a constraint 
$C\left[ \left\{ {x_i} \right\}_{i=1}^{l} \right]$ and an additional point ${x}$
if ${\omega}(t_0^*)={x}$, and for $i=1,\ldots,l$, ${\omega}(t_i^*)={x_i}$.
We now define the notion of a \emph{ruled surface}.
\begin{definition}
  A ruled surface $R=R[{\omega},{y}]$ where ${\omega}(t)$ is a curve and
  ${y} \in \F{q}^m$ is a direction, is a surface parametrized by two parameters
  $t,s \in \F{q}$, where
  \[
  R[{\omega},{y}](t,s)={\omega}(t)+s{y}\,.
  \]
\end{definition}
For a constraint $C$, a point ${x}$ and a direction ${y}$, let
$R[{\omega},{y}]$ be a ruled surface where ${\omega}$ is the curve
of degree $l$ corresponding to $C$ and ${x}$. Let $\mathcal{R}_{C}$
be the class of all such ruled surface corresponding to constraint $C \in \mathcal{C}$,
and let $\mathcal{R}:=\cup_{C \in \mathcal{C}} \mathcal{R}_{C}$.
Suppose the assignment $g:\F{q}^m \to \F{q}$ is a global polynomial of degree $d$.
The restriction of $g$ to a ruled surface $R \in \mathcal{R}$ is a bivariate polynomial
--- in $t$ and $s$ --- of degree at most $d^*:=ld=5d$ in variable $t$ and at most $d$
in variable $s$. The total number of coefficients of such a bivariate polynomial
is at most $d^* d=O(d^2)$. The \LC instance $\mathcal{L}$ is constructed as follows:
\begin{description}
  \item[Left vertex set $U$] This consists of all points in $\F{q}^m$. The label set is the set
    of values $\F{q}$ that can be assigned to the points.
  \item[Right vertex set $V$] The set of right vertices is $\mathcal{R}$, the class of all
    ruled surfaces over all constraints $C \in \mathcal{C}$.
    The label set of a ruled surface $R \in \mathcal{R}$ is the set of all bivariate polynomials
    in $t$ and $s$ of degree at most $d^*$ in $t$ and at most $d$ in $s$. Such a polynomial
    is represented by a vector of its coefficients. As mentioned before, the dimension of
    this vector is $O(d^2)$. For a ruled surface $R$ corresponding to a constraint $C$,
    there is a quadratic equation on these coefficients,
    and a label $g$ is valid --- $g \in \Gamma(R)$ --- iff the values of $g$
    at points $\left\{ (t^*_i, s=0) \right\}_{i=1}^{l}$ satisfy $C$.
  \item[Edges] For every ruled surface $R \in \mathcal{R}$ and every point $x \in R$,
    there is an edge between $x$ and $R$. The edge is satisfied by a labeling $g$
    to the surface $R$ and a label $p$ to the point $x$ if $g(x)=p$.
    Note that the computation of $g(x)$ is linear in the coefficients of $g$.
\end{description}

To analyze the above construction, we need the following result from \cite{ksspeccc}.
\begin{obsv}\label{obs:curveuniform}
  Given an equation $C$, pick a uniformly random point $x \in \F{q}^m$,
  and let ${\omega}$ be the curve corresponding to $C$ and ${x}$.
  Then, for any $t \in \F{q} \setminus \left\{ t^*_1,\cdots,t^*_l \right\}$,
  the point ${\omega}(t)$ is uniformly distributed in $\F{q}^m$.
\end{obsv}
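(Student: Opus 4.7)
The claim is fundamentally a statement about Lagrange interpolation, so my plan is to write $\omega$ coordinate-wise using the interpolation formula and read off the distribution directly. For each coordinate $j \in [m]$, the polynomial $\omega_j(t)$ is the unique univariate polynomial of degree at most $l$ satisfying $\omega_j(t_0^*) = x_j$ and $\omega_j(t_i^*) = x_{i,j}$ for $i=1,\ldots,l$, where $x_{i,j}$ denotes the $j$th coordinate of the fixed point $x_i$. By Lagrange interpolation,
\[
\omega_j(t) \;=\; L_0(t)\, x_j \;+\; \sum_{i=1}^{l} L_i(t)\, x_{i,j},
\]
where $L_i(t) = \prod_{k \ne i} (t - t_k^*)/(t_i^* - t_k^*)$ is the degree-$l$ Lagrange basis polynomial.

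The crucial observation is that $L_0(t) = \prod_{k=1}^{l}(t-t_k^*)/(t_0^*-t_k^*)$ vanishes precisely at $t_1^*,\ldots,t_l^*$, so for any $t \in \F{q} \setminus \{t_1^*,\ldots,t_l^*\}$ we have $L_0(t) \ne 0$. Consequently, the map $x_j \mapsto \omega_j(t)$ is an affine bijection of $\F{q}$ (the second summand above is a constant once $C$ and $t$ are fixed). Since $x_j$ is uniform on $\F{q}$, so is $\omega_j(t)$.

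Finally, since $x$ is uniform on $\F{q}^m$ the coordinates $x_1,\ldots,x_m$ are independent, and the formula above shows $\omega_j(t)$ depends only on $x_j$ (and on the fixed data $C$, $t$). Hence the $\omega_j(t)$ are mutually independent and each uniform on $\F{q}$, so $\omega(t)$ is uniform on $\F{q}^m$, as claimed. I do not anticipate any obstacle: the entire argument is a one-line Lagrange interpolation calculation, and the only substantive point is the non-vanishing of $L_0(t)$ outside $\{t_1^*,\ldots,t_l^*\}$, which follows directly from the definition of the Lagrange basis.
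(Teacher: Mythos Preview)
Your argument is correct: the Lagrange interpolation formula expresses $\omega_j(t)$ as $L_0(t)\,x_j$ plus a constant depending only on $C$ and $t$, and since $L_0$ has roots exactly at $t_1^*,\ldots,t_l^*$, the map $x_j\mapsto\omega_j(t)$ is an affine bijection of $\F{q}$ for every $t$ outside that set; uniformity and independence of the coordinates follow immediately. The paper does not actually supply a proof of this observation---it is quoted from \cite{ksspeccc}---but your Lagrange-interpolation computation is the natural (and essentially only) way to see it, and matches what one would expect the cited source to do.
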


\begin{theorem}\label{thm:lcfq}
  Let $k=(\log n)^b$, $q=2^{(\log n)^{b+4}}$ be as in Theorem \ref{thm:octsa-ext},
  and $\delta=2^{-(\log n)^{b+3}}$. 
  There is a reduction that takes as input a \ksat{3} instance of size $n$, 
  and outputs a \LC instance with parameters $h,m,d,l,d^*$ as described above.
  The instance $\mathcal{L}$ has the following properties:
  \begin{enumerate}
    \item The label set for $u \in U$ is $\F{q}$, and the label set for $v \in V$
      is the set of vectors of length $O(d^2)=O( (\log n)^5)$ over $\F{q}$.
      For each edge $e=(u,v)$, the projection $\pi^e$
      map coefficient vectors of surface polynomials
      to their value at points on the surface, and are homogeneous and $\F{q}$-linear.
      The coefficient vector is also supposed to satisfy an equation $C$ given by
      a quadratic equation over $\F{q}$. 
      
      The size of $\mathcal{L}$ is $\poly(q^m)=2^{O( (\log n)^{b+5})}$.
    \item If the \ksat{3} instance is satisfiable, then there is a labeling to 
      the vertices of $\mathcal{L}$ that satisfies all the edges, and that the label
      for every ruled surface $R$ satisfies the associated quadratic equation.
    \item If the \ksat{3} instance is unsatisfiable, then the following cannot hold
      simultaneously:
      \begin{itemize}
        \item For every left vertex $x \in \F{q}^m$, there are $k$ labels 
          $p_1^{x},\cdots,p_k^{x} \in \F{q}$.
        \item For every right vertex $R \in \mathcal{R}$, there are $k$ labels
          (polynomials given as coefficient vectors) ${g}^R_1,\cdots,{g}^R_k$,
          such that the associated equation is satisfied in superposition by
          those $k$ labels.
        \item For $\left( 1-\frac{1}{1000k} \right)$ fraction of the edges of $\mathcal{L}$,
          between a point $x$ and a ruled surface $R$, we have
          \[
          {g}^R_j(x)=p^{x}_j, \quad \forall j \in \left\{ 1,\ldots,k \right\}\,.
          \]
      \end{itemize}
      \item $\delta$-Smoothness: For any surface $R$, let ${g}$ be a non-zero label.
        Then
        \[
        \Pr_{x \in R}[{g}({x})=0] \le \delta\,.
        \]
  \end{enumerate}
\end{theorem}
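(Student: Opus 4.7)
My plan is to verify the four items in turn. Items 2 (completeness) and 4 (smoothness) will be direct consequences of the polynomial encoding, while item 1 is syntactic given the explicit construction; the main work lies in item 3, which I would establish by contradiction, using Theorem \ref{thm:lowdegreelowerror} to decode $k$ global low-degree polynomials from the supposed labellings and then invoking Theorem \ref{thm:octsa-ext}. For completeness, given a satisfying assignment $\sigma$ of $\mathcal{I}$, I would interpolate it to a polynomial $G:\F{q}^m \to \F{q}$ of degree at most $h-1$ in each coordinate, hence of total degree at most $d=m(h-1)$; labelling every $x$ by $G(x)$ and every surface $R$ by the coefficient vector of $G|_R$ satisfies all edges by construction, and the quadratic equation attached to $R$ holds because $G(x_i)=\sigma(x_i)$ and $\sigma$ satisfies $C[x_1,\dots,x_l]$. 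For smoothness, any nonzero valid label of $R$ is a bivariate polynomial of total degree at most $d^{*}+d=6d$, and Schwartz--Zippel on the two-dimensional surface bounds its zero set by $6d/q$, which is far below $\delta=2^{-(\log n)^{b+3}}$ for our parameters.

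For soundness, assume towards contradiction that the three sub-items of item 3 hold simultaneously. For each $j \in [k]$, I would apply Theorem \ref{thm:lowdegreelowerror} with the following line oracle. Set $f_j(x):=p_j^{x}$, and for each line $\ell=\{x_0+sy : s \in \F{q}\}\subset \F{q}^m$ define $f_{j,\ell}$ by sampling a random constraint $C$ and a random degree-$l$ curve $\omega$ corresponding to $C$ with $\omega(t_0)=x_0$ for some $t_0 \notin \{t^{*}_0,\dots,t^{*}_l\}$, forming the surface $R=R[\omega,y]$, and setting $f_{j,\ell}(x_0+sy):=g_j^R(t_0,s)$. By Observation \ref{obs:curveuniform}, the pair $(R,\omega(t_0))$ produced this way is distributed uniformly over edges of $\mathcal{L}$, so the $1-1/(1000k)$ agreement of $g_j^R$ with $p_j^{\cdot}$ over edges transfers to a $1-O(1/k)$ agreement of $f_j$ with $f_{j,\ell}$ over random (line, point) pairs, comfortably within the $10^{-4}$ threshold of Theorem \ref{thm:lowdegreelowerror}. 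The theorem then yields, for each $j$, a global polynomial $G_j$ of degree at most $d$ agreeing with $p_j^{\cdot}$ on a $1-O(1/k)$ fraction of $\F{q}^m$.

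To close the contradiction, I would argue that for almost every surface $R$ and every $j \in [k]$ one has $G_j|_R=g_j^R$: both polynomials agree with $p_j^{\cdot}$ on nearly all of $R$, so their difference is a bivariate polynomial of total degree at most $d^{*}+d$ that vanishes on more than a $6d/q$ fraction of $R$, hence is identically zero by Schwartz--Zippel using $q\gg d$. Setting $\sigma_j:=G_j|_{S^m}$ gives $k$ assignments to the variables of $\mathcal{I}$; for a random constraint $C$ with associated surface $R$, the evaluations $g_j^R(t_i^{*},0)=G_j(x_i)=\sigma_j(x_i)$ for $j \in [k]$ and $i \in [l]$ convert the superposition satisfaction of the coefficient equation of $R$ by $\{g_j^R\}_{j=1}^k$ into a superposition satisfaction of $C$ by $\sigma_1,\dots,\sigma_k$. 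Therefore $k$ assignments satisfy a $15/16$ fraction of the constraints of $\mathcal{I}$ in superposition, contradicting Theorem \ref{thm:octsa-ext} after an inessential adjustment of the parameter $b$. The main obstacle is the line-oracle construction of the previous paragraph: one must argue that the embedding of lines into surfaces is statistically uniform so that the small simultaneous edge-error $1/(1000k)$ survives intact for every $j$ at once, and that the chosen parameters ($q \gg d$) suffice for both Theorem \ref{thm:lowdegreelowerror} and the repeated Schwartz--Zippel applications; everything else reduces to routine bookkeeping with polynomial degrees.
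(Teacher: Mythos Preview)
Your proposal is correct and follows the same approach as the paper: completeness and smoothness are handled directly via the polynomial encoding and Schwartz--Zippel, and soundness is proved by turning the surface labellings into a (randomized, then derandomized) line oracle for Theorem~\ref{thm:lowdegreelowerror}, decoding $k$ global degree-$d$ polynomials, identifying each $G_j|_R$ with $g_j^R$ on most surfaces via Schwartz--Zippel, and thereby contradicting Theorem~\ref{thm:octsa-ext}. The only difference is organizational---the paper first restricts to a single good constraint $C$, applies the low-degree test within $\mathcal{R}_C$, and then argues the decoded polynomials coincide across all good $C$, whereas you randomize over $C$ as part of the line oracle and work globally from the start---but this is immaterial.
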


\begin{proof}
  The reduction is described as above. Let $\mathcal{I}$ be a system of quadratic
  equations over $\F{q}$ given by Theorem \ref{thm:octsa-ext} and let $\mathcal{C}$
  be the set of equations.

  The $\delta$-smoothness property follows from Schwarz-Zippel lemma: 
  any non-zero label of surface $R$ gives a non-zero polynomial $g$, and its evaluation
  at a random point on the surface is zero with probability at most $d/q \le \delta$.
  
  In the completeness case, there is an assignment to each variable in $\mathcal{I}$.
  Therefore, there is a degree $d$ polynomial
  $g:\F{q}^m \to \F{q}$ that gives this assignment to the corresponding points in
  $S^m$. The left vertices are labeled using assignments given by $g$, and each
  ruled surface $R$ in the right vertex set are labeled by the polynomial given by
  the restriction of $f$ to $R$. This assignment satisfies the mapping on the edges,
  as well as the quadratic equation associated with $R$.

  For the soundness case, suppose for contradiction that 
  no $k$ assignments satisfy more than a $15/16$ fraction of equations in $\mathcal{I}$,
  but there are $k$ labelings
  for the vertices of $\mathcal{L}$, such that all associated equations of the right
  vertices are satisfied in superposition, and the mappings on the edges are
  satisfied for a $\left( 1-\frac{1}{1000 k} \right)$ fraction of the edges.
  By an averaging argument, we have that for a $\left( 1-\frac{1}{20} \right)$ fraction
  of the equations $C \in \mathcal{C}$, we have
  \[
  \Pr_{\substack{R \in \mathcal{R}_{C} \\ x \in R}}
  \left[ \bigwedge_{j=1}^{k}\left( {g}^R_j(x)=p^{x}_j \right) \right]
  \ge 1-\frac{1}{50 k}\,.
  \]
  We say that the equations $C$ that satisfy the condition above are \emph{good}.
  Fix one \emph{good} equation $C$. We say that a line $l(s)$ is contained in a ruled
  surface $R(t,s)$ if it is obtained by fixing a value of $t$ in $R(t,s)={\omega}(t)+s{y}$.
  Since choosing a random point on a ruled surface is equivalent to choosing a random line
  $l$ contained in $R$ then choosing a random point on $l$, the above probability can
  be rewritten as
  \[
  \Pr_{\substack{R \in \mathcal{R}_{C} \\ l \in R, x \in l}}
  \left[ \bigwedge_{j=1}^{k}\left( {g}^R_j(x)=p^{{x}}_j \right) \right]
  \ge 1-\frac{1}{50 k}\,.
  \]
  From Observation \ref{obs:curveuniform}, we can see that the above probability is essentially
  equal to the probability obtained by first picking a random line $l$ and then a random
  $R \in \mathcal{R}_C$ containing the line. Let $\mathcal{R}^l_C$ be the set of
  ruled surfaces of $R \in \mathcal{R}_C$ that contain line $l$. Thus we have
  \begin{equation}
    \Pr_{\substack{l, x \in l \\ R \in \mathcal{R}^l_C}}
    \left[ \bigwedge_{j=1}^{k}\left( {g}^R_j({x})=p^{{x}}_j \right) \right]
    \ge 1-\frac{2}{50 k}\,.
    \label{eq:curve}
  \end{equation}

  We now argue that for each $j=1,\ldots,k$, there exists a unique polynomial
  $P_j$ of total degree at most $d$, such that
  \begin{equation}
    \Pr_{\substack{R \in \mathcal{R}_C \\ {x} \in R}}
    \left[ \bigwedge_{j=1}^{k} \left( {g}^R_j({x})=P_j({x}) \right) \right]
    \ge \frac{9}{10}\,.
    \label{eq:curvedecoding}
  \end{equation}

  Define $f_1,\cdots,f_k:\F{q}^m \to \F{q}$ as $f_j(x)=p^{{x}}_j$ for all
  ${x} \in \F{q}^m$ and $j \in \left\{ 1,\ldots,k \right\}$.
  For each line $l$, construct $H(l)=(h_1(l),\cdots,h_k(l))$ as follows:
  first choose a random $R \in \mathcal{R}^l_C$, and let $h_j(l)$ be the univariate
  restriction of the bivariate polynomial ${g}^R_j$ to the line $l$.
  Let $E$ be the following event (over the choice of $l$, ${x} \in l$, and $H$):
  \[
  E \equiv \bigwedge_{j=1}^{k}\left( f_j({x}) = h_j(l)({x}) \right)\,.
  \]
  Equation (\ref{eq:curve}) can be re-stated as $\Pr_{l,{x} \in l,H}[E] \ge 1-\frac{2}{50k}$.
  The choice of $H$ is independent of $l$ and ${x}$, therefore there is a deterministic
  setting of $H$ for which $\Pr_{l,{x} \in l}[E] \ge 1-\frac{2}{50k}$.
  Applying Theorem \ref{thm:lowdegreelowerror}, we obtain polynomials
  $P_1,\cdots,P_k:\F{q}^m \to \F{q}$, each of total degree at most $d$, such that for each 
  $j \in \left\{ 1,\ldots,k \right\}$, we have
  \[
  \Pr_{{x}}[f_j({x}) \ne P_j({x})] = \Pr_{{x}}[p^{{x}}_j \ne P_j({x})]
  \le \frac{4}{50k}\,.
  \]
  Note that the choice of polynomials are globally unique 
  --- we get the same set of polynomials regardless of which good equation we fix ---
  due to Schwarz-Zippel lemma
  and that $d/q \ll O(1/k)$.
  
  Observe that a random point on a random ruled surface is essentially distributed uniformly
  randomly in $\F{q}^m$. Therefore, using Equation (\ref{eq:curve}) and the above,
  along with a union bound over the $k$ polynomials, we get Equation (\ref{eq:curvedecoding}).

  From Equation (\ref{eq:curvedecoding}), we have that there is one ruled surface
  $R \in \mathcal{R}_C$ such that
  \[
  \Pr_{x \in R}
  \left[ \bigwedge_{j=1}^{k} \left( {g}^R_j({x})=P_j({x}) \right) \right]
  \ge \frac{9}{10}\,.
  \]
  Using Schwarz-Zippel again, we have that for each $j \in \left\{ 1,\ldots,k \right\}$,
  the polynomial ${g}^R_j$ must be the restriction of $P_j$ to the surface $R$,
  and thus $P_j$ is consistent with ${g}^R_j$ at the points
  $(t^*_1,0),\cdots,(t^*_l,0)$. This means that the values given by $P_1,\cdots,P_k$
  satisfy the equation $C$ in superposition. 
  This holds for each good equation $C$, and therefore the assignment given by
  $P_1,\cdots,P_k$ satisfy in superposition a $19/20$ fraction of all quadratic equations
  in $\mathcal{I}$, contradicting the assumption about $\mathcal{I}$
  that no $k$ assignments satisfy in superposition more than a $15/16$ fraction of the
  equations.
\end{proof}

As in \cite{kssp}, we abstract out the above \LC and get the following statement
for \LC with $\F{2}$ vector labels. Moreover, 
we boost the soundness by taking $O(k^4)$ rounds of parallel
repetition and apply Theorem \ref{thm:parrep}.
It is important that we apply the parallel repetition theorem from 
\cite{parreprao}, because the label size here is large. 
Unlike earlier versions of parallel repetition theorem, 
the statement from \cite{parreprao} is independent of the label size and therefore
is efficient for our purpose.
Note that the smoothness property is preserved by parallel repetition.
\begin{theorem}\label{thm:lcvec}
  Let $k=(\log n)^b$ be as in Theorem \ref{thm:octsa-ext},
  and let $\delta=2^{-(\log n)^{b+3}}$. 
  There is a reduction that takes as input a \ksat{3} instance of size $n$, 
  and outputs a \LC instance $\mathcal{L}$ with the following properties:
  \begin{enumerate}
    \item The label set for $U$ is $\F{2}^{m_l}$,
      and the label set for $V$ is $\F{2}^{m_r}$,
      where $m_l,m_r=(\log n)^{5b+O(1)}$.

      For each vertex in $V$, there is a set of quadratic equations over $\F{2}$
      associated with it, and the coefficient vector is supposed to satisfy all
      the associated equations.

      The size of the vertex set of 
      $\mathcal{L}$ is $2^{O( (\log n)^{5b+O(1)})}$.

      For each edge $e$, the mapping $\pi^e$ is homogeneous and $\F{2}$-linear
      in the entries of the vectors, that is, there is a matrix $A_e \in \F{2}^{m_l \times m_r}$,
      such that for $x \in \F{2}^{m_l}$ and $y \in \F{2}^{m_r}$, 
      $\pi^e(y)=x$ iff $x=A_e y$.
    \item If the \ksat{3} instance is satisfiable, then there is a labeling to 
      the vertices of $\mathcal{L}$ that satisfies all the edges, and that the label
      to the right vertices satisfy all associated quadratic equations.
    \item If the \ksat{3} instance is unsatisfiable, then the following cannot hold
      simultaneously:
      \begin{itemize}
        \item For every left vertex $u \in U$, there are $k$ labels 
          $x^u_1, \cdots, x^{u}_k$.
        \item For every right vertex $v \in V$, there are $k$ labels
          ${y}^{v}_1,\cdots,{y}^{v}_{k}$ 
          that satisfy in superposition all the associated equations.
        \item For $2^{-(\log n)^{2b+O(1)}}$ fraction of the edges $e=(u,v)$ of $\mathcal{L}$,
          we have that for all $j \in \left\{ 1,\cdots,k \right\}$,
          $\pi^e({y}^{v}_j)={x}^{u}_j$.
      \end{itemize}
      \item $\delta$-Smoothness: For any $v \in V$ and non-zero label ${y}^v$,
        over the choice of a random edge incident on $v$, we have
        \[
        \Pr_{u \sim v}[\pi^{(u,v)}({y}^v)=0] \le \delta\,.
        \]
  \end{enumerate}
\end{theorem}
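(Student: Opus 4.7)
The plan is to take the $\F{q}$ Label Cover of Theorem \ref{thm:lcfq}, re-encode it over $\F{2}$, and amplify the soundness by applying Rao's parallel repetition (Theorem \ref{thm:parrep}) to a bundled $k$-list version of the game.

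I would begin by fixing an $\F{2}$-basis of $\F{q}$, identifying $\F{q} \cong \F{2}^r$ for $r = (\log n)^{b+4}$. Left labels then live in $\F{2}^r$ and right labels (coefficient vectors of length $O(d^2)$ over $\F{q}$) in $\F{2}^{O(d^2) r}$. Every $\F{q}$-linear map is automatically $\F{2}$-linear, so each projection $\pi^e$ is encoded by an $\F{2}$-matrix $A_e$; each $\F{q}$-quadratic equation splits into $r$ $\F{2}$-quadratic equations via the $\F{2}$-bilinearity of $\F{q}$-multiplication. Completeness, $\delta$-smoothness, and the three-part $k$-list soundness of Theorem \ref{thm:lcfq} carry through verbatim, since an $\F{2}$-vector labeling is simply an $\F{q}$-labeling in disguise and superposition is an $\F{2}$-linear condition.

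Next I would package the $k$-list structure into a single projection game: an atomic label at a vertex is a $k$-tuple of base $\F{2}$-labels, where the right-vertex tuple is required to satisfy the associated equations in superposition, and the projection acts coordinate-wise. The value of the bundled base game is at most $1 - 1/(1000k)$ by the $k$-list soundness of Theorem \ref{thm:lcfq}, so applying Theorem \ref{thm:parrep} for $n_0 = \Theta(k^4)$ rounds brings the value down to $(1 - 1/(2000k))^{\alpha n_0/(1000k)} = 2^{-(\log n)^{2b + O(1)}}$. An atomic label of the repeated bundled game lies in $(\F{2}^{rk})^{n_0}$, which I re-interpret as a $k$-tuple of single labels, each in $\F{2}^{r n_0}$ on the left and $\F{2}^{O(d^2) r n_0}$ on the right; this gives $m_l, m_r = (\log n)^{5b + O(1)}$ once the $(\log n)^{O(1)}$ contribution of $d^2$ is absorbed. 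The coordinate-wise projection is an $\F{2}$-linear block-diagonal matrix, and the quadratic constraints at a repeated right vertex are the union of the $n_0$ base constraints, matching items (1) and (2) of the statement.

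The only remaining check is $\delta$-smoothness after repetition, which follows by a coordinate argument: a nonzero repeated right label has at least one nonzero coordinate $i \in [n_0]$, and because the projection is block-diagonal, the full repeated projection vanishes only when each coordinate's base projection vanishes, which happens with probability at most $\delta$ at the nonzero coordinate $i$ by the base $\delta$-smoothness. The main subtlety is the identification between atomic bundled labels of the repeated game and $k$-tuples of ordinary labels asked for in the statement; once this identification is made, the three simultaneous soundness conditions of Theorem \ref{thm:lcvec} are precisely what it means for a bundled labeling of the repeated game to have value at least $2^{-(\log n)^{2b+O(1)}}$, contradicting the bound from Rao's theorem.
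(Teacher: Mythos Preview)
Your approach is correct and matches the paper's own proof sketch: re-encode the $\F{q}$ Label Cover of Theorem~\ref{thm:lcfq} over $\F{2}$, then amplify soundness via $O(k^4)$ rounds of Rao's parallel repetition (Theorem~\ref{thm:parrep}), with smoothness preserved by a single-coordinate argument. The one point you handle more carefully than the paper's sketch is the bundling of the $k$-list structure into a single projection game before invoking Rao's theorem; the paper simply says it applies Theorem~\ref{thm:parrep} to drive the ``soundness'' down, leaving implicit that the relevant game is the $k$-bundled one (equivalently, that $(G^n)^{(k)}$ and $(G^{(k)})^n$ coincide once the superposition constraints are taken as the right-vertex label set). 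Your explicit identification of a repeated bundled label with a $k$-tuple of labels in $\F{2}^{rn_0}$ is exactly the bookkeeping needed to read off $m_l,m_r=(\log n)^{5b+O(1)}$ and to match the soundness statement of Theorem~\ref{thm:lcvec}.
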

\begin{proof}[Proof Sketch]
  The first step is to translate the \LC from Theorem \ref{thm:lcfq} with $\F{q}$ vector labels
  to one with $\F{2}$ vector labels. This is done by choosing an arbitrary $\F{2}^{\log q}$ vector representation 
  for elements of $\F{q}$. This gives a $\F{2}$-vector-labeled \LC instance with the same
  guarantees as in Theorem \ref{thm:lcfq}, but with label sets $\F{2}^{m_l}$ and $\F{2}^{m_r}$ for $U$ and $V$,
  respectively, and $m_l,m_r = (\log n)^{b+O(1)}$.

  The second part of the reduction is to apply Theorem \ref{thm:parrep}, where we take the number of rounds $n$
  in Theorem \ref{thm:parrep} to be such that the repeated game has soundness $2^{-(\log n)^{2b+O(1)}}$.
  This requires $n=O(k^4)$. This gives the final \LC instance, and the
  procedure increases $m_l$ and $m_r$ by a multiplicative factor of $n$.

  To see that smoothness is preserved by parallel repetition, suppose we have a vertex
  $v:=(v_1,\cdots,v_t)$ and non-zero label
  $y:=(y_1,\cdots,y_t)$ for the $t$-round repeated game. Then there exists $s \in \left\{ 1, \ldots, t \right\}$
  such that $y_s \ne 0$. Also, for a uniformly random neighbor $u:=(u_1,\cdots,u_t)$ of $v$, the vertex
  $u_s$ is also a uniformly random neighbor of $v_s$. 
  Note that a necessary condition for $\pi^{(u,v)}(y)=0$ is that $\pi^{(u_s,v_s)}(y_s)=0$, and this probability
  is upper-bounded by the smoothness property of Theorem \ref{thm:lcfq}.
\end{proof}

We now construct the final \LC with matrix labels as follows.
\begin{theorem}\label{thm:lcmatrix}
  Let $k=(\log n)^b$ be as in Theorem \ref{thm:octsa-ext}.
  There is a reduction that takes as input a \ksat{3} instance of size $n$, 
  and outputs a \LC instance with the following properties:
  \begin{enumerate}
    \item The label set for the left vertex set $U$ is $\F{2}^{(m_l+1) \times (m_l+1)}$,
      and the label set for the right vertex set $V$ is $\F{2}^{(m_r+1) \times (m_r+1)}$,
      where $m_l,m_r=(\log n)^{5b+O(1)}$.

      For each right vertex, there is a set of homogeneous linear $\F{2}$
      equations involving entries of the labeling of $v$. The set of valid labelings
      $\Gamma(v)$ consists of matrices that satisfy all the associated linear equations.
      
      The size of the vertex set of $\mathcal{L}$ is $2^{O( (\log n)^{5b+O(1)})}$.

      For each edge $e=(u,v)$, there is a matrix $A_e$, such that labeling $M_u$ and $M_v$
      satisfies $\pi^e$ iff $M_u=A M_v A^T$. Note that the constraint is linear in the entries
      of $M_u$ and $M_v$.
    \item If the \ksat{3} instance is satisfiable, then there is a labeling to 
      the vertices of $\mathcal{L}$ that satisfies all the edges, and that 
      for each right vertex $v \in V$, its label $M_v$ is symmetric, $\rank(M_v)=1$,
      the $(1,1)$ entry of $M_v$ is $1$, and $M_v \in \Gamma(v)$.
    \item If the \ksat{3} instance is unsatisfiable, then for any labeling $\sigma$
      for the vertices in $U$ and $V$, the following cannot hold
      simultaneously:
      \begin{itemize}
        \item For each $v \in V$, the matrix $\sigma(v)$ is pseudo-quadratic, has
          $\rank(\sigma(v)) \le (\log n)^b/2$, and is valid $\sigma(v) \in \Gamma(v)$.
        \item For at least $2^{-(\log n)^{b}}$ fraction of the edges $e=(u,v)$ of $\mathcal{L}$,
          we have that $\pi^e(\sigma(v))=\sigma(u)$.
      \end{itemize}
      \item Smoothness: For any $v \in V$ and $M_v$ such that $D_1(M_v) \ne 0$,
        then over the choice of a random edge incident on $v$, we have
        \[
        \Pr_{u \sim v}[D_1(\pi^{(u,v)}(M_v)=0)] \le 2^{-(\log n)^{b+1}}\,.
        \]
  \end{enumerate}
\end{theorem}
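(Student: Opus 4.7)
The plan is to construct the matrix LC as a ``quadratic lift'' of the vector LC from Theorem \ref{thm:lcvec}. Keep both vertex sets unchanged, and replace each old vector label by a matrix intended honestly to be its outer product with a prepended $1$. Concretely, for $v \in V$ a label is a matrix $M_v \in \F{2}^{(m_r+1)\times(m_r+1)}$ meant to encode $(1,y^v) \otimes (1,y^v)$; each quadratic equation defining the vector-LC associated constraints at $v$ becomes a homogeneous $\F{2}$-linear equation in the entries of such an outer product, and $\Gamma(v)$ is the solution set of these linearized equations. For each edge $e=(u,v)$ with vector-LC projection $A_e \in \F{2}^{m_l \times m_r}$, set $A'_e = \left(\begin{smallmatrix}1 & 0 \\ 0 & A_e\end{smallmatrix}\right) \in \F{2}^{(m_l+1)\times(m_r+1)}$ and declare the new projection to be $\pi^e(M_v) := A'_e M_v (A'_e)^T$, which is homogeneous and $\F{2}$-linear in the entries of both labels. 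Because $A'_e$ preserves the first coordinate, one checks directly that $A'_e M_v (A'_e)^T$ is pseudo-quadratic whenever $M_v$ is, and that $D_1(A'_e M_v (A'_e)^T) = A_e\, D_1(M_v)\, A_e^T$. The smoothness property follows from the vector-LC smoothness by applying Lemma \ref{lem:ksdecomp} to the symmetric matrix $D_1(M_v)$ and union-bounding over the at most $2^{k/2}$ vectors in its column space; the bound $\delta = 2^{-(\log n)^{b+3}}$ from Theorem \ref{thm:lcvec} comfortably absorbs this loss and gives the required $2^{-(\log n)^{b+1}}$.

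Completeness is immediate: set $M_u := (1, x^u) \otimes (1, x^u)$ and $M_v := (1, y^v) \otimes (1, y^v)$ from a satisfying labeling of the vector LC. Each $M_v$ is symmetric, rank one, pseudo-quadratic with $(1,1)$-entry $1$; the edge constraint reduces to $(1, A_e y^v) \otimes (1, A_e y^v) = (1, x^u) \otimes (1, x^u)$, which holds since $A_e y^v = x^u$; and $M_v \in \Gamma(v)$ because the unlifted quadratic equations are satisfied by $y^v$.

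For soundness, assume for contradiction that a labeling $\sigma$ satisfying both bullets of item $3$ exists. Apply Lemma \ref{lem:odddecomp} to each $\sigma(v)$ to obtain an odd integer $l_v \le 3k/4 + 1 \le k$ and vectors $v^v_1, \ldots, v^v_{l_v}$ with first coordinate $1$, $D_1(v^v_j)$ in the column space of $D_1(\sigma(v))$, and $\sigma(v) = \sum_{j} v^v_j \otimes v^v_j$. Define $k$ vector labels by $y^v_j := D_1(v^v_j)$ for $j \le l_v$ and $y^v_j := 0$ otherwise; by Lemma \ref{lem:rankoddcov} these satisfy in superposition every vector-LC quadratic equation associated with $v$. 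On any good edge $e=(u,v)$, the identity $\sigma(u) = A'_e \sigma(v) (A'_e)^T$ forces $\sigma(u)$ to be pseudo-quadratic of rank at most $k/2$, so the same procedure yields labels $x^u_1, \ldots, x^u_k$ at $u$ (with arbitrary labels on other left vertices).

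The main obstacle is to show that a non-negligible fraction of good edges witness simultaneous matching $A_e y^v_j = x^u_j$ for all $j \in [k]$, since Lemma \ref{lem:odddecomp} does not give a unique decomposition and the pushforward $\sigma(u) = \sum_j (A'_e v^v_j) \otimes (A'_e v^v_j)$ need not coincide vector-by-vector with the decomposition chosen at $u$. I would resolve this by fixing a deterministic canonical decomposition (e.g., via Gaussian elimination in a fixed basis of $\F{2}^{m_r}$) that depends only on the column space of $D_1(\sigma(\cdot))$ and on the matrix itself. Smoothness and a union bound over the at most $2^{k/2}$ non-zero vectors in the column space of $D_1(\sigma(v))$ show that with probability at least $1 - 2^{k/2}\delta \ge 1 - 2^{-(\log n)^{b+2}}$ over the choice of edge, $A_e$ is injective on that column space; on such edges the pushforward decomposition of $\sigma(u)$ is forced to coincide with the canonical one, so $A_e y^v_j = x^u_j$ for every $j$. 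Intersecting with the $2^{-(\log n)^b}$ fraction of good edges leaves at least $2^{-(\log n)^b}/2$ edges with simultaneous $k$-fold matching, which vastly exceeds the soundness bound $2^{-(\log n)^{2b+O(1)}}$ in item $3$ of Theorem \ref{thm:lcvec} and yields the required contradiction.
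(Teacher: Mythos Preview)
Your construction, completeness argument, and smoothness bound are the same as the paper's. The gap is in the soundness step where you claim that a ``deterministic canonical decomposition (e.g., via Gaussian elimination in a fixed basis)'' is preserved by the edge maps whenever $A_e$ is injective on the column space. This is false: any coordinate-dependent procedure such as Gaussian elimination depends on the ambient basis of $\F{2}^{m_l}$ or $\F{2}^{m_r}$, and an injective $A_e$ need not respect that basis. A trivial counterexample: take $D_1(\sigma(v))=e_1\otimes e_1+e_2\otimes e_2$ with canonical decomposition $(y^v_1,y^v_2)=(e_1,e_2)$, and let $A_e$ swap $e_1$ and $e_2$. Then $D_1(\sigma(u))=e_1\otimes e_1+e_2\otimes e_2$ again, so its canonical decomposition is still $(x^u_1,x^u_2)=(e_1,e_2)$, but the pushforward is $(A_e e_1,A_e e_2)=(e_2,e_1)$; hence $A_e y^v_1\ne x^u_1$. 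More generally, a single vertex $u$ can have many good neighbours $v$ with different $A_e$'s, and the pushforward tuples from these neighbours will in general be distinct ordered tuples summing to the same matrix $\sigma(u)$; no fixed choice at $u$ can equal all of them. There is also no basis-free canonical choice available here, since the rank-$1$ decomposition of a symmetric $\F{2}$-matrix is only determined up to the action of $\mathrm{GL}$ on the column space.

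The paper sidesteps this entirely by using a \emph{randomized} labeling at $u$: pick $x^u_1,\ldots,x^u_k$ uniformly and independently from the column space of $D_1(\sigma(u))$. On any good edge where $A_e$ is injective on the column space of $D_1(\sigma(v))$ (which, as you correctly argue via smoothness and a union bound, happens with probability $1-2^{-(\log n)^{b+1}}$), each $A_e y^v_j$ lands in that column space, so the probability that all $k$ random choices match is at least $2^{-k\cdot\rank(D_1(\sigma(u)))}\ge 2^{-k^2}=2^{-(\log n)^{2b}}$. Combined with the $2^{-(\log n)^{b}}$ fraction of good edges, the expected fraction of edges with full $k$-fold matching is $2^{-(\log n)^{2b+O(1)}}$, which is exactly the soundness threshold of Theorem~\ref{thm:lcvec} (and the reason that theorem was boosted to that level via $O(k^4)$ rounds of parallel repetition). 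Replacing your canonical-decomposition paragraph with this randomized decoding closes the gap.
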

\begin{proof}
  We start with the \LC instance from the previous theorem.

  The underlying bipartite graph of the new instance is exactly the same. 
  The parameters $m_r$ and $m_l$ are the same as before. The labels for $u \in U$
  in the new instance are now matrices from $\F{2}^{(m_l+1) \times (m_l+1)}$, and
  the labels for $v \in V$ are from $\F{2}^{(m_r+1) \times (m_r+1)}$. 
  The constraints for labelings for vertices in $v \in V$ are the following:
  \begin{enumerate}
    \item The matrix label $M$ is symmetric, and for $i=2,\cdots,m_r+1$,
      we have $M_{i,i}=M_{1,i}=M_{i,1}$. These are all homogeneous linear constraints.
      Note that if in addition we have $M_{1,1}=1$, then we get 
      that $M$ is pseudo-quadratic. Here, however, we do not include the latter constraint
      as it is not homogeneous. In fact, this will be handled by the inner verifier.
    \item For each quadratic constraint in the previous instance, we include the
      linearized version of it in the new instance. That is, term $x_i x_j$ is replaced
      by entry $(i+1,j+1)$ of the matrix, term $x_i$ is replaced by entry $(1,i+1)$,
      and constant $1$ is replaced by entry $(1,1)$.
  \end{enumerate}
  For an edge $e$, let $A'_e$ be the matrix in the \LC instance from Theorem \ref{thm:lcvec}, 
  then we define the matrix for the new instance to be $A_e=\left(
  \begin{array}{c c}
    1 & 0 \\
    0 & A'_e
  \end{array}
  \right)$. Let $M_v=\left(
  \begin{array}{c c}
    a & \alpha \\
    \beta & D
  \end{array}
  \right)$ be a label. Then it is mapped to
  \[
    \pi^e(M_v)=A_e M_v A_e^T = \left(
    \begin{array}{c c}
      a & \alpha A'^T_e \\
      A'_e \beta & A'_e D A'^T_e
    \end{array}
    \right)\,.
  \]

  In the completeness case, a vector label $\alpha$ in the previous theorem
  is transformed into the matrix label $(1 \ \alpha)(1 \ \alpha)^T$.

  For the soundness case, suppose that there
  are pseudo-quadratic matrices $M_u$ and $M_v$ for each $u \in U$ and $v \in V$,
  such that $M_v$ satisfies homogeneous linear constraints associated with $v$, 
  $\rank(M_v) \le k$, and that for $2^{-(\log n)^b}$ fraction of the edges $e$, 
  $\pi^e(M_v)=M_u$.

  For vertex $v \in V$, by Lemma \ref{lem:rankoddcov},
  there exists odd integer $l<3/2 \cdot (\log n)^b/2<(\log n)^b$ vectors 
  $y'_1,\cdots,y'_l \in \F{2}^{m_r+1}$, where 
  $y'_{i,1}=1$ for $i \in [l]$, such that $M_v=\sum_{i=1}^{l} {y}'_i \otimes {y}'_i$,
  and the assignments $y_1:=D_1({y}_1),\cdots,y_l:=D_1({y}_l)$ satisfy in superposition the quadratic
  constraints of the \LC instance from Theorem \ref{thm:lcvec}. 
  By padding $0$ assignments, we can make sure that we have
  exactly $(\log n)^b$ assignments $y_1,\cdots,y_{(\log n)^b}$ that satisfy in superposition
  the quadratic constraints of the \LC instance from Theorem \ref{thm:lcvec}, 
  and for all $j \in \left[ (\log n)^b \right]$,
  $y_j$ is in the column space of $D_1(M_v)$.

  For the decoding of vertices in $U$, we use the following lemma from \cite{ksspeccc}
  (Lemma 7.3), adapted to our choice of parameters. The proof is identical
  and we include it here for sake of completeness.
  \begin{lemma}
    Fix any $v \in V$, a rank parameter $l \le (\log n)^b$, 
    and a matrix $M \in \F{2}^{(m_r+1) \times (m_r+1)}$ such that $\rank(D_1(M))=l$.
    Then over the choice of a random neighbor
    $u$ of $v$, we have $\rank(D_1(\pi^{(u,v)}(M)))=l$ except with probability $2^{-(\log n)^{b+1}}$.
  \end{lemma}
  \begin{proof}
    Using Lemma 2.1 of \cite{ksspeccc}, $D_1(M)$ can be decomposed into the canonical form
    \[
    D_1(M)=\sum_{j=1}^{s} z_j \otimes z_j + \sum_{j=1}^{t}
    \left( z_{s+2j-1} \otimes z_{s+2j} + z_{s+2j} \otimes z_{s+2j-1} \right)\,,
    \]
    where $l=s+2t$ is the rank of $D_1(M)$, and $z_1,\cdots,z_l$ are linearly independent.
    For $j=1,\cdots,l$, define $z'_j:=A'_{(u,v)} z_j$. Then $D_1(\pi^{(u,v)}(M))$ can be
    written as
    \[
    D_1(\pi^{(u,v)}(M))=\sum_{j=1}^{s} z'_j \otimes z'_j + \sum_{j=1}^{t}
    \left( z'_{s+2j-1} \otimes z'_{s+2j} + z'_{s+2j} \otimes z'_{s+2j-1} \right)\,,
    \]
    Consider a non-zero linear combination $z$ of the vectors $\left\{ z_j \right\}_{j=1}^{l}$.
    The vector $A'_{(u,v)} z$ is the corresponding linear combination of
    the vectors $\left\{ z'_j \right\}_{j=1}^{l}$, and is non-zero with probability 
    $2^{-(\log n)^{b+3}}$ by the smoothness guarantee from Theorem \ref{thm:lcvec}.
    Taking a union bound over all $2^l-1$ non-zero linear combination, we conclude
    that except with probability at most $2^{-(\log n)^{b+1}}$, the vectors
    $\left\{ z'_j \right\}_{j=1}^{l}$ are also linearly independent and spans
    the column space of $D_1(\pi^{(u,v)}(M))$.
  \end{proof}

  The smoothness property of the new \LC instance follows easily from the above lemma.
  
  For each $u$, we choose $(\log n)^b$ uniformly random vectors ${x}_1,\cdots,{x}_{(\log n)^b}$
  from the column space of $D_1(M_u)$.  
  Now we analyze the expected value of this assignment.

  Note that here the soundness parameter $2^{-(\log n)^b} \gg 2^{-(\log n)^{b+1}}$. Therefore, by the above lemma,
  for $2^{-(\log n)^{b+o(1)}}$ fraction of the edges $e=(u,v)$, we have 
  $\rank(D_1(M_v))=\rank(D_1(\pi^{e}(M_v)))=\rank(D_1(A_{e} M_v A_{e}^T))=\rank(A'_{e} D_1(M_v) A'^T_{e})$.

  Fix such an edge. 
  Since $\rank(D_1(M_v))=\rank(A'_e D_1(M_v) A'^T_e) \le \rank(A'_e D_1(M_v)) \le \rank(D_1(M_v))$,
  this means that $\rank(A'_e D_1(M_v) A'^T_e) = \rank(A'_e D_1(M_v)) = \rank(D_1(M_v))$.
  For any $y$ that is in the column space of $D_1(M_v)$, $A'_e y$ is in the column space of $A'_e D_1(M_v)$,
  and since $\rank(A'_e D_1(M_v))=\rank(A'_e D_1(M_v) A'^T_e)$, we conclude that $A'_e y$ is also in the column space
  of $A'_e D_1(M_v) A'^T_e=D_1(M_u)$.
  Thus for edge $e$, with probability at least $2^{-(\log n)^{2b}}$,
  we get $\pi(y_j)=A'_e y_j=x_j$ for all $j \in \left\{ 1,\cdots,l \right\}$.

  Overall, this labeling satisfies 
  $2^{-(\log n)^{b+o(1)}} 2^{-(\log n)^{2b}} = 2^{-(\log n)^{2b+O(1)}}$ fraction
  of the edges in the old instance.
\end{proof}

\section{Hypergraph Coloring Hardness}
We now compose the \LC from Theorem \ref{thm:lcmatrix}
with a \QDCD inner-verifier to get inapproximability result
for hypergraph coloring.

\begin{theorem}
  There is a reduction that takes as input a \ksat{3} instance
  of size $n$, outputs a $8$-uniform hypergraph $H$ with the following properties:
  \begin{itemize}
    \item The size of the hypergraph $H$ and the running time of the
      reduction are both upper-bounded by $\exp( (\log n)^{(10+o(1))b})$.
    \item If the \ksat{3} instance is satisfiable, then $H$ is $2$-colorable.
    \item If the \ksat{3} instance is unsatisfiable, then $H$ does not have
      independent set of fractional size larger than $2^{-O( (\log n)^b)}$.
  \end{itemize}
  In other words, it is quasi-\np-hard to color a $2$-colorable
  $8$-uniform hypergraph of size $N$ with less than $2^{(\log N)^{1/10-o(1)}}$
  colors.
\end{theorem}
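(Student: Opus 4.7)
The plan is to compose the matrix-labeled \LC~$\mathcal{L}$ from Theorem~\ref{thm:lcmatrix} with a \QDCD inner-verifier in the style of \cite{kssp} and \cite{varmasp}. For every vertex $w\in U\cup V$, introduce a $\boolean$-valued table $F_w$ indexed by symmetric matrices in $\F{2}^{(m_w+1)\times(m_w+1)}$, intended to be the Hadamard encoding $F_w(Q)=(-1)^{\langle Q,M_w\rangle}$ of a pseudo-quadratic matrix label $M_w$. Fold $F_w$ so as to enforce (i) the linear conditions of pseudo-quadraticity other than $M_{1,1}=1$ (symmetry and $M_{i,i}=M_{1,i}=M_{i,1}$), (ii) the condition $M_{1,1}=1$ by folding over constants, and (iii) the homogeneous linear constraints in $\Gamma(v)$ for each right vertex $v$. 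The vertices of the hypergraph $H$ are the pairs $(w,Q)$; since $|U\cup V|=2^{O((\log n)^{5b+O(1)})}$ and each table has size $2^{(m_w+1)^2}=2^{O((\log n)^{10b+O(1)})}$, the total vertex count is $\exp((\log n)^{(10+o(1))b})$, matching the claimed size.

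The hyperedges come from an $8$-query not-all-equal test following Varma's simplification \cite{varmasp}. Roughly, sample a random edge $e=(u,v)$ of $\mathcal{L}$ and uniform symmetric matrices $Q_1,Q_2$ on the $v$-side and $Q'$ on the $u$-side, form two extra queries using the coupling $Q\mapsto A_eQA_e^T$ so that consistency with $\pi^e$ becomes a linear constraint among the queried points, then place a NAE hyperedge on the $8$ resulting positions (the additional factor of two comes from the standard symmetrization needed to make the test NAE rather than XOR-based). In the completeness case, the labeling guaranteed by Theorem~\ref{thm:lcmatrix} yields rank-$1$ pseudo-quadratic matrices $M_w=(1,x_w)\otimes(1,x_w)$, so each $F_w(Q)=(-1)^{\langle Q,M_w\rangle}$ is a $\boolean$-valued character; the test is designed precisely so that the $8$ queried values are NAE-satisfied under such honest encodings, and the coloring defined by these $\boolean$ values is a proper $2$-coloring of $H$.

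The soundness analysis is the main obstacle. Assume for contradiction that $H$ has a fractional independent set of size $\eta:=2^{-O((\log n)^b)}$ larger than the claimed bound. Via the standard reduction from hypergraph coloring to inner-verifier analysis, this yields folded $\boolean$-valued functions $g_w$ on the symmetric-matrix domains whose test expectation is $\Omega(\eta)$ away from the honest-encoding value. Expanding $g_v$ in the Hadamard basis and applying Parseval together with Cauchy--Schwarz, one argues that for at least an $\eta$-fraction of the \LC edges $e=(u,v)$, both $g_v$ and $g_u$ place significant Fourier mass on symmetric matrices $M_v,M_u$ of rank at most $(\log n)^b/2$; the folds force $M_v\in\Gamma(v)$ and pseudo-quadraticity, the projection $M_u=A_eM_vA_e^T$ is read off the coupling built into the test, and the $\delta$-smoothness of Theorem~\ref{thm:lcmatrix} ensures (exactly as in the rank-preservation argument inside the proof of Theorem~\ref{thm:lcmatrix}) that these ranks are preserved across most edges. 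Sampling matrix labels from the resulting Fourier distributions then produces a labeling of $\mathcal{L}$ meeting all conditions in the soundness clause of Theorem~\ref{thm:lcmatrix} with probability exceeding the forbidden $2^{-(\log n)^b}$, a contradiction.

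The parameter bookkeeping is immediate: with $N=\exp((\log n)^{(10+o(1))b})$ we have $\log N=(\log n)^{(10+o(1))b}$, hence $(\log N)^{1/10-o(1)}\le (\log n)^{b(1-o(1))}$ when $b$ is taken to grow slowly with $n$; any coloring of $H$ with fewer than $2^{(\log N)^{1/10-o(1)}}$ colors would supply an independent set of fractional size at least $2^{-(\log n)^{b(1-o(1))}}$, which exceeds the soundness bound and therefore contradicts Theorem~\ref{thm:lcmatrix}. The delicate point in the whole argument is the simultaneous Fourier analysis of the $8$-query NAE test in the presence of all three foldings, where one must (a) decode into pseudo-quadratic matrices of rank at most $(\log n)^b/2$, (b) couple the $u$-side and $v$-side decodings via $\pi^e$, and (c) keep the soundness gap strictly below $2^{-(\log n)^b}$ so as to be killed by Theorem~\ref{thm:lcmatrix}; everything else is standard PCP composition and routine counting.
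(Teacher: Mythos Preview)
Your high-level plan---compose the matrix-labeled \LC of Theorem~\ref{thm:lcmatrix} with a \QDCD inner verifier, fold to enforce pseudo-quadraticity and the $\Gamma(v)$ constraints, and read off size $\exp((\log n)^{(10+o(1))b})$---is the paper's plan. But the concrete verifier you describe is not the one used, and the discrepancy is exactly where the work lies.

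First, the paper places tables only on the right side $V$. The test samples one $u\in U$ and \emph{two} independent neighbours $v,w\in V$, then makes four queries into $f_v$ and four into $f_w$; the coupling through $u$ is a single random matrix $F\in\F{2}^{(m_l+1)\times(m_l+1)}$ entering as $F\circ\pi$ and $F\circ\sigma$. Your single-edge picture with tables on both $U$ and $V$ and ``symmetrization to get NAE'' is a different verifier, and you have not exhibited any $8$-query NAE test of that shape with the required guarantees.

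Second, and this is the real gap, your sketch omits the mechanism that forces Fourier mass onto low-rank matrices. In the paper each derived query carries a rank-$1$ perturbation $x\otimes y$, $(x+e)\otimes z$, $x'\otimes y'$, $(x'+e)\otimes z'$. After Fourier expansion these produce factors $\Pr_x[\alpha_1 x=0\wedge\alpha_2(x+e)=0]$, which are at most $2^{-\rank(\alpha_1+\alpha_2)}$ and kill the high-rank contribution $\Theta_2$. Without these perturbations your ``Parseval plus Cauchy--Schwarz'' step has no way to conclude that the decoded matrices have rank at most $(\log n)^b/2$, which is precisely the hypothesis you must feed into the soundness clause of Theorem~\ref{thm:lcmatrix}.

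Finally, $\delta$-smoothness is not invoked in this composition at all; it was already consumed inside the proof of Theorem~\ref{thm:lcmatrix}. The actual soundness argument here is a three-way split $\Theta=\Theta_0+\Theta_1+\Theta_2$ of the all-in-independent-set probability: the zero-character term gives $\Theta_0\ge s^8$; the low-rank terms with $(\beta_1+\beta_2)_{1,1}=1$ define a randomized labeling and hence $|\Theta_1|\le\delta$ by the soundness of Theorem~\ref{thm:lcmatrix}; and $|\Theta_2|\le 2^{-k/2-1}$ by the rank-$1$-perturbation bound above. Your outline conflates the roles of smoothness and soundness and does not identify this decomposition, so as written it does not constitute a proof.
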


The following proof is based on a note by Girish Varma \cite{varmasp}.

Given the \LC instance from Theorem \ref{thm:lcmatrix}, we expect for each
vertex $v \in V$ a function $f_v:\F{2}^{(m_r+1) \times (m_r+1)} \to \F{2}$.
The expected encoding for matrix label $\sigma(v)=a_v \otimes a_v$ is
$f_v(A)=\langle a_v \otimes a_v, A \rangle=a_v^T A a_v$.
Let $\mathcal{H}_v \subseteq \F{2}^{(m_r+1) \times (m_r+1)}$ be the 
dual of the subspace of the set of pseudo-quadratic matrices
that satisfies the linear constraints associated with $v$.
The function $f_v$ is folded over $\F{2}^{(m_r+1) \times (m_r+1)} / \mathcal{H}_v$.

Consider the following Boolean $8$-uniform test:
\begin{itemize}
  \item Choose $u \in U$ uniformly at random, and $v,w \in V$ uniformly and
    independently at random from the neighbors of $u$.
    Let $\pi,\sigma:\F{2}^{(m_r+1) \times (m_r+1)} \to \F{2}^{(m_l+1) \times (m_l+1)}$
    be the projections corresponding to the edges
    $(u,v)$ and $(u,w)$ respectively, and let $S_{\pi}$ and $S_{\sigma}$ be the
    index set associated with them.
  \item Uniformly and independently sample $X_1,X_2,Y_1,Y_2 \in \F{2}^{(m_r+1) \times (m_r+1)}$,
    $F \in \F{2}^{(m_l+1) \times (m_l+1)}$,
    and $x,y,z,x',y',z' \in \F{2}^{m_r+1}$. 
    Let $e \in \F{2}^{m_r+1}$ be the vector with only the $1$-st entry $1$ and the rest
    $0$.
  \item Accept if and only if the following $8$ values are not all equal:

    \begin{tabular}{l l l}
      $f_v(X_1)$ & $f_v(X_3)$ & where $X_3:=X_1+x \otimes y+F \circ \pi$ \\
      $f_v(X_2)$ & $f_v(X_4)$ & where $X_4:=X_2 + (x+e) \otimes z+F \circ \pi$ \\
      $f_w(Y_1)$ & $f_w(Y_3)$ & where $Y_3:=Y_1+x' \otimes y'+F \circ \sigma + e \otimes e$ \\
      $f_w(Y_2)$ & $f_w(Y_4)$ & where $Y_4:=Y_2+(x'+e) \otimes z'+F \circ \sigma+e \otimes e$
    \end{tabular}
\end{itemize}

We denote by $\mathcal{T}$ the test distribution.

The vertex set of the hypergraph has size 
\[\exp((\log n)^{(5+o(1))b})) \cdot 2^{(\log n)^{2(5+o(1))b}}=\exp((\log n)^{(10+o(1)) b}))=:N\,.\]

\subsection{Completeness}
Let $y_v \otimes y_v$ for $v \in V$ and $x_u \otimes x_u$ for $u \in U$ be a perfect labeling
for the Label Cover instance, with $y_{v,1}=x_{u,1}=1$ and for each edge $e=\left\{ u,v \right\} \in E$,
we have $(y_v)|_{S_e}=x_u$. Consider the $2$-coloring
where for each $v \in V$, $f_v(X)=y_v^T X y_v=\langle X, y_v \otimes y_v \rangle$.
Such a function is constant over cosets of $\mathcal{H}_v$.
Let
$x_1:=\langle X_1,y_v \otimes y_v\rangle$, $x_2:=\langle X_2,y_v \otimes y_v\rangle$,
$y_1:=\langle Y_1,y_w \otimes y_w\rangle$, $y_2:=\langle Y_2,y_w \otimes y_w\rangle$, and
$f:=\langle F,x_u \otimes x_u\rangle$. 
Note that 
$\langle F,x_u \otimes x_u\rangle=\langle F,\pi_{u,v}(y_v \otimes y_v)\rangle=\langle F \circ \pi_{uv},y_v \otimes y_v\rangle$.
Also, $\langle e \otimes e,y_v \otimes y_v\rangle=\langle e,y_v \rangle=1$. Therefore, the value
of the $8$ queries are 

\vspace{.1in}

\begin{tabular}{l c l}
  $x_1$ & & $x_1+\langle y_v,x\rangle\langle y_v,y\rangle+f$ \\
  $x_2$ & & $x_2+(\langle y_v,x\rangle+1)\langle y_v,z\rangle+f$ \\
  $y_1$ & & $y_1+\langle y_w,x'\rangle\langle y_w,y'\rangle+f+1$ \\
  $y_2$ & & $y_2+(\langle y_w,x'\rangle+1)\langle y_w,z'\rangle+f+1$ \\
\end{tabular}

\vspace{.1in}

We finish the proof of the completeness case by a case analysis.

If $\langle y_v,y \rangle=\langle y_w,y'\rangle=0$, then the sum of entries in the first and third row 
is $1$, which means that there are different values.
Similarly, we conclude that if $\langle y_v,z \rangle=\langle y_w,z' \rangle=0$, then using similar argument as above,
there are different values in the second and the fourth row.
The same applies to the case when $\langle y_v,x\rangle=\langle y_2,x'\rangle=1$, and
the case when $\langle y_v,x\rangle=\langle y_w,x'\rangle=0$.

Suppose now that $\langle y_v,x \rangle=1$ and all entries are equal. Then from the second row, we have
that $f=0$, and from the first row, we get $\langle y_v,y \rangle=0$. By the discussion above, we have
that $\langle y_w,y' \rangle=1$, and the third row gives us $\langle y_w,x'\rangle =1$, but then the
two entries on the last row are different.

Suppose otherwise that $\langle y_v,x\rangle =0$ and all entries are equal. Then from the first row, we have
$f=0$, and the second row implies $\langle y_v,z \rangle=0$. By the discussion above, we must
have $\langle y_w,z' \rangle=1$, and the last row gives $\langle y_w,x' \rangle=0$, leaving two
different entries in the third row.

Hence $f_v$ gives a valid $2$-coloring of $\mathcal{G}$.

\subsection{Soundness}
Let $\delta=2^{-(\log n)^b}$ be the soundness parameter from Theorem \ref{thm:lcmatrix}
and $k=(\log n)^b/2$ be the rank upper-bound from Theorem \ref{thm:lcmatrix}.

\begin{lemma}
  If there is an independent set in $\mathcal{G}$ of relative size $s$, then
  \[
  s^8 \le \delta+\frac{1}{2^{k/2+1}}\,.
  \]
\end{lemma}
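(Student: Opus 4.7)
The strategy is Fourier analysis of the eight-query test, in the spirit of Khot, Saket, and Varma. Let $f_v : \F{2}^{(m_r+1) \times (m_r+1)} \to \{0,1\}$ be the characteristic function of the independent set at the right-vertex $v$, so $\E_X[f_v(X)] = s_v$ and $\E_v[s_v] = s$. Because $f_v$ is folded over cosets of $\mathcal{H}_v$, its nonzero Fourier coefficients are supported in $\mathcal{H}_v^{\bot}$, consisting of symmetric, pseudo-quadratic (in the homogeneous sense) matrices annihilating the homogeneous linear constraints at $v$. Since $\mathcal{G}$ contains no hyperedge inside the independent set, $T := \E_{\mathcal{T}}[\prod_i f_v(X_i) \prod_j f_w(Y_j)] = 0$, and the goal is to show $T \ge s^8 - \delta - 2^{-k/2-1}$.

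The plan is to Fourier expand each factor and evaluate the expectation in stages. Averaging over the independent uniform $X_1, X_2, Y_1, Y_2$ forces pairings of Fourier indices $A_1 = A_3 =: A$, $A_2 = A_4 =: A'$, $B_1 = B_3 =: B$, $B_2 = B_4 =: B'$. Averaging over the uniform $F \in \F{2}^{(m_l+1) \times (m_l+1)}$ imposes the projection-consistency constraint $(\pi_u^v)_2(A+A') = (\pi_u^w)_2(B+B')$, and contributes a sign factor $(-1)^{(B+B')_{1,1}}$ from the $e \otimes e$ shifts. Averaging over $y, z, y', z'$ produces kernel indicators $[Ax=0]$, $[A'(x+e)=0]$, $[Bx'=0]$, $[B'(x'+e)=0]$, and the remaining averages over $x, x' \in \F{2}^{m_r+1}$ yield decay factors governed by $\rank(A+A')$ and $\rank(B+B')$.

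I would then split the sum by rank. The constant term $A=A'=B=B'=0$ contributes $\E_{u,v,w}[s_v^4 s_w^4]$, which is at least $s^8$ by two applications of Jensen's inequality, using that $v, w$ are i.i.d.\ given $u$. The high-rank terms, where $\rank(A+A') > k$ or $\rank(B+B') > k$, contribute at most $2^{-k/2-1}$ in total, combining the rank-based decay with Parseval's identity $\sum_A \hat{f}_{v,A}^2 = \E[f_v^2] = s_v \le 1$. For the low-rank nontrivial terms, I would apply Lemma \ref{lem:rankoddcov} to decode the pseudo-quadratic rank-at-most-$k$ matrix $A+A'$ into at most $3k/2 < (\log n)^b$ vector labels at $v$ with first coordinate equal to $1$, and similarly for $B+B'$ at $w$. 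The projection consistency then yields a labeling of the matrix Label Cover whose success rate would contradict the soundness guarantee in Theorem \ref{thm:lcmatrix}, unless the total low-rank contribution is at most $\delta$.

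The main technical obstacle is arranging the low-rank decoding so that the induced labeling satisfies all the hypotheses of Theorem \ref{thm:lcmatrix}: the decoded matrices must be pseudo-quadratic (the $e \otimes e$ offsets in $Y_3, Y_4$ are precisely what forces the decoded vectors to have first coordinate $1$, which is why the \emph{odd}-decoding of Lemma \ref{lem:rankoddcov} is needed rather than the even-count decoding of Lemma \ref{lem:ksdecomp}); the associated quadratic constraints must be satisfied in superposition (which follows from folding over $\mathcal{H}_v$ together with odd-covering); and the projection identity must hold for a $\delta$-fraction of edges. The $\delta$-smoothness of Theorem \ref{thm:lcmatrix} is then invoked to prevent degenerate low-rank Fourier weight (where the projections of $A+A'$ collapse) from accumulating in a way that would spoil the soundness-based accounting.
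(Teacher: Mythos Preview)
Your Fourier-analytic outline is essentially the paper's argument, but there is one structural point you have not quite nailed down and one unnecessary detour.

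\textbf{The sign split.} You partition into ``constant / low-rank nontrivial / high-rank'', and propose to bound all low-rank nontrivial terms by the Label-Cover soundness. This does not work as stated. The sign factor $(-1)^{(B+B')_{1,1}}$ means that the low-rank terms with $\nu(\beta_1+\beta_2):=(\beta_1+\beta_2)_{1,1}=0$ are \emph{nonnegative}, while those with $\nu=1$ are \emph{nonpositive}. The paper therefore splits the low-rank contribution as $\Theta_0$ (all $\nu=0$ terms, which are $\ge 0$ and include the constant term, giving $\Theta_0\ge s^8$) and $\Theta_1$ (all $\nu=1$ terms). Only $\Theta_1$ is bounded via Label Cover. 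This split is not cosmetic: the soundness clause of Theorem~\ref{thm:lcmatrix} requires the $V$-label to be \emph{pseudo-quadratic}, i.e.\ to have $(1,1)$-entry equal to $1$, and that is exactly the condition $\nu(\beta_1+\beta_2)=1$. For $\nu=0$ the decoded matrix is not pseudo-quadratic and Theorem~\ref{thm:lcmatrix} does not apply; fortunately those terms carry a $+$ sign and need no upper bound.

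\textbf{The decoding level.} You propose to apply Lemma~\ref{lem:rankoddcov} to decompose $A+A'$ into vector labels and then invoke smoothness. This is unnecessary: the paper decodes directly at the matrix level, assigning to $v$ the matrix $\beta_1+\beta_2$ (pseudo-quadratic by the $\nu=1$ restriction, valid by folding over $\mathcal{H}_v$, rank $\le k$ by the split) and to $u$ the projected matrix $\pi(\alpha_1+\alpha_2)$ via a random neighbor. Then Theorem~\ref{thm:lcmatrix} is invoked as a black box; Lemma~\ref{lem:rankoddcov} and the smoothness property are already consumed inside its proof and do not reappear here.
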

\begin{proof}
  Consider any set $A \subseteq \mathcal{V}(\mathcal{G})$ of fractional size $s$.
  For every $v \in V$, let $f_v:\F{2}^{(m_r+1) \times (m_r+1)} \to [0,1]$ be the indicator
  function of $A$, extended such that it is constant over cosets of $\mathcal{H}_v$.
  The fractional size of $A$ is given by
  \begin{align*}
    \E_{\substack{v \sim V \\ X \sim \F{2}^{(m_r+1) \times (m_r+1)}}}
    \left[ f_v(X) \right] \ =&\  \E_{v \sim V}\left[ \widehat{f}_{v,0}\right]\,.
  \end{align*}

  The set $A$ is an independent set if and only if
  \begin{equation}
    \Theta:=\E_{u,v,w} \E_{X_i,Y_i \sim \mathcal{T}} \prod_{i=1}^{4} f_v(X_i)f_w(Y_i) = 0\,.
    \label{eq:hcolmain}
  \end{equation}
  Taking Fourier expansion and considering expectations over $X_1,X_2,Y_1,Y_2$, we get the following:
  \begin{align*}
    \Theta \ =&\ 
    \E_{u,v,w} \sum_{\alpha_1,\alpha_2,\beta_1,\beta_2 \in \F{2}^{(m_r+1) \times (m_r+1)}} \E_{F,x,x'}
    \Bigg[ \\
    &\  \widehat{f}_{v,\alpha_1}^2 \E_{y}[\chi_{\alpha_1}(x \otimes y)] \chi_{\alpha_1}(F \circ \pi) \\
    &\  \widehat{f}_{v,\alpha_2}^2 \E_{z}[\chi_{\alpha_2}( (x+e) \otimes z)] \chi_{\alpha_2}(F \circ \pi) \\
    &\  \widehat{f}_{w,\beta_1}^2 \E_{y'}[\chi_{\beta_1}(x' \otimes y')]  
    \chi_{\beta_1}(F \circ \sigma)\chi_{\beta_1}(e \otimes e) \\
    &\  \widehat{f}_{w,\beta_2}^2 \E_{z'}[\chi_{\beta_2}( (x'+e) \otimes z')] 
    \chi_{\beta_2}(F \circ \sigma)\chi_{\beta_2}(e \otimes e)
    \Bigg]\,.
  \end{align*}
  Denote the term inside $\E_{F,x,x'}[\cdot]$ as $Term_{u,v,w}(\alpha_1,\alpha_2,\beta_1,\beta_2)$.

  For the characters involving $F$, we have
  \begin{align*}
    &\  \E_{F}\left[ \chi_{\alpha_1}(F \circ \pi)\chi_{\alpha_2}(F \circ \pi)\chi_{\beta_1}(F \circ \sigma)
    \chi_{\beta_2}(F \circ \sigma) \right] \\
    =&\  
    \E_{F}\left[ (-1)^{\langle\pi(\alpha_1+\alpha_2),F\rangle+\langle\sigma(\beta_1+\beta_2),F\rangle} \right]\,,
  \end{align*}
  and since $F \in \F{2}^{(m_l+1) \times (m_l+1)}$ is chosen uniformly at random,
  the above is $0$ unless $\pi(\alpha_1+\alpha_2)=\sigma(\beta_1+\beta_2)$.

  Let $\nu(\alpha):=\langle\alpha,e \otimes e\rangle$. Taking expectations over $x,y,z,x',y',z'$,
  we have that when $\pi(\alpha_1+\alpha_2) \ne \sigma(\beta_1+\beta_2)$, 
  $Term_{u,v,w}(\alpha_1,\alpha_2,\beta_1,\beta_2)=0$, and otherwise
  \begin{align*}
    &\  Term_{u,v,w}(\alpha_1,\alpha_2,\beta_1,\beta_2) \\
    =&\  (-1)^{\nu(\beta_1+\beta_2)} \widehat{f}_{v,\alpha_1}^2\widehat{f}_{v,\alpha_2}^2\widehat{f}_{w,\beta_1}^2
    \widehat{f}_{w,\beta_2}^2 \\ &\  \Pr_{x}\left[ \alpha_1 x= 0 \land \alpha_2 x=\alpha_2 e \right]
    \Pr_{x'}\left[ \beta_1 x=0 \land \beta_2 x' = \beta_2 e \right]\,.
  \end{align*}
  The terms that are potentially non-zero can now be partitioned into three parts:
  \begin{align*}
    \Theta_0 \ =&\ 
    \E_{u,v,w} \sum_{\substack{\rank(\alpha_1+\alpha_2),\rank(\beta_1+\beta_2) \le k \\ 
    \pi(\alpha_1+\alpha_2)=\sigma(\beta_1+\beta_2) \\ \nu(\beta_1+\beta_2)=0}}
    Term_{u,v,w}(\alpha_1,\alpha_2,\beta_1,\beta_2) \\
    \Theta_1 \ =&\ 
    \E_{u,v,w} \sum_{\substack{\rank(\alpha_1+\alpha_2),\rank(\beta_1+\beta_2) \le k \\ 
    \pi(\alpha_1+\alpha_2)=\sigma(\beta_1+\beta_2) \\ \nu(\beta_1+\beta_2)=1}}
    Term_{u,v,w}(\alpha_1,\alpha_2,\beta_1,\beta_2) \\
    \Theta_2 \ =&\ 
    \E_{u,v,w} \sum_{\substack{\max\{\rank(\alpha_1+\alpha_2),\rank(\beta_1+\beta_2)\} > k \\ 
    \pi(\alpha_1+\alpha_2)=\sigma(\beta_1+\beta_2)}}
    Term_{u,v,w}(\alpha_1,\alpha_2,\beta_1,\beta_2)\,.
  \end{align*}
  We first lower-bound $\Theta_0$. Note that all terms in $\Theta_0$ are positive.
  Consider the term corresponding to $\alpha_1=\alpha_2=\beta_1=\beta_2=0$. We have
  \[
  \E_{u,v,w}\widehat{f}_{v,0}^4 \widehat{f}_{w,0}^4=\E_{u}\left( \E_v \widehat{f}_{v,0}^4 \right)^2
  \ge \left( \E_{u,v} \widehat{f}_{v,0} \right)^8 \ge s^8\,.
  \]
  Therefore $\Theta_0 \ge s^8$.
  
  For $\Theta_1$, we have the following upper-bound
  \begin{equation}
    |\Theta_1| \le 
    \E_{u,v,w} \sum_{\substack{\rank(\alpha_1+\alpha_2),\rank(\beta_1+\beta_2) \le k \\ 
    \pi(\alpha_1+\alpha_2)=\sigma(\beta_1+\beta_2) \\ \nu(\beta_1+\beta_2)=1}}
    \widehat{f}_{v,\alpha_1}^2 \widehat{f}_{v,\alpha_2}^2 \widehat{f}_{w,\beta_1}^2
    \widehat{f}_{w,\beta_2}^2\,.
    \label{eq:hcoldecoding}
  \end{equation}
  Consider the following randomized labeling strategy for vertices in $u \in U$ and $v \in V$:
  for $v \in V$, pick $(\beta_1,\beta_2)$ with probability $\widehat{f}_{v,\beta_1}^2\widehat{f}_{v,\beta_2}^2$
  and set its label to $\beta_1+\beta_2$; for $u \in U$, pick a random neighbor $v$, and choose
  $(\alpha_1,\alpha_2)$ with probability $\widehat{f}_{v,\alpha_1}^2\widehat{f}_{v,\alpha_2}^2$
  and set its label to $\pi(\alpha_1+\alpha_2)$. Due to folding, we have that $\beta_1$ and $\beta_2$
  both satisfies the homogeneous linear constraints associated with $v$, and so does $\beta_1+\beta_2$.
  Therefore the right hand side of (\ref{eq:hcoldecoding}) gives the probability that a random
  edge of the Label Cover is satisfied by this labeling. Thus $|\Theta_1| \le \delta$.

  For $\Theta_2$, note that if $\rank(\alpha)>k$, then for any fixed $b$,
  $\Pr_x[\alpha x=b] \le 1/2^{k+1}$. Therefore, for any fixed choice of $u,v,w$,
  all terms in $\Theta_2$ have absolute value at most $1/2^{k/2+1}$. 
  Combined with Parseval's identity, we conclude that $|\Theta_2| \le 1/2^{k/2+1}$.
\end{proof}

We conclude that any independent set in $\mathcal{G}$ has fractional size at most
$2^{-\log^b n/32}$, and therefore the chromatic number of $\mathcal{G}$ is at least
$2^{\log^b n/32} = \exp( (\log N)^{1/(10-o(1))})$.

\section*{Acknowledgments}
I would like to thank Johan Håstad for numerous inspiring discussions.
I am also grateful to Rishi Saket who pointed out a mistake in an earlier version of 
this manuscript.

\bibliographystyle{plain}
\bibliography{oddcovernew-arxiv}

\begin{thebibliography}{10}

\bibitem{akmhcoloring}
Noga Alon, Pierre Kelsen, Sanjeev Mahajan, and Ramesh Hariharan.
\newblock Approximate hypergraph coloring.
\newblock {\em Nord. J. Comput.}, 3(4):425--439, 1996.

\bibitem{arorathesis}
Sanjeev Arora.
\newblock {\em Probabilistic Checking of proofs and the hardness of
  approximation problems}.
\newblock PhD thesis, UC Berkeley, 1994.

\bibitem{arorachlamtaccoloring}
Sanjeev Arora and Eden Chlamtac.
\newblock New approximation guarantee for chromatic number.
\newblock In {\em Proceedings of the thirty-eighth annual ACM symposium on
  Theory of computing}, STOC '06, pages 215--224, New York, NY, USA, 2006. ACM.

\bibitem{arorapcp1}
Sanjeev Arora, Carsten Lund, Rajeev Motwani, Madhu Sudan, and Mario Szegedy.
\newblock Proof verification and the hardness of approximation problems.
\newblock {\em J. ACM}, 45(3):501--555, 1998.

\bibitem{arorapcp2}
Sanjeev Arora and Shmuel Safra.
\newblock Probabilistic checking of proofs: A new characterization of
  $\mathsf{NP}$.
\newblock {\em J. ACM}, 45(1):70--122, 1998.

\bibitem{arorasudanlowdegree}
Sanjeev Arora and Madhu Sudan.
\newblock Improved low-degree testing and its applications.
\newblock {\em Combinatorica}, 23(3):365--426, 2003.

\bibitem{shortcode}
Boaz Barak, Parikshit Gopalan, Johan H{\aa}stad, Raghu Meka, Prasad
  Raghavendra, and David Steurer.
\newblock Making the long code shorter.
\newblock In {\em 53rd Annual {IEEE} Symposium on Foundations of Computer
  Science, {FOCS} 2012, New Brunswick, NJ, USA, October 20-23, 2012}, pages
  370--379, 2012.

\bibitem{bergerrompelcoloring}
Bonnie Berger and John Rompel.
\newblock A better performance guarantee for approximate graph coloring.
\newblock {\em Algorithmica}, 5(3):459--466, 1990.

\bibitem{blumkarger}
Avrim Blum and David~R. Karger.
\newblock An ${\tilde{\uppercase{o}}}(n^{3/14})$-coloring algorithm for
  3-colorable graphs.
\newblock {\em Inf. Process. Lett.}, 61(1):49--53, 1997.

\bibitem{cfcoloring}
Hui Chen and Alan~M. Frieze.
\newblock Coloring bipartite hypergraphs.
\newblock In {\em Integer Programming and Combinatorial Optimization, 5th
  International {IPCO} Conference, Vancouver, British Columbia, Canada, June
  3-5, 1996, Proceedings}, pages 345--358, 1996.

\bibitem{hierarchycoloring}
Eden Chlamtac.
\newblock Approximation algorithms using hierarchies of semidefinite
  programming relaxations.
\newblock In {\em FOCS}, pages 691--701, 2007.

\bibitem{dinurguruswamishortcode}
Irit Dinur and Venkatesan Guruswami.
\newblock Pcps via low-degree long code and hardness for constrained hypergraph
  coloring.
\newblock In {\em 54th Annual {IEEE} Symposium on Foundations of Computer
  Science, {FOCS} 2013, 26-29 October, 2013, Berkeley, CA, {USA}}, pages
  340--349, 2013.

\bibitem{dmrcoloring}
Irit Dinur, Elchanan Mossel, and Oded Regev.
\newblock Conditional hardness for approximate coloring.
\newblock {\em SIAM J. Comput.}, 39(3):843--873, 2009.

\bibitem{drs05}
Irit Dinur, Oded Regev, and Clifford~D. Smyth.
\newblock The hardness of 3-uniform hypergraph coloring.
\newblock {\em Combinatorica}, 25(5):519--535, 2005.

\bibitem{improved2to1coloring}
Irit Dinur and Igor Shinkar.
\newblock On the conditional hardness of coloring a 4-colorable graph with
  super-constant number of colors.
\newblock In {\em APPROX-RANDOM}, pages 138--151, 2010.

\bibitem{gareyjohnson}
M.~R. Garey and David~S. Johnson.
\newblock {\em Computers and Intractability: A Guide to the Theory of
  NP-Completeness}.
\newblock W. H. Freeman, 1979.

\bibitem{shortcodecol}
Venkatesan Guruswami, Prahladh Harsha, Johan H{\aa}stad, Srikanth Srinivasan,
  and Girish Varma.
\newblock Super-polylogarithmic hypergraph coloring hardness via low-degree
  long codes.
\newblock In {\em Symposium on Theory of Computing, {STOC} 2014, New York, NY,
  USA, May 31 - June 03, 2014}, pages 614--623, 2014.

\bibitem{guruswami4uniformhyper}
Venkatesan Guruswami, Johan H{\aa}stad, and Madhu Sudan.
\newblock Hardness of approximate hypergraph coloring.
\newblock {\em {SIAM} J. Comput.}, 31(6):1663--1686, 2002.

\bibitem{hard5}
Venkatesan Guruswami and Sanjeev Khanna.
\newblock On the hardness of 4-coloring a 3-colorable graph.
\newblock {\em SIAM J. Discrete Math.}, 18(1):30--40, 2004.

\bibitem{hkperfect}
Johan H{\aa}stad and Subhash Khot.
\newblock Query efficient {PCPs} with perfect completeness.
\newblock {\em Theory of Computing}, 1(7):119--148, 2005.

\bibitem{hol4uni}
Jonas Holmerin.
\newblock Vertex cover on 4-regular hyper-graphs is hard to approximate within
  2-epsilon.
\newblock In {\em Proceedings on 34th Annual {ACM} Symposium on Theory of
  Computing, May 19-21, 2002, Montr{\'{e}}al, Qu{\'{e}}bec, Canada}, pages
  544--552, 2002.

\bibitem{cubecolor}
Sangxia Huang.
\newblock Improved hardness of approximating chromatic number.
\newblock In {\em Approximation, Randomization, and Combinatorial Optimization.
  Algorithms and Techniques}, pages 233--243. Springer, 2013.

\bibitem{kargermotwanisudan}
David~R. Karger, Rajeev Motwani, and Madhu Sudan.
\newblock Approximate graph coloring by semidefinite programming.
\newblock {\em J. ACM}, 45(2):246--265, 1998.

\bibitem{focs12coloring}
{Ken-ichi} Kawarabayashi and Mikkel Thorup.
\newblock Combinatorial coloring of 3-colorable graphs.
\newblock In {\em FOCS}, pages 68--75, 2012.

\bibitem{thorup3col}
Ken{-}ichi Kawarabayashi and Mikkel Thorup.
\newblock Coloring 3-colorable graphs with {$o(n^{1/5})$} colors.
\newblock In {\em 31st International Symposium on Theoretical Aspects of
  Computer Science {(STACS)} 2014, March 5--8, 2014, Lyon, France}, pages
  458--469, 2014.

\bibitem{hard3}
Sanjeev Khanna, Nathan Linial, and Shmuel Safra.
\newblock On the hardness of approximating the chromatic number.
\newblock {\em Combinatorica}, 20(3):393--415, 2000.

\bibitem{khotcoloring}
Subhash Khot.
\newblock Improved inapproximability results for maxclique, chromatic number
  and approximate graph coloring.
\newblock In {\em FOCS}, pages 600--609, 2001.

\bibitem{khotsmooth}
Subhash Khot.
\newblock Hardness results for coloring 3-colorable 3-uniform hypergraphs.
\newblock In {\em Proc. 43rd FOCS}, pages 23--32. IEEE Comp. Soc. Press, 2002.

\bibitem{kssp}
Subhash Khot and Rishi Saket.
\newblock Hardness of coloring 2-colorable 12-uniform hypergraphs with
  {$\exp(log^{\Omega(1)} n)$} colors.
\newblock In {\em 55th {IEEE} Annual Symposium on Foundations of Computer
  Science, {FOCS} 2014, Philadelphia, PA, USA, October 18-21, 2014}, pages
  206--215, 2014.

\bibitem{ksspeccc}
Subhash Khot and Rishi Saket.
\newblock Hardness of coloring 2-colorable 12-uniform hypergraphs with
  {$\exp(log^{\Omega(1)} n)$} colors.
\newblock {\em Electronic Colloquium on Computational Complexity {(ECCC)}},
  21:51, 2014.

\bibitem{kshyperalmost}
Subhash Khot and Rishi Saket.
\newblock Hardness of finding independent sets in 2-colorable and almost
  2-colorable hypergraphs.
\newblock In {\em Proceedings of the Twenty-Fifth Annual {ACM-SIAM} Symposium
  on Discrete Algorithms, {SODA} 2014, Portland, Oregon, USA, January 5-7,
  2014}, pages 1607--1625, 2014.

\bibitem{knscoloring}
Michael Krivelevich, Ram Nathaniel, and Benny Sudakov.
\newblock Approximating coloring and maximum independent sets in 3-uniform
  hypergraphs.
\newblock {\em Journal of Algorithms}, 41(1):99 -- 113, 2001.

\bibitem{parreprao}
Anup Rao.
\newblock Parallel repetition in projection games and a concentration bound.
\newblock {\em {SIAM} J. Comput.}, 40(6):1871--1891, 2011.

\bibitem{raz}
Ran Raz.
\newblock A parallel repetition theorem.
\newblock {\em SIAM J. Comput.}, 27(3):763--803, 1998.

\bibitem{rubinfeldsudanlowdegree}
Ronitt Rubinfeld and Madhu Sudan.
\newblock Self-testing polynomial functions efficiently and over rational
  domains.
\newblock In {\em Proc. 3rd Ann. ACM-SIAM Symp. on Discrete Algorithms
  (SODA'92)}, pages 23--32. ACM Press, 1992.

\bibitem{saketpolylog}
Rishi Saket.
\newblock Hardness of finding independent sets in 2-colorable hypergraphs and
  of satisfiable csps.
\newblock In {\em {IEEE} 29th Conference on Computational Complexity, {CCC}
  2014, Vancouver, BC, Canada, June 11-13, 2014}, pages 78--89, 2014.

\bibitem{varmasp}
Girish Varma.
\newblock A note on reducing uniformity in khot-saket hypergraph coloring
  hardness reductions.
\newblock {\em CoRR}, abs/1408.0262, 2014.

\bibitem{avicoloring}
Avi Wigderson.
\newblock A new approximate graph coloring algorithm.
\newblock In {\em STOC}, pages 325--329, 1982.

\end{thebibliography}

\end{document}